\pgfplotsset{compat=1.16}
\newtheorem{theorem}{Theorem}[section] % 1st argument is your name for it
\newtheorem{lemma}[theorem]{Lemma}     % 2nd argument is what is printed
\newtheorem{corollary}[theorem]{Corollary}
\newtheorem{proposition}[theorem]{Proposition}
\newtheorem{conjecture}[theorem]{Conjecture}
\theoremstyle{definition}
\newtheorem{remark}{Remark}%[section]
\newtheorem{definition}{Definition}
\newtheorem{example}{Example}
\newtheorem{problem}{Problem}
\newtheorem{claim}{Claim}
\newtheorem{hypothesis}{Hypothesis}
\renewcommand{\rho}{\varrho}
\newcommand{\Esp}{{\mathbb E}}
\newcommand{\F}{{\mathbb F}}
\newcommand{\Q}{{\mathbb Q}}
\newcommand{\Z}{{\mathbb Z}}
\newcommand{\EE}{{\mathbb{E}}}
\newcommand{\N}{{\mathbb N}}
\newcommand{\R}{{\mathbb R}}
\newcommand{\C}{{\mathbb C}}
\newcommand{\OO}{{\mathcal O}}
\newcommand{\cA}{{\mathcal A}}
\newcommand{\cH}{{\mathcal H}}
\newcommand{\cP}{{\mathcal P}}
\newcommand{\norm}[1]{{\parallel#1\parallel}}
\DeclareMathOperator{\Prob}{\mathrm{Prob}}
\DeclareMathOperator{\Li}{\mathrm{Li}}
\DeclareMathOperator{\val}{val}
\title{ECM and the Elliott--Halberstam conjecture for quadratic fields}
\author[1]{Razvan Barbulescu and Florent Jouve }
\affil[1]{ Univ. Bordeaux, CNRS, Bordeaux INP, IMB, UMR 5251,  F-33400, Talence, France\\  \url{razvan.barbulescu@u-bordeaux.fr} and \url{florent.jouve@u-bordeaux.fr}}
\date{January 2023}
\begin{document}
\maketitle

\begin{abstract} The complexity of the elliptic curve method of factorization (ECM) is proven under a strong conjectural form of existence of friable numbers in short intervals. In the present work we use friability to tackle a different version of ECM which is much more studied and implemented, especially because it enables the use of ECM-friendly curves. In the case of curves with complex multiplication (CM) we replace heuristic arguments by rigorous results conditional on the Elliott--Halberstam (EH) conjecture. The proven results mirror recent work concerning the count of primes $p$ such that $p-1$ is friable. In the case of non CM curves, we explore consequences of a hypothetical statement that can be seen as an elliptic curve analogue of EH.     
\end{abstract}

%\section*{Notations}
%\begin{itemize}
%\item   $\overline{\val_\ell(E)}$
%\item $P^-(n)$ and $P^+(n)$
%\item Dickman $\rho$
%\end{itemize}
\section{Introduction}

Let $E/\Q$ be an elliptic curve that has good reduction precisely at every prime number not dividing an integer $\Delta_E$. The main object of study of this paper is the prime counting function $\psi_E(x,y)$ which is defined as the cardinality of
\begin{equation} \label{eq:psiE}
\Psi_E(x,y)=\{p\leq x\colon p\text{ prime, }p\nmid \Delta_E, |E(\F_p)|\text{ is $y$-friable} \}.
\end{equation}
Here we make the usual slight abuse of notation and write $E(\F_p)$ for the set of $\F_p$-points on the reduction of $E$ modulo $p$ and we also recall that an integer is \emph{$y$-friable} (or \emph{$y$-smooth}) if all its prime factors are less than~$y$. For any integer $n$ we shall denote by $P^-(n)$ (resp $P^+(n)$) the smallest (resp. the largest) prime factor of $n$. By convention $P^-(1)=\infty$. The notation $\psi_E(x,y)$ is reminiscent of $\psi(x,y)$ which denotes the cardinality of 
\begin{align}\label{eq:psi-fct}
\Psi(x,y)=\{n\leq x\colon P^+(n) < y\}.
\end{align}
Our main motivation for studying $\Psi_E(x,y)$ comes from cryptography and more precisely from the method of factorization ECM. First recall the principle and purposes of ECM~\cite{Len87}. Let $P\in E(\Q)$ be a rational point with homogeneous coordinates $P=(x_P:y_P:z_P)\in E(\Q)$, relatively to a fixed projective embedding of $E$. Without loss of generality we can assume $x_P,y_P,z_P\in \Z$. Let $N$ be a given positive integer for which one would like to find the prime factorization; set two parameters $u=u(N)$ and $v=v(N)$ in $(0,1)$ and define $B=N^{1/u}$, $C=B^{1/v}$. Note that if $\gcd(x_P,y_P,z_P,N)=1$ and $\gcd(N,\Delta_E)=1$ then $E$ has good reduction modulo any unknown prime factor $p$ of $N$ and $\bar{P}:=(x_P:y_P:z_P) \bmod p$ belongs to $E(\F_p)$. Running ECM for $E$ and $N$ consists in computing the multiple $Q=(x_Q:y_Q:z_Q):=[M] P \bmod N$ for $M=(\lfloor C\rfloor!)^{\lfloor \log N/\log 2 \rfloor}$, \emph{i.e.} one uses the chord-and-tangent formul\ae{} and reduces the coordinates modulo $N$ (if two points have distinct coordinates modulo $N$ then one uses the formula for adding two distinct points). We summarize this in Algorithm~\ref{alg:ECM} below.

\begin{algorithm}%\label{alg:ECM}
\begin{algorithmic}[1]
\REQUIRE parameters $u,v$, an integer $N$, an elliptic curve $E/\Q$ and $P\in E(\Q)$.
\ENSURE a prime factor $p$ of $N$ such that $p<B:=\lfloor N^{1/u}\rfloor$ or FAIL. 
\STATE $C\gets \lfloor B^{1/v} \rfloor$
\STATE $M\gets C!^{\lfloor \log N/\log 2 \rfloor}$
%\STATE construct an elliptic curve $E$ and $P\in E(\Q)$ ({\color{blue} d\'ejà donn\'es?})
\STATE $Q\colon (x_Q:y_Q:z_Q)\gets [M]P\bmod N$
\STATE $g\gets\gcd(z_Q,N)$ 
\IF{$g\neq 1$} \PRINT $g$
\ENDIF 
%\PRINT $g$
\end{algorithmic}
\caption{One curve subroutine of ECM}
\label{alg:ECM}
\end{algorithm}

\begin{claim}\label{claim}
If $|E(\F_p)|$ is $C$-friable for some (unknown) prime factor $p$ of $N$, then $g_N:=\gcd(z_Q,N)$ is a multiple of $p$.
\end{claim}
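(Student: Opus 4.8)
The plan is to transport the whole computation of Algorithm~\ref{alg:ECM} to the reduction of $E$ modulo the hidden prime $p$, where $C$-friability of the group order forces the relevant multiple of $\bar P$ to be the point at infinity, and then to read off the claim from the fact that the projective group law is compatible with reduction mod $p$. First I would set this up. Since $p\mid N$ and $\gcd(N,\Delta_E)=1$ we have $p\nmid\Delta_E$, so $E$ has good reduction at $p$; let $\bar E/\F_p$ be the reduced curve, so $\bar E(\F_p)=E(\F_p)$ in the notation of~\eqref{eq:psiE}. The projective chord-and-tangent addition and doubling formulae of the fixed Weierstrass model have coefficients in $\Z[1/\Delta_E]$, hence reduce well modulo $p$: the coordinatewise reduction $\Z/N\Z\to\F_p$ intertwines these formulae over $\Z/N\Z$ with the ones over $\F_p$. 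Moreover $\gcd(x_P,y_P,z_P,N)=1$ gives $p\nmid\gcd(x_P,y_P,z_P)$, so $\bar P:=(x_P:y_P:z_P)\bmod p$ is a genuine point of $\bar E(\F_p)$.

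Next, the arithmetic input. By Hasse's theorem $|E(\F_p)|\le(\sqrt p+1)^2$, which is $<N$ when $p$ is a proper divisor of $N$ (and $N$ is not tiny; the borderline case $N$ prime is of no interest for factoring and is handled the same way, the exponent in the definition of $M$ being generous). Writing $|E(\F_p)|=\prod_\ell\ell^{a_\ell}$, $C$-friability gives $\ell\le C$, hence $\ell\mid\lfloor C\rfloor!$, for each $\ell$; and $\ell^{a_\ell}\le|E(\F_p)|<N$ gives $a_\ell<\log N/\log 2$, so $a_\ell\le\lfloor\log N/\log 2\rfloor\le\lfloor\log N/\log 2\rfloor\cdot v_\ell(\lfloor C\rfloor!)$. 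Hence $\ell^{a_\ell}\mid\lfloor C\rfloor!^{\lfloor\log N/\log 2\rfloor}=M$ for every $\ell$, so $|E(\F_p)|\mid M$. Since $\ord(\bar P)\mid|E(\F_p)|$ by Lagrange's theorem, $[M]\bar P=O_{\bar E}$, whose projective coordinates are $(0:1:0)$; in particular its third coordinate is $0$.

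Finally, I would conclude as follows. Algorithm~\ref{alg:ECM} produces $Q=(x_Q:y_Q:z_Q)\in(\Z/N\Z)^3$ by running a fixed addition chain for $M$, built from the projective formulae above, on the integer point $P$ and reducing coordinates modulo $N$. Reducing those same integer coordinates modulo $p$ and running the identical addition chain computes $[M]\bar P$ in $\bar E(\F_p)$, by the compatibility noted above; hence $(x_Q:y_Q:z_Q)\bmod p$ is a projective representative of $[M]\bar P=O_{\bar E}$, which forces $p\mid z_Q$. Since $p\mid N$, we get $p\mid\gcd(z_Q,N)=g_N$, as claimed. I expect the only real subtlety to be this last step, because the computation runs over the non-field $\Z/N\Z$: one must check that the case distinction ``use the general addition formula unless the two summands coincide modulo $N$'' reduces consistently modulo $p$, and that reduction mod $p$ never collapses the computed triple to $(0:0:0)$. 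Both points can be disposed of by observing that whenever the mod-$p$ computation would hit the degenerate branch (two points equal mod $p$, or a denominator vanishing mod $p$) the relevant quantity computed mod $N$ is already divisible by $p$, so $p\mid g_N$ for that reason anyway; alternatively, and more cleanly, one fixes at the outset a \emph{complete} system of projective addition formulae on the Weierstrass model (coefficients in $\Z[1/\Delta_E]$), which is purely polynomial, involves no division and no case distinction, and for which reduction mod $p$ is literally the ring homomorphism $\Z/N\Z\to\F_p$ applied coordinatewise.
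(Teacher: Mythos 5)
Your argument is correct and follows essentially the same route as the paper's own (deliberately informal) justification: reduce the whole computation modulo the hidden prime $p$, use $C$-friability together with the Hasse bound to get $|E(\F_p)|\mid M$ and hence $[M]\bar P=O_{\bar E}$, and treat the degenerate branch (points equal mod $p$ but not mod $N$) by noting that it already forces $p$ to divide the computed coordinates. You in fact supply more detail than the paper, which only sketches the two cases and omits the verification that the group order divides $M$.
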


Let us give the main ideas justifying the claim.
If the points involved in the double-and-add method were all distinct not only modulo $N$ but also modulo $p$, then $Q$ would be the neutral element, so $z_Q\equiv 0\pmod p$. If one used a wrong formula because two points were distinct modulo $N$ but equal modulo $p$, then $x_Q\equiv y_Q\equiv z_Q\equiv 0\pmod p$. In both cases ECM finds a multiple of $p$ and a careful analysis shows that the probability that the result is exactly $p$ is $1-o_{p\to\infty}(1)$ (see~\cite{Len87}). If $g_N$ is a prime factor of $N$ we are done, otherwise we pick a different curve $E$ and start over until a factor is found.  

\medskip
The ECM algorithm consists in repeating Algorithm~\ref{alg:ECM} either once, or a given number of times, or until success, depending on the application. ECM is primarily used to completely factorize $N$, which is done by finding some proper divisor and then iterating. If one takes $u=2$ so that we seek all prime factors less than $B=N^{1/2}$, these prime factors are enough to find a possible cofactor. Next the parameter $v$ is chosen so as to minimize the average running time: $v=\sqrt{2}(\log N)^{1/2}/(\log\log N)^{1/2}$ or equivalently   
$C=\lfloor L_N(1/2,1/\sqrt{2})\rfloor$, where 
\begin{align}%\label{eq:L}
L_N(\alpha,c)= \exp\left(c(\log N)^\alpha(\log \log N)^{1-\alpha}\right),
\end{align}
and one runs Algorithm 1 with as many curves as needed in order to factorize $N$ (note that there is no guarantee that the procedure terminates after testing finitely many curves). The expectation of the number of curves needed in this optimal choice of $v$ is $C^{1+o(1)}$ (see~\cite{Len87}).

If one decides in advance to stop after $B/\psi (B,C)$ curves, then one finds all primes less than $B$ with constant probability (under the heuristics asserting independence relatively to the choice of the elliptic curve). This allows a second application of ECM : given an integer, decide whether it is $B$-friable with no false positive and with a constant proportion of false negatives. 

A third application is as follows : given a large number of integers less than a paremeter $N$ and given a parameter $B=N^{1/u}$, find at least a prescribed proportion $f$ of $B$-friable integers inside the set. This problem is solved by the Number Field Sieve (NFS) algorithm~\cite{NFS1993}, where ECM is used as a building block in the cofactorization step of the relation collection, also called sieving step, (see for example and~\cite[\S 3, page 337]{Kleinjung2014cofactorization}) and in the splitting step  of the discrete logarithm version of NFS (see for example~\cite[\S 4.1, point 1, page  181]{Semaev2006}). In the latter, one needs to find a single $B$-friable integer. To solve the problem, one uses a single elliptic curve or a finite number of them on a large number of integers. For example, the record factorizations of RSA moduli obtained with the CADO-NFS software~\cite{CADO-NFS} use a dozen elliptic curves to test the friability of billions of integers. Also, in \cite[Table 3, page 345]{Kleinjung2014cofactorization} one uses $5$ to $10$ curves whereas the number of integers exceeds half a million per special-q and billions in total. The idea here is to set $v$ so that one has  $\psi(B,B^{1/v})\geq f$
and then, in order to test $B$-friability, one is interested in how many primes up to $B$ are found by Algorithm 1 with entries $v$ and a particular elliptic curve $E$. We are hence interested in the quantity~$\psi_E(x,y)$ for $(x,y)=(B^{1/u},B^{1/uv})$. 

In particular, the heuristics underlying the use of ECM as a friability test states that the larger $\psi_E(x,y)$ gets, the more $y$-friable integers will be found by ECM inside a given set. In other words, $E$ is more ECM-friendly as $\psi_E(x,y)$ increases. We formalize this idea in the following definition.

\begin{definition}[ECM-friendly curves] Let $x>y$ be positive real numbers and  let $E_1/\Q$ and $E_2/\Q$ be two elliptic curves. One says that $E_1$ is more ECM-friendly than $E_2$ with respect to $(x,y)$ if 
$$\psi_{E_1}(x,y) > \psi_{E_2}(x,y)\,.$$  
One says that $E_1$ is more ECM friendly than $E_2$ if there exists $x_0\geq 2$ and a positive valued increasing function $\vartheta$ such that the above inequality holds for all pairs $(x,y)$ such that $x\geq x_0$ and $\vartheta(x)\leq y\leq x$.
\end{definition}
Our main result (Theorem~\ref{th:main}) roughly states that the probability that the number of $\F_p$-points on a given elliptic curve is friable approaches asymptotically the probability for any integer to be friable.
%gives a rigorous criterion 
In the case of CM\footnote{For each number field $K$, there is a finite number of CM elliptic curves defined over $K$ but there are overall infinitely many CM elliptic curves.} elliptic curves 
%for measuring how ECM-friendly a curve is. It 
our result is conditional on the Elliott--Halberstam conjecture (EH), an important analytic number theoretic statement about uniformity aspects in the distribution of primes in residue classes. 

\begin{conjecture}[Elliott--Halberstam, {\it e.g.}~{\cite[Hyp B]{Wang2018}
for $\Q$, extended to $K$ quadratic in~\cite[Prop. 2.2]{Pollack2016}}]\label{conjecture:EH}   %(i)  Then, for any fixed $a\in\Z^*$ and $A\geq 0$, the inequality
%\begin{align*}
%\sum_{
%\scaleto{
%\begin{array}{c}
%q\leq x^{1-\delta(x)}\\
%(q,a)=1
%\end{array}
%}{15pt}
%}
%|\pi(x;q,a)-\frac{\pi(x)}{\varphi(q)} | \ll_{a,A}  \frac{x}{(\log x)^A}.
%\end{align*} 

%\noindent(ii) 
Let $K$ be either $\Q$ or an imaginary quadratic field of class number 1. Let $\norm{\cdot}$ denote the field norm relative to $K/\Q$. We define\footnote{Here ``$p\ \dot{\in}\ \OO_K$'' means that we count each prime ideal only once \emph{i.e.} we identify elements that generate the same prime ideal of $\OO_K$.}  
\[
\Pi_K(x)=\{p\ \dot{\in}\ \OO_K, {\rm prime}\colon \norm{p}\leq x\}\,,\qquad
\Pi_K(x;c,a)=\{p \in \Pi_K(x)\colon p\equiv a\bmod c\},
\]
of cardinality denoted $\pi_K(x)$ and $\pi_K(x;c,a)$, respectively. Let $\delta>0$. Then for
any fixed $a\in \OO_K$ and $\omega>0$ we have 
\begin{align*}
 \sum_{
\scaleto{
\begin{array}{c}
\norm{q}\leq x^{1-\delta}\\
(q,a)=1
\end{array}
}{15pt}
 }
 \left|\pi_K(x;q,a)-\frac{\pi_K(x)}{\varphi(q)} \right| \ll_{\omega}  \frac{x}{(\log x)^\omega},
\end{align*} 
for any $x\geq 2$ and where $q\in\OO_K$ and $\varphi(q)=|(\OO_K/q\OO_K)^*|$. \end{conjecture}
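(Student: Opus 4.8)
This statement is the quadratic-field analogue of the Elliott--Halberstam conjecture (for $K=\Q$ it is \emph{verbatim} the classical conjecture), so it cannot be proved unconditionally in full; the realistic goals are to confirm that it is the right analogue of the $\Q$-statement, in line with \cite{Wang2018,Pollack2016}, and to isolate its unconditional part. For the latter, the plan is first to reduce to prime ideals of residue degree one. In a quadratic field only residue degrees $1$ and $2$ occur, and there are $O(\sqrt{x})$ prime ideals of norm $\leq x$ and residue degree $2$; moreover, since $a$ is fixed, such a prime $(\pi)$ lies in the class of $a$ modulo $q$ only when $q\mid(\pi-a)$, and $\norm{\pi-a}\ll x$ admits at most $x^{o(1)}$ ideal divisors, so the total effect of these primes on the left-hand side is $O(x^{1/2+o(1)})$, absorbed into the error term. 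Since $K$ has class number one we may then work with prime elements $\pi$ (up to units) of norm $\leq x$ lying in a fixed residue class modulo $q$.

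Next I would open the congruence condition by orthogonality over the characters of the ray class group modulo $q$ --- equivalently, the characters of $(\OO_K/q\OO_K)^*$ that are trivial on the image of the unit group $\OO_K^\times$. Up to the higher-degree primes already discarded, this yields
\[
\pi_K(x;q,a)-\frac{\pi_K(x)}{\varphi(q)}
=\frac{1}{\varphi(q)}\sum_{\chi\neq\chi_0}\overline{\chi(a)}\,\psi_K(x,\chi),
\qquad
\psi_K(x,\chi)=\sum_{\substack{\norm{\mathfrak p}\leq x\\\mathfrak p\ \text{prime}}}\chi(\mathfrak p)\log\norm{\mathfrak p},
\]
so that $\psi_K(x,\chi)$ is the Chebyshev-type sum attached to the Hecke $L$-function $L(s,\chi)$ over $K$. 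Summing over $\norm{q}\leq x^{1-\delta}$, the conjecture reduces to a bound, on average over $q$ and over the non-principal $\chi$ modulo $q$, for these Hecke prime sums --- precisely the object controlled by the large-sieve inequality for Hecke characters over number fields and the ensuing number-field Bombieri--Vinogradov theorem (Huxley, Hinz).

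The main obstacle is the length of the modulus range. The number-field Bombieri--Vinogradov theorem gives the required bound once the sum is truncated at $\norm{q}\leq x^{1/2-\varepsilon}$, that is, it proves the statement for $\delta=1/2$; obtaining any $\delta<1/2$ for a fixed residue class $a$ is exactly the force of Elliott--Halberstam and is open even over $\Q$. The known results beyond Bombieri--Vinogradov (Fouvry; Bombieri--Friedlander--Iwaniec; Zhang; Polymath) all require an additional average over $a$, a restriction to well-factorable or smooth moduli, or a weakened error term, none of which yields the clean statement above. For this reason the assertion is adopted as a hypothesis, with \cite{Wang2018,Pollack2016} certifying that it is the standard quadratic-field formulation; the paper's results on CM curves are then established conditionally on it.
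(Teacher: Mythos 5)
This statement is a conjecture, and the paper offers no proof of it: it is adopted as a hypothesis, justified exactly as you describe, by its being the verbatim Elliott--Halberstam conjecture for $K=\Q$ and the natural extension (following Wang and Pollack) of Huxley's number-field Bombieri--Vinogradov theorem, which unconditionally covers only moduli up to $\sqrt{x}$. Your assessment — including the reduction to degree-one primes, the passage to Hecke character sums, and the identification of the $\delta<1/2$ range as the genuinely open content — matches the paper's own framing, so there is nothing to fault.
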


The Elliott--Halberstam conjecture is standard in analytic number theory. It is a far reaching generalization of the celebrated Bombieri--Vinogradov Theorem where $K=\Q$ and the bound on $q$ in the index set of the summation cannot exceed $\sqrt{x}$. EH would have countless important applications in number theory. Suffice it to mention the EH bounds contained in~\cite{Maynard2015} on gaps between consecutive primes (or more generally on the length of intervals containing $k$-tuples of primes), as well as~\cite{Zhang2014} where, as a crucial step in the proof of the main result, Zhang establishes a bound towards EH (\emph{i.e.} going beyond the $\sqrt{x}$ threshold) under some extra restrictions on the prime factorization of the moduli involved.

%classical and was used for example by Terence Tao to make progress on the twin primes conjecture, the same results were then proven without the conjecture. 

A quantitatively refined version of EH allows one to let $\delta$ depend on $x$ as long as $\delta(x)\to 0$ as $x$ tends to infinity. Indeed, %Halberstam formulated the conjecture for constant $\delta$ and 
H. Montgomery suggested that one could take $\delta(x)\rightarrow 0$ in Conjecture~\ref{conjecture:EH}. Friedlander and Granville~\cite{FriedlanderGranville1992} showed that the conjecture fails if $\delta(x)$ is less than a certain function of~$x$. However Liu, Wu and Xi~\cite{LiuWuXi2019} used the conjecture for $\delta(x)$ large enough not to contradict the necessary constraints observed by Friedlander and Granville (Conjecture~\ref{conjecture:parametricEH} states this ``parametrized'' version of EH).

In order to state our main result, we first recall some classical facts about the counting function of friable integers. With notation as in~\eqref{eq:psi-fct}, one has the well known asymptotics due to Dickman:
$$
\lim_{x\rightarrow\infty}\frac{\psi(x,x^{\frac 1u})}{x}=\rho(u)
%\qquad \big(u=\frac{\log x}{\log y}\big)
\,, 
$$
where $\rho$ is the unique continuous function on $\R_{\geq 0}$ that is differentiable on $(1,\infty)$ and satisfies $\rho(u)\equiv 1$ on $[0,1]$ and $u\rho'(u)=-\rho(u-1)$ on $(1,\infty)$. Asymptotics due to de Bruijn~ (\cite[(1.8)]{debruijn1951}, see also~\cite[Cor. 2.3]{Hildebrand1993}) describe the behaviour of $\rho$ as $u$ grows:
\begin{equation}\label{eq:HT}
\log \rho(u)=-u\Big(\log u+\log_2(u+2)-1+O\big(\frac{\log_2(u+2)}{\log(u+2)}\big)\Big)\qquad (u\geq 1)\,.
\end{equation}

We now state our main result. From Conjecture~\ref{conjecture:EH}, we draw an asymptotic equivalent for $\psi_E(x,y)$. From a refined version of Conjecture~\ref{conjecture:EH} (Conjecture~\ref{conjecture:parametricEH}) we handle uniformity issues in these asymptotics, and finally, we state a non CM analogue (conjectural on Hypothesis~\ref{hypothesis}) of our estimates.

\begin{theorem}\label{th:main}Let $x\geq 2$ and let $y$ satisfy $2\leq y\leq x$. Set $u:=\frac{\log x}{\log y}$. 
\begin{itemize}
    \item Let $E/\Q$ be a CM elliptic curve.
\begin{enumerate}
 \item (Theorem~\ref{th:equivalent})  Assume Conjecture~\ref{conjecture:EH}. If $u$ is upper bounded by an absolute constant then one has
 \[
\psi_E(x,y)\sim \rho(u)\frac{x}{\log x}\qquad (x\to\infty)\,,
\]
%where $\Li(x)=\int_2^x\frac{{\rm d}t}{\log t}$.

% \ref{conjecture:parametricEH}. Let $E/\Q$ be a CM elliptic curve by an order in an imaginary quadratic field $K$ with $h_K=1$. Set $\alpha(E)=L'(1,\chi)/L(1,\chi)$ where $\chi$ is the Dirichlet character attached to the Kronecker symbol of $K$. Then $\alpha(E)$ measures how ECM-friendly $E$ is. Precisely, let $0<\delta(x)<1$ be a function as in Conjecture~\ref{conjecture:parametricEH}. Let $x>y$ be real numbers such that 
%$$y<x^{1-\delta(x)}.$$
%If $E_1/\Q$ and $E_2/\Q$ are elliptic curves satisfying the above hypothesis then $\alpha(E_1)<\alpha(E_2)$ implies that $E_1$ is more ECM-friendly than $E_2$ with respect to $(x,y)$, asymptotically as $x\rightarrow\infty$.
\item (Corollary~\ref{cor:main}) Assume Conjecture~\ref{conjecture:parametricEH}. If $\delta(x)$ is a function satisfying for some $\eta>0$ and $\beta>0$:
\[
\frac{\log_2x}{\eta\log x}\leq\delta(x)\ll\frac 1{(\log_2x)^{1+\beta}}
\]
%as in the statement of Conjecture~\ref{conjecture:parametricEH} 
and $u\leq \frac{\log_3x}{\log_4 x}$ then we have
\[
\psi_E(x,y)=\rho(u)\frac{x}{\log x}(1+O(\delta(x) u/\rho(u))).
\]
\end{enumerate}
\item Let $E/\Q$ be a non CM elliptic curve.
\begin{enumerate}[resume]
\item (Theorem~\ref{th:main nonCM}) Assume Hypothesis~\ref{hypothesis}. If $u$ is upper bounded by an absolute constant then one has
\[
\psi_E(x,y)\sim \rho(u)\frac{x}{\log x},\qquad (x\to\infty)\,.
\]
\end{enumerate}
\end{itemize}
\end{theorem}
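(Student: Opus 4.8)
\emph{Strategy.} The plan is to convert friability of $|E(\F_p)|$ into a shifted-prime friability count in a ring of integers, and then to run there the Buchstab-type recursion that, conditionally on Elliott--Halberstam, reproduces the Dickman density $\rho$. Suppose first that $E$ has complex multiplication by an order in the imaginary quadratic field $K$, which then has class number $1$. For a prime $p\nmid\Delta_E$, either $p$ is supersingular --- which for a CM curve means exactly that $p$ does not split in $K$, a congruence condition on $p$ modulo $\Disc K$ up to finitely many primes --- and then $|E(\F_p)|=p+1$; or $p$ is ordinary, splitting as $p=\Norm\pi$ for a degree one prime $(\pi)$ of $\OO_K$, and then $|E(\F_p)|=\Norm_{K/\Q}(\pi-w)$ for a unit $w\in\OO_K^*$ depending on $E$ and on the residue class of $p$ (the class dependence coming from quadratic, quartic or sextic twisting). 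The elementary point is that $\Norm(\pi-w)$ is $y$-friable if and only if the ideal $(\pi-w)$ is \emph{$y$-friable}, i.e.\ every prime ideal dividing it lies above a rational prime $\leq y$. Thus $\psi_E(x,y)$ splits into a supersingular part, counting $p\leq x$ in prescribed classes modulo $\Disc K$ with $p+1$ being $y$-friable, and an ordinary part which --- after accounting for the two primes of $\OO_K$ above each split $p$ --- equals $\tfrac12$ the number of degree one primes $(\pi)$ of $\OO_K$ with $\Norm\pi\leq x$ for which $(\pi-w)$ is a $y$-friable ideal.

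\emph{The conditional estimate.} Both pieces are instances of the following, applied with $L=\Q$ and with $L=K$: assuming Conjecture~\ref{conjecture:EH} for $L$, the number of degree one primes $(\pi)$ of $\OO_L$ with $\Norm\pi\leq x$ such that $(\pi-w)$ is a $y$-friable ideal is $\sim\rho(u)\,\pi_L(x)$, uniformly for $u$ bounded. This is the analogue over $L$ of the known conditional estimate for rational primes $p$ with $p-1$ friable, and I would prove it identically: decompose according to the largest rational prime $q$ lying below a prime ideal factor of $(\pi-w)$; when $q>y$, write $\pi\equiv w\pmod{\mathfrak q}$ for a prime $\mathfrak q\mid q$ and recurse on the cofactor, which is now $q$-friable. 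The tail $q>x^{1-\delta}$ contributes only $O(\delta)\,\pi_L(x)$ by a Brun--Titchmarsh bound; on $y<q\leq x^{1-\delta}$ one replaces the number of primes $\pi$ in the class $w\bmod\mathfrak q$ by its expected value $\pi_L(x)/\varphi(\mathfrak q)$ at the cost of the Elliott--Halberstam error. The combinatorics of the recursion is governed by the integral equation defining $\rho$, the local corrections from the shift by $w$ forming a convergent Euler product with limit $1$, so the main term is exactly $\rho(u)\,\pi_L(x)$; letting $\delta\to0$ after $x\to\infty$ removes the $O(\delta)$. Plugging this into the previous reduction --- summing over the supersingular residue classes and over the split primes of $\OO_K$ --- gives $\psi_E(x,y)\sim\rho(u)\pi(x)\sim\rho(u)x/\log x$, which is Theorem~\ref{th:equivalent}.

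\emph{Uniformity and the non-CM case.} For Corollary~\ref{cor:main} I would repeat the recursion keeping the quantitative error of the parametrized conjecture at each of the $O(u)$ levels, so that the accumulated relative error is $O(\delta(x)u)$ against the main term $\rho(u)x/\log x$, equivalently $O(\delta(x)u/\rho(u))$ after normalizing; requiring this to tend to $0$, together with the Friedlander--Granville lower bound on $\delta(x)$ and the de Bruijn estimate~\eqref{eq:HT} for $\rho$, forces precisely the range $u\leq\log_3x/\log_4x$. For a non-CM curve there is no $\OO_K$ to descend to, but Hypothesis~\ref{hypothesis} plays the role of Elliott--Halberstam for $E$: it controls, on average over moduli $q\leq x^{1-\delta}$, the distribution of the primes $p\leq x$ for which $q$ divides $|E(\F_p)|$ (equivalently, for which Frobenius at $p$ has an eigenvalue $1$ on $E[q]$). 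Substituting it for the Elliott--Halberstam input in the very same Buchstab recursion on the largest prime factor of $|E(\F_p)|$ yields Theorem~\ref{th:main nonCM}.

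\emph{Main obstacle.} The delicate point throughout is error management against the very fast decay of $\rho$: by~\eqref{eq:HT}, $\rho(u)^{-1}$ grows like $u^{u(1+o(1))}$, so only a bounded --- respectively, barely growing --- number of recursion levels can be afforded before the Elliott--Halberstam errors swamp the main term, and these levels must be reconciled with the ceiling $x^{1-\delta}$ on the moduli that the conjecture controls. Getting the two constraints to meet is what pins the three ranges of $u$, and carrying the quantitative error of the parametrized conjecture through every level of the recursion (rather than through a single application, as for Theorem~\ref{th:equivalent}) is the technical heart of Corollary~\ref{cor:main}. A subsidiary nuisance is that the unit $w$, the extra units in $\Q(i)$ and $\Q(\sqrt{-3})$, and the finitely many primes of bad reduction or dividing $\Disc K$ all have to be tracked in the local factors, even though none of them affects whether $|E(\F_p)|$ is $y$-friable.
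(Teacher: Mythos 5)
Your reduction of $\psi_E(x,y)$ via Deuring's theorem to (i) non-split primes with $p+1$ friable and (ii) split primes $\pi\in\OO_K$ with $\norm{\pi-w}$ friable is exactly the paper's first step (Lemma~\ref{lemma:CM}), and your identification of the key inputs --- EH for moduli up to $x^{1-\delta}$, a sieve bound $O(\delta)\pi(x)$ for the tail, the fast decay of $\rho$ as the constraint pinning the ranges of $u$, and Hypothesis~\ref{hypothesis} as the non-CM surrogate for EH --- matches the paper's architecture. Where you diverge is the middle mechanism: the paper does not run a Buchstab recursion on the largest prime factor. It writes the friability condition by M\"obius inversion as $\sum_{P^-(\norm q)>y}\mu(q)\pi_K(x;qc,a')$, splits at $\norm q\le x^{1-\delta}$, evaluates the main term via the identity $\sum_{\norm n\le x,\,P^-(n)>y}\mu(n)/n=\rho(u)+O(\cdots)$ (Lemma~\ref{lemma:LT}), controls the small-moduli error by Conjecture~\ref{conjecture:EH}, and bounds the large-moduli range by the Rosser--Iwaniec linear sieve. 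For the uniform statement (2) the paper does not carry errors through a recursion but invokes a quadratic-field generalization of Liu--Wu--Xi (Theorem~\ref{th:kappa}), an EH-type average statement for the friable-shifted-prime counting function itself.

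Two points in your sketch would not survive as written. First, the tail: Brun--Titchmarsh applied directly to the progressions $\pi\equiv w\bmod{\mathfrak q}$ with $\norm{\mathfrak q}>x^{1-\delta}$ gives a contribution of order $\sum_{q>x^{1-\delta}}x/(q\log(2x/q))\asymp(x/\log x)\log(\delta\log x)$, which exceeds the main term; the $O(\delta)\pi(x)$ bound requires switching to the small cofactor $\mathfrak m=(\pi-w)/\mathfrak q$ with $\norm{\mathfrak m}\le x^{\delta}$ and then sieving out the small prime factors of $(\pi-w)/\mathfrak m$ --- this is precisely the linear-sieve Step $2'$ of the paper, which yields $O(u\delta\,x/\log x)$. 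Second, at every level of your recursion beyond the first, the quantity to be estimated is not $\pi_K(x;\mathfrak q,w)$ but the number of primes in that class whose cofactor is $q$-friable; Elliott--Halberstam says nothing directly about this. Either you iterate the recursion to the bottom --- at which point the bookkeeping collapses into the M\"obius inversion above --- or you need the averaged friable-shifted-prime estimate of Theorem~\ref{th:kappa}. Both repairs are available in the paper, so these are gaps of execution rather than of conception.
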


Note that the same asymptotics hold for $\psi_E(x,y)$ disregarding the endomorphism ring of $E$, provided the relevant assumption is made on $E$ (Conjecture~\ref{conjecture:EH} if $E$ has CM, and Hypothesis~\ref{hypothesis} otherwise). This uniform asymptotic behaviour of elliptic curves over $\Q$ suggests that
ECM-friendliness is determined by the implicit error terms in Theorem~\ref{th:main}. In Section~\ref{ssec:discussion}, we discuss in detail these error terms. In particular, for CM elliptic curves, we show the relevance of introducing as in~\cite[Def. 5.1]{BShinde2021}, the quantity 
\[
\gamma_K=L'(1,\chi)/L(1,\chi),
\] 
where $\chi$ is the Kronecker character of the quadratic field $K$ associated to~$E$. 
%as $\alpha(E)\varepsilon(x,y)$ with $\varepsilon(x,y)$ is independent of~$E$ and $\alpha(E)$
%If we set by convention $\tau(E)=1$ for CM curves the theorem justifies under explicit heuristics it relies on that if $E_1$ and $E_2$ are two elliptic curves such that $\tau(E_1)>\tau(E_2)$ then $E_1$ is more ECM-friendly that $E_2$, uniformly on a domain of pairs $(x,y)$.

%Among the CM elliptic curves, the

%Section~\ref{ssec:discussion} whether the error term is equivalent to  is defined as in~\cite[Def. 5.1]{BShinde2021}. We leave open the problem of reducing the error term and in Proposition~\ref{prop:CM alpha} we note that, in the case of CM elliptic curves, 
%\[
%\alpha(E)=L'(1,\chi)/L(1,\chi),
%\] 
%where $\chi$ is the character of $\Gal(K/\Q)$ where $K$ is the quadratic field associated to~$E$. 

We conclude by stating a strong form of Theorem~\ref{th:main}(2) which is directly related to questions in cryptography, as we will discuss in Section~\ref{sec:motivation}.
 %Our second main result is an application of a stronger version of point 2 of Theorem~\ref{th:main}. 
\begin{theorem}\label{th:Q}
Assume Conjecture~\ref{conjecture:parametricEH}. Let $(x,y,z)$ be three positive integers such that $u:=\frac{\log x}{\log y}$ and $v:=\frac{\log y}{\log z}$ lie in the domain 
\[
\Delta:=\Big\{(u,v)\in\R^2\colon u\leq \frac{\log_2x}{\log_3x}\text{ and }v\leq \frac{\log_3y}{\log_4y}\Big\}\,.
\]
With notation as in~\eqref{eq:psi-fct} we set
\[
\Psi_{E,z}(x,y)=\{n\in \Psi(x,y)\colon \exists p\mid n, |E(\F_p)|\text{ is }z\text{-friable}\}
\]
and we let $\psi_{E,z}(x,y)$ denote its cardinality. Then we have, uniformly on $\Delta$,  
\[
%(i)&\Psi_E(x,x^\frac{1}{u},x^{1/(uv)})/x&=&\left(\rho(v)^u\rho(u)\right)^{1+o(1)},\\
\frac{\psi_{E,z}(x,y)}x=\rho(v)\rho(u)(1+o(1))\qquad (x\to\infty,\,y\to\infty)\,.
\]
\end{theorem}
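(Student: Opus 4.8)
The plan is to assemble $\psi_{E,z}(x,y)$ from two inputs that are already at our disposal: the classical count $\psi(x,y)$ of $y$-friable integers, and the density of primes $p$ with $|E(\F_p)|$ $z$-friable. Since $E$ is CM and we assume Conjecture~\ref{conjecture:parametricEH}, the latter is supplied by Corollary~\ref{cor:main}. Put $S=\{p\ \mathrm{prime}\colon p\nmid\Delta_E,\ |E(\F_p)|\ \text{is }z\text{-friable}\}$, so that $\#\{p\le t\colon p\in S\}=\psi_E(t,z)$. First I would record that Corollary~\ref{cor:main}, applied with $(x,y)$ replaced by $(t,z)$, is legitimate for every $2\le t\le y$: the hypothesis $u\le\log_2 x/\log_3 x$ forces $\log y$, hence $\log z=\log y/v$, to exceed $(\log x)^{1-\varepsilon}$, so the relevant parameter $\log t/\log z\le v$ stays inside the admissible range $\le\log_3 t/\log_4 t$. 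This gives $\psi_E(t,z)=\rho(\log t/\log z)\,t/\log t\,(1+o(1))$ uniformly, i.e.\ $S$ has local density $\rho(\log p/\log z)/\log p$ near a prime $p$.

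Next I would extract the main term by reserving the largest prime factor. An $n\in\Psi(x,y)$ with $P^+(n)\in S$ writes uniquely as $n=pm$ with $p=P^+(n)\in S$, $p<y$ and $m$ a $p$-friable integer $\le x/p$; these $n$ already contribute $\sum_{p\in S,\,p<y}\psi(x/p,p)$ to $\psi_{E,z}(x,y)$. A preliminary task is then to control the remaining $n$ — those whose largest prime factor is not in $S$ but that still possess a good prime factor lower down — by a Buchstab-type recursion on $P^+(n)$, which expresses $\psi_{E,z}$ through the values $\psi_{E,z}(x/p,p)$ over the primes $p\notin S$ and leads to a functional equation for $(u,v)\mapsto\psi_{E,z}(x,y)/x$ on $\Delta$ that one must solve to the stated precision. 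Feeding in the classical uniform estimate $\psi(t,w)=t\rho(\log t/\log w)(1+o(1))$ (Hildebrand) — valid since every parameter in play stays $\ll\log_2 x/\log_4 x$, deep inside the Hildebrand region by the definition of $\Delta$ — and summing by parts against the density of $S$ after the substitution $p=y^s$, one converts the leading sum into
\[
x\int_0^1\rho\!\Big(\tfrac{u}{s}-1\Big)\,\rho(vs)\,\frac{ds}{s}\,(1+o(1)).
\]

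The analytic heart is to show that this Dickman-type convolution (together with whatever the Buchstab recursion contributes) equals $\rho(v)\rho(u)(1+o(1))$, uniformly on $\Delta$. I would do this by a saddle-point analysis driven by de~Bruijn's estimate~\eqref{eq:HT}, which also supplies the auxiliary relation $\rho(u-1)=u\log u\,\rho(u)(1+o(1))$. When $u$ dominates $v$ the integrand is increasing on $(0,1)$, so the integral is governed by the endpoint $s=1$, where it equals $\rho(u-1)\rho(v)$; multiplying by the width $\asymp(u\log u)^{-1}$ of the zone of concentration and using $\rho(u-1)\asymp u\log u\,\rho(u)$ produces exactly $\rho(u)\rho(v)$. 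In the complementary range one runs an interior Laplace evaluation at the saddle $s_\ast$ determined from~\eqref{eq:HT}. The shape of $\Delta$ is precisely what makes the error terms coming from~\eqref{eq:HT}, from Hildebrand's theorem, and from Corollary~\ref{cor:main} (whose relative error is $O(\delta(x)u/\rho(u))$) all $o(1)$ after the $s$-integration.

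I expect the principal obstacle to be exactly this uniform evaluation over $(u,v)\in\Delta$: the integrand spans many orders of magnitude on $(0,1)$, one must glue the endpoint and interior saddle regimes, and one must carry three essentially independent error sources through the integration and show their combined effect is negligible against $\rho(u)\rho(v)$, which may itself be as small as $(\log x)^{-1-o(1)}$. A secondary, delicate point is the Buchstab bookkeeping near the transition $\log p\approx\log z$, where the density of $S$ drops from essentially $1$ to essentially $0$; this is where one has to pin down the exact contribution of $n$ whose largest prime factor fails the friability test and confirm that the leading constant is $\rho(v)$ rather than $1$.
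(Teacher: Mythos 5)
Your route is genuinely different from the paper's, and it contains a gap that I believe is fatal rather than technical. For orientation: the paper never isolates the largest prime factor. It writes $\psi_{E,z}(x,y)/\psi(x,y)$ as the ratio $\bigl(\sum_{p\in Q,\,p\leq y}\psi(x/p,y)\bigr)\bigl(\sum_{p\leq y}\psi(x/p,y)\bigr)^{-1}$ with $Q=\{p\colon |E(\F_p)|\ z\text{-friable}\}$, uses Saias's expansion~\eqref{eq:devPsi} to replace each $\psi(x/p,y)$ by $\tfrac xp\rho(u)(1+O(\log(u+1)/\log y))$, and is thereby reduced to showing that the \emph{logarithmic} density $\sum_{p\in Q,\,p\leq y}p^{-1}\big/\sum_{p\leq y}p^{-1}$ is $\rho(v)(1+o(1))$; that last step is Lemma~\ref{lemma:density} combined with Theorem~\ref{th:main}(2). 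No Buchstab iteration, no Dickman convolution and no saddle point occur.

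The decisive problem with your argument is the claim that $\int_0^1\rho(u/s-1)\rho(vs)\,ds/s=\rho(u)\rho(v)(1+o(1))$ uniformly on $\Delta$. The measure $\rho(u/s-1)\,\tfrac{ds}{s\rho(u)}$ is the limiting law of $s=\log P^+(n)/\log y$ for $n\in\Psi(x,y)$; it concentrates at $s=1$ in a window of width $\asymp(u\log u)^{-1}$ only when $u\to\infty$, and across that window $\rho(vs)$ already fluctuates by a factor $\exp\bigl(O(v\log v/(u\log u))\bigr)$, which is $1+o(1)$ only under the extra condition $v\log v=o(u\log u)$. The domain $\Delta$ does not impose this: it allows $u\asymp 1$ with $v\to\infty$ (the regime most relevant to ECM). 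There the law of $s$ assigns mass $\gg1$ to, say, $[\tfrac12,\tfrac23]$, where $\rho(vs)\geq\rho(2v/3)=\rho(v)\exp\bigl(\tfrac{v\log v}{3}(1+o(1))\bigr)$, so your ``main term'' alone exceeds $x\rho(u)\rho(v)$ by an unbounded factor; the interior saddle you invoke in the complementary range does not restore the product $\rho(u)\rho(v)$, it certifies that the integral is larger. (The same phenomenon — good primes are far denser at height $y^s$, $s<1$, than at height $y$ — is why conditioning on the largest prime factor cannot give the constant $\rho(v)$, and why the ``exists $p\mid n$'' event is dominated by small and medium prime divisors.) The second gap is that the Buchstab recursion for the integers whose largest prime factor is \emph{not} in $S$ is only announced, never solved; since those integers are the majority whenever $\rho(v)=o(1)$, this is not bookkeeping but the bulk of the count. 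By contrast the paper sidesteps both issues by weighting over \emph{all} prime divisors $p\leq y$ of $n$ and reducing to a single logarithmic-density evaluation via Lemma~\ref{lemma:density}.
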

The paper is organized as follows. In Section~\ref{sec:motivation} we come back to our cryptographic motivation and study the running time of the splitting step of NFS, which is ECM-based, as a consequence of Theorem~\ref{th:Q}. In Section~\ref{sec:background} we prove Theorem~\ref{th:main}(1) following work of Wang~\cite{Wang2018}. In Section~\ref{sec:uniform} we state a uniform version of the Elliott--Halberstam conjecture and, assuming it, we prove Theorem~\ref{th:main}(2). The error terms implicit in Theorem~\ref{th:main} are then discussed and, in the CM case, we prove a computation oriented formula for $\gamma_K$. Section~\ref{sec:Q} is devoted to the proof of Theorem~\ref{th:Q}. Finally in the last section, we investigate the implications, in the non CM case, of heuristics developed by Pollack and we prove Theorem~\ref{th:main}(3).

\medskip{\noindent}
{\bf Acknowledgements.} We thank Jie Wu for suggesting the first author to investigate connections between ECM and the Elliott--Halberstam conjecture, as well as R\'egis de la Bret\`eche, Ofir Gorodetsky and Alessandro Languasco for comments and corrections on an earlier version of the manuscript.

\section{Cryptographic motivation}\label{sec:motivation}

In this section we put Theorem~\ref{th:Q} in context and we give an example of algorithmic application. Precisely Theorem~\ref{th:Q} enables us to perform a computational task (Problem~\ref{problem:ours} below), which is related to a classical problem in cryptography: the splitting step for discrete logarithms (Problem~\ref{problem:splitting} below).
\begin{problem}\label{problem:ours}
Consider a prime $q$, a generator $g$ of $(\Z/q\Z)^*$ and an auxiliary element $h\in (\Z/q\Z)^*$. Let $u$ and $v$ be parameters and let $E/\Q$ be an elliptic curve. Find an integer $e\in[0,q-1]$ such that $n:=g^eh\bmod q$ is $q^{1/u}$-friable and such that, for some prime divisor $p$ of $n$, $E$ has good reduction at $p$ and $|E(\F_p)|$ is $q^{1/(uv)}$-friable. 
%there is a prime factor $p$ of $n$ such that 
\end{problem}

\begin{problem}[Splitting step of NFS]\label{problem:splitting}
Consider the same data as in the problem above. For a parameter~$k$, consider $E_1,E_2,\ldots,E_k$, elliptic curves over $\Q$. Find an integer $e\in[0,q-1]$ such that $n:=g^eh\bmod q$ is $q^{1/u}$-friable and, for all prime factors $p$ of $n$, there exists $i \leq k$ such that $|E_i(\F_p)|$ is $q^{1/(uv)}$-friable.
\end{problem}

 To solve Problem~\ref{problem:ours}, one runs ECM on the integers $g^eh\bmod q$ corresponding to values of $e\in[1,q-1]$ which are  chosen uniformly at random until it is $B$-friable for $B=q^{1/u}$ (see Algorithm~\ref{alg:splitting} for a precise description). Note that the algorithm uses a single CM elliptic curve $E/\Q$ that is required to have positive Mordell--Weil rank. To fix ideas, our description of Algorithm~\ref{alg:splitting} uses, among the 13 possible $j$-invariants of CM elliptic curves defined over $\Q$, the case $j=8000$ for which we have selected one twist of positive rank given by the Weierstrass equation: $E\colon y^2=x^3+x^2-3x+1$ (the point $P=(-1:2:1)\in E(\Q)$ has infinite order).

\begin{algorithm}[ht]
\begin{algorithmic}[1]
\REQUIRE a prime $q$ and two integers $g,h\in[1,q-1]$ such that $g$ is a generator of $\F_q^*$, and two parameters $u$ and $v$
\ENSURE an integer $e$ such that $g^eh\bmod q$ has a factor less than $B=\lfloor q^{1/u} \rfloor$ 
\STATE $E\colon y^2=x^3+x^2-3x+1$, $P=(-1:2:1)\in E(\Q)$ 
%\COMMENT{We discussed above the choice of $E$}
\REPEAT 
\STATE $e\gets$ an integer chosen uniformly at random in $[1,q-1]$
\STATE $n\gets g^eh\bmod q$
\STATE run ECM for $n$ and $B$ on the curve $E$, with parameters $u$ and $v$ 
\UNTIL{ECM finds a proper factor of $n$}
%\PRINT $g$
\end{algorithmic}
\caption{NFS splitting step}
\label{alg:splitting}
\end{algorithm}

The next statement asserts that Theorem~\ref{th:Q} can be used to solve Problem~\ref{problem:ours}.

\begin{theorem}\label{th:complexity}
\begin{enumerate}
\item Under Conjecture~\ref{conjecture:EH} (resp. Conjecture~\ref{conjecture:parametricEH}), Algorithm~\ref{alg:splitting} solves Problem~\ref{problem:ours} in time $(q^{1/(uv)}/\rho(u)\rho(v))^{1+o(1)}$ (as $q\to\infty$) for bounded $u$ (resp. for $u\leq \frac{\log_3 x}{\log_4 x}$).
\item Assume further that Theorem~\ref{th:Q} can be extended to the domain $(x,y=x^{1/u},z=y^{1/v})$ below :
\[
\Delta':=\Big\{(u,v)\in\R^2\colon u\leq c_u\frac{(\log x)^{1/3}}{(\log_2 x)^{1/3}}\text{ and }v \leq c_v\frac{(\log x)^{1/3}}{(\log_2 x)^{1/3}}\Big\}\,,,  
\]
for two constants $c_u,c_v\geq 3^{1/3}$. Then, with a constant probability, Algorithm~\ref{alg:splitting} on input $q$ terminates in time $L_q(1/3,3^{1/3})^{1+o(1)}$ and solves Problem~\ref{problem:ours} for $u=c_u(\log q/\log_2 q)^{1/3}$ and $v=c_v(\log q/\log_2 q)^{1/3}$.
\end{enumerate}
\end{theorem}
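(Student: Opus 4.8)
The plan is to analyze Algorithm~\ref{alg:splitting} as a straightforward Las Vegas loop whose success probability per iteration is governed by Theorem~\ref{th:Q}, and then optimize the parameters. First I would make precise the cost model: a single pass through the \textbf{repeat} loop draws $e$ uniformly at random, computes $n = g^e h \bmod q$ (cost polynomial in $\log q$), and then runs one instance of ECM (Algorithm~\ref{alg:ECM}) with the bounds $B = q^{1/u}$ and $C = B^{1/v} = q^{1/(uv)}$. By Claim~\ref{claim}, if $|E(\F_p)|$ is $C$-friable for some prime $p\mid n$ then $\gcd(z_Q,N)$ is a nontrivial multiple of $p$ (and equals $p$ with probability $1-o(1)$), so that iteration succeeds. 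Conversely, if $n$ is $B$-friable and has such a prime factor $p$, ECM will find it. Hence the probability that a given iteration succeeds is, up to the $1-o(1)$ factor from the Claim, exactly $\psi_{E,C}(q,B)/q$ restricted to those $n\le q$ coprime to $\Delta_E$ — the exceptional $n$ sharing a factor with $\Delta_E$ contribute a negligible $O(1/q)$. Here I use that the values $g^e h \bmod q$, as $e$ ranges over $[1,q-1]$, are exactly the nonzero residues mod $q$, each hit once, so sampling $e$ uniformly is the same as sampling $n$ uniformly from $(\Z/q\Z)^*$.

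Next I would invoke Theorem~\ref{th:Q} with $(x,y,z) = (q, q^{1/u}, q^{1/(uv)})$. Provided $(u,v)\in\Delta$ — which for part~(1) is guaranteed by the hypothesis that $u$ is bounded (resp.\ $u\le \log_3 q/\log_4 q$) together with a suitable choice of $v$ — Theorem~\ref{th:Q} gives
\[
\frac{\psi_{E,z}(q,q^{1/u})}{q} = \rho(u)\rho(v)(1+o(1)).
\]
Therefore the expected number of loop iterations is $(\rho(u)\rho(v))^{-1}(1+o(1))$, and since the cost of each iteration is dominated by the ECM run, whose running time is $C^{1+o(1)} = q^{1/(uv)(1+o(1))}$ (as recalled after Algorithm~\ref{alg:ECM}, with the asymptotic number of curves $C^{1+o(1)}$ absorbed likewise), the total expected time is
\[
\frac{C^{1+o(1)}}{\rho(u)\rho(v)} = \left(\frac{q^{1/(uv)}}{\rho(u)\rho(v)}\right)^{1+o(1)},
\]
using $\rho(u)\rho(v) = q^{o(1)}$ in the admissible range to fold the denominator into the exponent. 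This proves part~(1); the two cases differ only in which version of EH feeds Theorem~\ref{th:Q}.

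For part~(2) I would substitute $u = c_u(\log q/\log_2 q)^{1/3}$ and $v = c_v(\log q/\log_2 q)^{1/3}$ into the bound from part~(1), granting the hypothesized extension of Theorem~\ref{th:Q} to the larger domain $\Delta'$. A direct computation gives $\frac{1}{uv}\log q = \frac{1}{c_u c_v}(\log q)^{1/3}(\log_2 q)^{2/3}$, so $q^{1/(uv)} = L_q(1/3, 1/(c_u c_v))$; and from the de Bruijn estimate~\eqref{eq:HT}, $\log(1/\rho(u)) = u\log u\,(1+o(1)) = \tfrac13 c_u (\log q)^{1/3}(\log_2 q)^{2/3}(1+o(1))$, with the symmetric estimate for $\rho(v)$. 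Adding the three exponents shows the total cost is $L_q(1/3, c)^{1+o(1)}$ with $c = \frac{1}{c_u c_v} + \frac{c_u}{3} + \frac{c_v}{3}$. Minimizing $c$ over $c_u = c_v = t$ gives $c(t) = t^{-2} + \tfrac23 t$, whose derivative $-2t^{-3} + \tfrac23$ vanishes at $t = 3^{1/3}$, yielding $c = 3^{-2/3} + \tfrac23\cdot 3^{1/3} = 3^{1/3}$; this also explains the constraint $c_u, c_v \ge 3^{1/3}$ in the statement as the condition ensuring the optimum is attained within the allowed range while keeping $(u,v)\in\Delta'$. The constant-probability claim follows because the per-iteration success probability is $\rho(u)\rho(v)(1+o(1))$ and running a number of iterations proportional to its reciprocal succeeds with probability bounded away from $0$.

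The main obstacle I anticipate is not the parameter optimization, which is routine, but the careful bookkeeping of the $o(1)$ terms: one must check that the $1-o(1)$ from Claim~\ref{claim}, the $q^{o(1)}$ factor from the number of ECM curves, and the $(1+o(1))$ from Theorem~\ref{th:Q} all combine multiplicatively without the $1/\rho(u)\rho(v)$ denominator — which is itself only $q^{o(1)}$ in the stated range — escaping the $q^{1/(uv)\cdot o(1)}$ error, and that the uniformity in Theorem~\ref{th:Q} is genuinely strong enough to let $u$ and $v$ grow with $q$ as in $\Delta$ (resp.\ $\Delta'$). A secondary point needing care is that ECM's failure mode (returning a proper but non-prime factor, or the whole of $n$) occurs with probability $o(1)$ and does not derail the loop — this is exactly the content of the remark following Claim~\ref{claim}, which I would cite rather than reprove.
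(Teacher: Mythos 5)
Your proposal is correct and follows essentially the same route as the paper: bound the cost of one loop iteration by the ECM running time $C^{1+o(1)}=q^{1/(uv)+o(1)}$, bound the expected number of iterations by $q/\psi_{E,z}(q,q^{1/u})\le(\rho(u)\rho(v))^{-1}(1+o(1))$ via Claim~\ref{claim} and Theorem~\ref{th:Q}, and multiply; for part (2) the paper likewise substitutes the de Bruijn estimate~\eqref{eq:HT} to get the exponent $\frac{1}{c_uc_v}+\frac{c_u+c_v}{3}=3^{1/3}$ at $c_u=c_v=3^{1/3}$ (your extra optimization step confirming this is the minimum is a small, correct addition rather than a different method).
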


\begin{proof}

1. Recall that $B=\lfloor q^{1/u}\rfloor$ and $C=\lfloor B^{1/v}\rfloor=\lfloor q^{1/(uv)}\rfloor$. As input $N$ of Algorithm~\ref{alg:ECM}, we take the ouput $n$ of Algorithm~\ref{alg:splitting}. The cost of ECM (Algorithm~\ref{alg:ECM}) is essentially that of step $3$, which is $O(\log M)$ by double-and-add exponentiation ($M=C!^{\lfloor \log n/\log 2 \rfloor}$ as defined in Algorithm~\ref{alg:ECM}). By Stirling's formula, this is 
\begin{align}  \label{eq:cost}
\text{time(Alg.~\ref{alg:splitting}: line 5)}=O(\log M)=O(C\log C\log n)=C^{1+o(1)}=q^{1/(uv)+o(1)}\,\, (q\to\infty).
\end{align}

Since $e$ is uniformly chosen at random, the number of executions of the loop in lines 2-6 of Algorithm~\ref{alg:splitting} is, with positive probability, less than a constant times the inverse of the probability of success. We saw earlier ({\it cf.} Claim~\ref{claim}) that the condition in line~$6$ of Algorithm~\ref{alg:splitting} is satisfied if, for a prime factor $p$ of $n$, the order $|E(\F_p)|$ is $C$-friable. We conclude that the number of executions of the loop is $q/\psi_{E,q^{1/(uv)}}(q,q^{1/(u)})$. 

Since $u$ and $v$ are in the domain $\Delta$ defined in Theorem~\ref{th:Q}, we have $q/\psi_{E,q^{1/(uv)}}(q,q^{1/(u)})\leq (1/(\rho(v)\rho(u))(1+o(1))$.
Combining this with~\eqref{eq:cost}, the cost of Algorithm~\ref{alg:splitting} is $(q^{1/(uv)}/\rho(u)\rho(v))^{1+o(1)}$.

2. We set the value of the constants : $c_v=c_u=3^{1/3}$. We inject in~\eqref{eq:HT} the values of $u$ and $v$:
\[
\log(\rho(v)\rho(u))
=(-1+o(1))\cdot(u\log u+v\log v)
=(-1+o(1))\cdot \left(\frac{c_u+c_v}{3}(\log q)^{1/3}(\log_2 q)^{2/3}\right).
\]
Hence the loop is executed at most $L_q(1/3,\frac{c_u+c_v}{3})^{1+o(1)}$ times. Multiplying this by the cost computed in~\eqref{eq:cost}$, C^{1+o(1)}=L_q(1/3,\frac{1}{c_uc_v})^{1+o(1)}$, we find the running time of Algorithm~\ref{alg:splitting}
\begin{align*} 
\text{time(Algorithm~\ref{alg:splitting})}=L_q(1/3,c), 
\end{align*}
where $c=\frac{1}{c_uc_v}+\frac{c_u+c_v}{3}=3^{1/3}$.

\end{proof}

\begin{remark}
One can easily adapt Algorithm~\ref{alg:splitting} to solve Problem~\ref{problem:splitting} (hence the identical names): in line $5$ apply ECM to all the curves $E_i$ with $i=1,\ldots,k$. If we set $k=1/\rho(v)^{1+o(1)}$, and if the outcome of ECM is independent of the input curve $E_i$, then with constant probability we completely factorize $n$ whenever it is $B$-friable. Then we execute the loop in lines 2-6 $q/\psi (q,q^{1/u})=1/\rho(u)^{1+o(1)}$ times. Hence the total cost is $(C/\rho(u)\rho(v))^{1+o(1)}$, which is the same as for Problem~\ref{problem:ours}. To the best of our knowledge, no rigorous argument proves the required ``independence'' property for the input curves at the present time even though a heuristic complexity to solve Problem~\ref{problem:splitting} is well known~(\cite{Semaev2006}).

%The main obstacle to tackle Problem~\ref{problem:splitting} is to prove "independence" of elliptic curves.

%Note that 
\end{remark}

\section{Background and proof of Theorem~\ref{th:main}(1)}\label{sec:background}

%In the case of an elliptic curve with CM, problem~\textbf{P3} is a simple generalization of problem~\textbf{P1} from $\Q$ to an imaginary quadratic field of class number one, as established by the following result. 

Let $K$ be either $\Q$ or an imaginary quadratic field of class number $1$. Recall that $\norm{\cdot}$ is the norm map relative to $K/\Q$ and that $P^-(n)$ and $P^+(n)$ respectively denote the smallest and largest prime factors of $n$. Let $a,c\in\OO_K$ and let $\kappa$ be a root of unity of $K$. We set (recall the notation of Conjecture~\ref{conjecture:EH})
\begin{align}\label{eq:Psi_K}
%\psi_K(x,y)&=|\{ \pi\in \Pi_K(x)\colon  P^+(\norm{\pi-1})<y\}|\,,\nonumber \\ 
\psi_K(x,y;c,a,\kappa)=|\{ \pi\in \Pi_K(x;c,a)\colon  P^+(\norm{\pi-\kappa})<y\}|.
\end{align}
Deuring's CM theory describes precisely the number of $\F_p$-points on a CM elliptic curve $E/\Q$ having good reduction at $p$. We state an explicit version that can be found in~\cite{RubinSilverberg2009} or~\cite{Cohen2007},
that enables us to relate~\eqref{eq:psiE} with~\eqref{eq:Psi_K}.

%{\color{blue} Deux questions sur le lemme ci-dessous:
%\begin{enumerate}
 %   \item tel qu'\'enonc\'e ci-dessous, on ne comprend pas qui est $a$ (il faut ajouter $a\in A$),
 %   \item le lemme est-il valable si $E$ a CM par un ordre $R$ de $K$ distinct de $\OO_K$? Le Th. 1.1 de Rubin--Silverberg traite uniquement le cas $R=\OO_K$ (plus g\'en\'eralement est-ce que notre travail n\'ecessite la restriction $R=\OO_K$ pour certains \'enonc\'es?). Devrait-on plut\^ot mentionner les Th. 5.3 et Cor. 5.4 de Rubin--Silverberg? Aussi, je pense qu'il vaut mieux enlever ``Deuring's CM theory'' de la parenth\`ese du d\'ebut et simplement mentionner Rubin--Silverberg si l'on utilise uniquement leur r\'esultat.
%\end{enumerate}}

\begin{lemma}[{\cite[Th 1.1, Th. 5.3, Th 5.6, Th 5.7]{RubinSilverberg2009} },~{\cite[\S8.5.2]{Cohen2007}}]\label{lemma:CM} For any elliptic curve $E/\Q$ with CM by an order $\OO$ of an imaginary quadratic field $K$, there exists $c\in \OO_K$, a set $A\subset \{a\in \OO_K\colon \gcd(a,c)=1\}$ of cardinality $\varphi(c)/2$ such that for any prime number $p$ not dividing the discriminant of $\OO$ and at which $E$ has good reduction,
\begin{itemize}
    \item if $p$ is inert in $K$ then one has $|E(\F_p)|=p+1$,
\item
if $p$ splits in $K$, there exists $a\in A$ and a root of unity $\mu_{c,a}$ satisfying $|E(\F_p)|=\norm{\pi-\mu_{c,a}}$,  where $\pi\in\OO$ is uniquely determined by the conditions $\norm{\pi}=p$ and $\pi\equiv a \pmod c$.
\end{itemize}
In particular
\begin{align*}
\psi_E(x,y)&=\#\{p\in\Psi_E(x,y)\colon p\text{ inert in }K\}+\sum_{a\in A}\psi_K(x,y;c,a, \mu_{c,a})\\
&=\#\{p\leq x\colon p\nmid \Delta_E,\, p\text{ inert in }K,\, (p+1)\text{ is $y$-friable}\}
+\sum_{a\in A}\psi_K(x,y;c,a, \mu_{c,a})\,.
\end{align*}
 %for a constant $c$ and a set $A$ such that $|A|=\phi(c)/2$. 
\end{lemma}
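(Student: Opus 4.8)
The plan is to deduce the lemma from the Deuring correspondence for CM elliptic curves together with the explicit description of the Hecke Gr\"ossencharacter attached to $E$; this is precisely what is carried out in the cited references, so what follows is the skeleton of that argument with an indication of where the real work lies.

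First fix a prime $p\nmid\Disc(\OO)$ at which $E$ has good reduction and let $\tilde E/\F_p$ denote its reduction. Deuring's theory tells us that $\tilde E$ is ordinary exactly when $p$ splits in $K$, and supersingular when $p$ is inert or ramified. Since ramified primes divide $\Disc(\OO)$, in the relevant range $p$ is either inert or split. In the inert (hence supersingular) case the trace $a_p$ of the $p$-power Frobenius on $\tilde E$ satisfies $p\mid a_p$ and $|a_p|\le 2\sqrt p$, hence $a_p=0$ for all $p\ge 5$ (the residual cases $p\in\{2,3\}$ are disposed of by direct inspection or absorbed into a finite exceptional set), giving $|E(\F_p)|=p+1$: this is the first bullet. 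When $p$ splits, the reduction map $\End(E)\hookrightarrow\End(\tilde E)$ is an isomorphism onto an order containing $\OO$, so the Frobenius endomorphism $\pi_p$ is identified with an element of $\OO_K$ of norm $p$ generating one of the two primes $\mathfrak p,\bar{\mathfrak p}$ above $p$, and point counting gives $|E(\F_p)|=\deg(\pi_p-1)=(\pi_p-1)(\overline{\pi_p}-1)=\norm{\pi_p-1}$.

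The heart of the matter is to replace the non-canonical choice of generator $\pi_p$ of $\mathfrak p$ by one pinned down by a congruence. Here the Hecke Gr\"ossencharacter $\psi_E$ of $K$ attached to $E$ intervenes: its conductor is generated by some $c\in\OO_K$, for $\mathfrak p=(\pi_p)$ with $p\nmid\Disc(\OO)$ one has $\psi_E(\mathfrak p)=\pi_p$ up to a root of unity of $K$, and $\psi_E(\mathfrak p)$ is characterised among the generators of $\mathfrak p$ by a prescribed residue class modulo $c$. Writing $\pi$ for the generator of $\mathfrak p$ lying in a fixed set $A$ of admissible residues $a$ modulo $c$, and $\mu_{c,a}$ for the root of unity relating $\pi$ to $\psi_E(\mathfrak p)$ --- a quantity depending only on $a\bmod c$ --- and using $\norm{\mu}=1$ for any root of unity $\mu$, we obtain $|E(\F_p)|=\norm{\psi_E(\mathfrak p)-1}=\norm{\pi-\mu_{c,a}}$ after renaming the unit; this is the second bullet. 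The conjugate prime $\bar{\mathfrak p}$ carries the complex-conjugate residue, so choosing one representative per conjugate pair gives $|A|=\varphi(c)/2$. For curves over $\Q$ this is not abstract: there are only $13$ $j$-invariants, and the Gr\"ossencharacter, its conductor, and the effect of quadratic twisting are tabulated explicitly in the references, so one simply invokes that classification. This explicit normalisation --- the determination of $c$, of the roots of unity $\mu_{c,a}$, and of the set $A$ (together with the separate check at the finitely many primes dividing $\Disc(\OO)$) --- is the genuine obstacle; everything surrounding it is formal.

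The displayed formula for $\psi_E(x,y)$ is then pure bookkeeping on top of the two bullets. Among the primes $p\le x$ with $p\nmid\Delta_E$ one separates the inert ones, which contribute exactly $\#\{p\le x\colon p\nmid\Delta_E,\ p\text{ inert in }K,\ (p+1)\ y\text{-friable}\}$, from the split ones. The split primes are then partitioned according to the residue $a\in A$ of the chosen generator $\pi$ of the prime above $p$; for such $p$ the order $|E(\F_p)|$ is $y$-friable if and only if $P^+(\norm{\pi-\mu_{c,a}})<y$, and since $\norm\pi=p\le x$ the count of such $p$ is exactly $\psi_K(x,y;c,a,\mu_{c,a})$ by the definition in~\eqref{eq:Psi_K}. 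Summing over $a\in A$ yields the claimed identity.
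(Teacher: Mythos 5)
Your proposal is correct and matches the paper's treatment: the paper states this lemma purely as a citation of Deuring's theory and the explicit Gr\"ossencharacter computations in Rubin--Silverberg and Cohen, exactly the content you defer to those references, and the displayed counting identity is the same partition of good primes into inert and split (by residue class $a\in A$) that you describe. No gap to report.
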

The lemma is stated in the general case where the elliptic curve $E$ has CM by an unspecified order $\OO$ of an imaginary quadratic field $K$. However the counting functions we study only involve $|E(\F_p)|$, as far as the geometry of $E$ is concerned. Therefore, since there is a canonical $\Q$-rational isogeny $E\to E_0$, where $E_0/\Q$ is an elliptic curve with CM by the full ring of integers $\OO_K$ (see \emph{e.g.}~\cite[Prop. 25]{CCS2013}) we will assume in the sequel that $\OO=\OO_K$ whenever this simplifies the exposition.

%\begin{example}~\\
%\begin{itemize}
%\item $E: y^2=x^3-x$, $K=\Q(i)$, $\pi\equiv 1\pmod {2(i+1)}$;
%\item $E: x^3=y^3+1$, $K=\Q(\sqrt{-3})$, $\pi\equiv 1\pmod 3$.
%\end{itemize}
%\end{example}
To prove Theorem~\ref{th:main}(1), we follow the strategy of Wang~\cite{Wang2018}. In particular we assume Conjecture~\ref{conjecture:EH} and we appeal to the linear sieve of Rosser--Iwaniec that we now recall.

%Since the classical Elliott-Halberstam conjecture is a strengthening of the statement of the Vinogradov-Bombieri (BV) theorem, the number field version of EH is the same strengthening of the number field BV, which due to Huxley. In the case of imaginary quadratic fields of class number one, the statement of Conjecture~\ref{conjecture:EH} can be found in~\cite{Pollack2016}, but we allow $\delta$ to be non-constant.

%An important ingredient of Wang's proof is the linear sieve. 
\medskip
Let $\cA\subset \OO_K$ be a finite set, $\cP \subset \OO_K$ a set of primes, $z\geq 2$ a real number and $d\in \OO_K$ a squarefree integer whose prime factors belong to $\cP$. Let $\cA_d=\cA\bigcap d\OO_K$ and $P(z)= \prod_{\norm{p}<z,p\in \cP} p$. Let $X$ be a real number (that should be seen as an ``approximation'' of $|\cA|$) and let $w$ be a multiplicative function on $\OO_K$ such that for $p\in\cP$ one has $0< w(p) <\norm{p}$. 
%and moreover, for any $d\mid P_\cP(z)$, 
%\begin{equation*}
%|\cA_d| \text{ is an approximation of }\frac{w(d)}{\norm{d}}X.
%\end{equation*}
We set $r(\cA,d)= |\cA_d| -\frac{w(d)}{\norm{d}}X $ (which is expected to be small) and also:
\[
S(\cA;\cP,z)=|\{a\in \cA\colon (a,P(z))=1\}|\,,\qquad 
V(z)= \prod_{p\in \cP,\,\norm p\leq z} (1-\frac{w(p)}{\norm{p}})\,. 
\]
%\begin{align}
%\begin{array}{c}
%S(\cA;\cP,z)=|\{a\in \cA: (a,P_\cP(z))=1\}|,\\
%V(z)= \prod_{p\in \pi_K(z)} (1-\frac{w(p)}{\norm{p}}).  
%\end{array}
%\end{align}

\begin{lemma}[Rosser--Iwaniec~\cite{RosserIwaniec1980}, see also {\cite[Lemme 3.1]{Wang2018}}]\label{lem:Rosser-Iwaniec}
Assume that there exists $\alpha\geq 2$ such that
\begin{equation*}
\prod_{u\leq \norm{p} < v} \left( 1-\frac{w(p)}{p}\right)^{-1} \leq \frac{\log v}{\log u} \left( 1+\frac{\alpha}{\log u} \right)
\end{equation*}
for all $v>u\geq 2$. Then for any $D\geq z\geq 2$  one has
\begin{align*}
S(\cA;\cP,z) \ll X V(z) + \sum_{\norm{d}<D,\, d\mid P(z)} |r(\cA,d)|.
\end{align*}
\end{lemma}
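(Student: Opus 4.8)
The plan is to recognize this as the upper-bound half of the linear sieve (sifting dimension one) of Rosser and Iwaniec, transported from $\Z$ to the ring $\OO_K$. Since $K$ has class number one, $\OO_K$ is a principal ideal domain on which the norm $\norm{\cdot}\colon\OO_K\to\Z_{\geq1}$ is completely multiplicative, and the prime ideal theorem gives $\ll t/\log t$ primes of norm at most $t$; these are the only features of $\Z$ that the Rosser--Iwaniec construction uses, so it carries over with no essential change. Concretely I would produce Rosser's upper-bound weights, linearize, and then separately estimate the main term (this is where the hypothesis, which is precisely the linear-sieve axiom $\Omega_1(\alpha)$, enters) and the remainder.

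First I would construct, following Iwaniec, real numbers $(\lambda_d^{+})_{d\mid P(z)}$ supported on squarefree $d$ with $\norm d<D$ and all prime factors in $\cP$, with $\lambda_1^{+}=1$ and $|\lambda_d^{+}|\leq1$, obtained by truncating $\mu(d)$ according to the usual combinatorial rule applied to the decreasing sequence of norms of the prime factors of $d$; as this rule inspects only the sizes $\norm p$, nothing changes over $\OO_K$, and the weights satisfy the fundamental inequality $\sum_{d\mid n}\lambda_d^{+}\geq\mathbf 1_{\{n=1\}}$ for every $n\mid P(z)$. Applying this with $n=(a,P(z))$, summing over $a\in\cA$, and inserting $|\cA_d|=\frac{w(d)}{\norm d}X+r(\cA,d)$ gives
\[
S(\cA;\cP,z)\;\leq\;\sum_{d\mid P(z)}\lambda_d^{+}\,|\cA_d|\;=\;X\sum_{d\mid P(z)}\frac{\lambda_d^{+}w(d)}{\norm d}\;+\;\sum_{\substack{d\mid P(z)\\ \norm d<D}}\lambda_d^{+}\,r(\cA,d).
\]
The remainder sum is at most $\sum_{\norm d<D,\,d\mid P(z)}|r(\cA,d)|$ because $|\lambda_d^{+}|\leq1$. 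For the main term, the Buchstab iteration attached to Rosser's weights, fed with the hypothesis, yields that $\sum_{d\mid P(z)}\lambda_d^{+}w(d)/\norm d$ equals $V(z)(F(s)+o(1))$ with $s=\log D/\log z\geq1$, where $F$ is the classical linear-sieve upper function; since $F$ is bounded on $[1,\infty)$ (it decreases from $F(1)=2e^{\gamma}$ toward $1$), this is $\ll V(z)$, hence the main term is $\ll XV(z)$, and combining the two bounds proves the lemma.

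I expect the only genuine work to lie in the main-term estimate: checking that the comparison of $\sum_d\lambda_d^{+}w(d)/\norm d$ with $V(z)F(s)$, together with the Buchstab recursion underlying it, goes through verbatim over $\OO_K$. But that argument uses only the multiplicativity of $w$ and of $\norm{\cdot}$, the possibility of ordering divisors by norm, and the estimate $V(z)\asymp1/\log z$ furnished by the prime ideal theorem, all of which hold here; so it is the classical argument and one may simply invoke~\cite{RosserIwaniec1980}, or the account in~\cite[Lemme 3.1]{Wang2018}. For the crude form stated here one does not even need the precise function $F$: any bounded truncated sieve with weights in $[-1,1]$ and level $D$ --- for instance Brun's pure sieve with a suitable number of levels --- would give the same conclusion.
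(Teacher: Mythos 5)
Your proposal is correct and follows exactly the standard route that the paper itself relies on: the lemma is stated here without proof, being quoted from Rosser--Iwaniec and from Wang's Lemme 3.1, and your sketch (Rosser's weights $\lambda_d^{+}$, the fundamental inequality, the split into main term $\ll XV(z)F(s)$ with $F$ bounded and a remainder controlled by $|\lambda_d^{+}|\leq 1$) is precisely the argument being invoked, together with the correct observation that only multiplicativity of $w$ and $\norm{\cdot}$ and the prime ideal theorem are needed to transport it to the PID $\OO_K$. The only cosmetic point worth flagging is that the denominator $p$ in the displayed hypothesis should read $\norm{p}$, consistent with the definition of $V(z)$.
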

Finally we recollect an estimate for the summatory function of $\mu(n)/\|n\|$ over integers less than $x$ that are not divisible by primes $\leq y$. In the application of Wang's strategy, one of the base steps uses M\"obius inversion, which explains why such summatory functions come into play.
%trilogy of results about asymptotic developments of the number of friable elements in various sets of integers. The asymptotic equivalent of $\psi_K(x,y)$ is known (see for example~\cite[Eq. (1.30) ]{hildebrand1993}) but the error term was made explicit by Scourfield. 

%\begin{conjecture}[Elliott-Halberstam for number fields]\label{conjecture:EH}
%Let K be an imaginary quadratic field of class number~$1$. For every $A>0$ and $\epsilon>0$, we have
%$$\sum_{\norm{q}\leq x^{1-\epsilon}} \max_{\gcd(a,q)=1}  \max_{y\leq x} \left| \pi_K(y;a,q)- \frac{\Li(x)}{\varphi(q)} \right| \ll_{A,\epsilon} \frac{x}{(\log x)^A},$$
%where  $\varphi(c)=\#(\OO_K/c)$.
%\end{conjecture}

%We use the following result to estimate the error term in the main theorem. 
%
%\begin{lemma}[Wirsing \cite{Wirsing61}]\label{lemma:Wirsing}
%Let $f$ be a multiplicative function. Suppose that for a certain constant $\kappa>0$, we have
%\begin{align*}
%\sum_{p\leq x}f(p)=(\kappa+o(1))\frac{x}{\log x}, \qquad \text{ as }x\rightarrow \infty.
%\end{align*} 
%Suppose constants $\lambda_1\geq 0$ and $\lambda_2\in[0,2)$ have the property that for all primes $p$ and all integers $k\geq 2$, we have $f(p^k)\leq \lambda_1\lambda_2^k$. Then as $x\rightarrow \infty$,
%\begin{align*}
%\sum_{n\leq x}f(n) =(1+o(1)) \frac{x}{\log x}\cdot \frac{e^{-\gamma\kappa}}{\Gamma(\kappa)}\prod_{p\leq x}\left(1+\frac{f(p)}{p}+\frac{f(p^2)}{p^2}+\cdots \right).
%\end{align*}
%Here $\Gamma$ is the ususal gamma function.
%\end{lemma}

\begin{lemma}[{\cite[Lemma 7.2]{delaBretecheFiorilli2020}}\footnote{The first version of this lemma can be found in~\cite{LachandTenenbaum2015} where one has an additional error term $O_\epsilon( \frac{\log(u+1)}{\log y})$. The version of \emph{loc. cit.} suffices for most of our computations, however our discussion of error terms in \S\ref{ssec:discussion} requires the refinement in~\cite{delaBretecheFiorilli2020} .}, generalized to imaginary quadratic fields\footnote{The generalization is direct, hence it is not reproduced here.}]\label{lemma:LT}
 Let $K$ be $\Q$ or an imaginary quadratic field of class number~$1$. Let $\mu$ be the Möbius function generalized to~$K$. For any $\epsilon>0$, we have 
\[
\sum_{
\substack{
\norm{n}\leq x\\
P^-(n)>y
}
}  \frac{\mu(n)}{n}= \rho(u)+O_\epsilon(\exp\{-(\log y)^{\frac35-\epsilon}\})
\]
uniformly in $x\geq 2$ and $\exp\{(\log x)^{\frac25+\epsilon}\}\leq y\leq x$, where $u=\frac{\log x}{\log y}$.
\end{lemma}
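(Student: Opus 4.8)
\medskip
\noindent\textbf{Proof proposal.} The case $K=\Q$ is exactly~\cite[Lemma 7.2]{delaBretecheFiorilli2020}, so the plan is to rerun that proof with the Riemann zeta function replaced throughout by the Dedekind zeta function $\zeta_K(s)=\sum_{\mathfrak n}\norm{\mathfrak n}^{-s}=\zeta(s)L(s,\chi)$, where $\chi$ is the Kronecker character of $K$, and to check that all field-dependent constants cancel, so that the main term is again exactly $\rho(u)$. (Since $K$ has class number $1$, nonzero ideals are identified with elements up to units, so the sum in the statement is literally a sum over nonzero ideals of $\OO_K$ and ``$\mu(n)/n$'' stands for $\mu(\mathfrak n)/\norm{\mathfrak n}$.) The starting point is Perron's formula applied to the Dirichlet series
\[
\sum_{\substack{\mathfrak n\subset\OO_K\\ P^-(\mathfrak n)>y}}\frac{\mu(\mathfrak n)}{\norm{\mathfrak n}^{1+s}}
=\prod_{\norm{\mathfrak p}>y}\bigl(1-\norm{\mathfrak p}^{-1-s}\bigr)
=\frac{1}{\zeta_K(1+s)}\prod_{\norm{\mathfrak p}\le y}\bigl(1-\norm{\mathfrak p}^{-1-s}\bigr)^{-1},
\]
which represents $T_K(x,y):=\sum_{\norm{\mathfrak n}\le x,\,P^-(\mathfrak n)>y}\mu(\mathfrak n)/\norm{\mathfrak n}$ as a contour integral $\frac1{2\pi i}\int_{(c)}(\cdots)\,x^s\,\frac{ds}{s}$, which one then shifts to the left.

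Three ingredients replace their rational counterparts: (i) the Landau prime ideal theorem with a Vinogradov--Korobov-type zero-free region for $\zeta_K$, available because $\zeta_K=\zeta\, L(\cdot,\chi)$ and both factors enjoy such a region (as $\chi$ is fixed, the possible Siegel zero of $L(\cdot,\chi)$ affects only the, ineffective, implied constant); (ii) Rosen's number field Mertens estimate $\prod_{\norm{\mathfrak p}\le t}(1-\norm{\mathfrak p}^{-1})^{-1}\sim e^{\gamma}\kappa_K\log t$, where $\kappa_K=\mathrm{Res}_{s=1}\zeta_K(s)$; and (iii) the Laurent expansion $1/\zeta_K(1+s)\sim s/\kappa_K$ as $s\to0$. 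Feeding (ii) and (iii) into the integrand, the factor $\kappa_K$ produced by the truncated Euler product is cancelled by the $1/\kappa_K$ coming from $1/\zeta_K(1+s)$; after rescaling the integration variable by $\log y$ (as in the rational case) the remaining integral becomes, up to the normalising factor, the Laplace-inversion integral representing Dickman's function $\rho$ at the point $u=\log x/\log y$, and one recovers the main term $\rho(u)$ with no field-dependent constant. The error term $O_\epsilon(\exp\{-(\log y)^{3/5-\epsilon}\})$, valid in the range $\exp\{(\log x)^{2/5+\epsilon}\}\le y\le x$, then comes, exactly as in~\cite{delaBretecheFiorilli2020}, from the width of the Vinogradov--Korobov zero-free region for $\zeta_K$ together with the saddle-point bounds on the shifted contour; the contribution of the finitely many ramified primes and of the units of $\OO_K$ is absorbed into it.

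The main obstacle is not number-theoretic: as the footnote indicates, every appeal to $\zeta$, to its zero-free region and to the prime number theorem in~\cite{delaBretecheFiorilli2020} has a direct analogue for $\zeta_K$, which is precisely why the generalisation is ``direct''. The real work, were one to write it out, is the bookkeeping in the contour-shift and saddle-point estimates of de la Bretèche--Fiorilli: one must check line by line that the quality of the error term (the exponent $3/5-\epsilon$ and the admissible range of $y$) is preserved, and, as stressed above, that the residue $\kappa_K$, the Euler--Kronecker constant of $K$ and the number-field Mertens constant all drop out of the main term. Since this is routine, we do not reproduce it.
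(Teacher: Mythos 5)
Your proposal is correct and is in substance what the paper does: the paper offers no proof of this lemma, citing de la Bretèche--Fiorilli, Lemma 7.2 and asserting in a footnote that the extension to imaginary quadratic $K$ of class number one is direct. Your outline --- the sifted Euler product rewritten through $\zeta_K(1+s)^{-1}=\bigl(\zeta(1+s)L(1+s,\chi)\bigr)^{-1}$, Rosen's number-field Mertens theorem, the cancellation of the residue $\kappa_K$ against the Laurent coefficient of $1/\zeta_K(1+s)$ so that the main term is exactly $\rho(u)$, and the Vinogradov--Korobov zero-free region for the fixed $\zeta_K$ supplying the exponent $\tfrac35-\epsilon$ and the admissible range of $y$ --- is precisely the verification that footnote leaves to the reader.
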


%\subsection{Proof of the main theorem}\label{sec:density}

We can now recall the statement and give the proof of Theorem~\ref{th:main}(1). 
\begin{theorem}\label{th:equivalent} Let $E/\Q$ be a CM elliptic curve and let $K$ be the associated imaginary quadratic field of class number $1$.
Let $x\geq 2$ and let $y$ be such that $2\leq y\leq x$ and $u:=\frac{\log x}{\log y}$ is upper bounded by an absolute constant. Then we have
\[
\psi_E(x,y)\sim \rho(u)\frac{x}{\log x}\qquad (x\to\infty)\,.
\]
%where $\Li(x)=\int_2^x\frac{{\rm d}t}{\log t}$.
\end{theorem}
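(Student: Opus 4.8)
The plan is to reduce the counting of $\psi_E(x,y)$ to the three pieces made explicit in Lemma~\ref{lemma:CM}: the contribution of inert primes, for which $|E(\F_p)|=p+1$, and the $\varphi(c)/2$ sums $\psi_K(x,y;c,a,\mu_{c,a})$ coming from split primes. The inert contribution is bounded above by $\#\{p\le x\colon p+1\text{ is }y\text{-friable}\}$, which is $O(\rho(u)x/\log x)$ by the classical result on shifted primes (conditionally on EH for $\Q$, as in Wang's work), and since the inert primes have natural density $1/2$ among all primes, one expects this term to contribute exactly $\tfrac12\rho(u)\tfrac{x}{\log x}(1+o(1))$; more precisely, the same sieve argument that handles the split part, specialised to $K=\Q$ and the modulus coming from the condition ``$p$ inert'', gives the matching asymptotic. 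So the heart of the proof is to show
\[
\psi_K(x,y;c,a,\mu_{c,a})\sim \frac{1}{\varphi(c)}\rho(u)\frac{x}{\log x}\qquad(x\to\infty)
\]
for each fixed residue class $a$, uniformly in $u$ bounded; summing over $a\in A$ (there are $\varphi(c)/2$ of them) gives $\tfrac12\rho(u)x/\log x$, and adding the inert contribution yields the claimed $\rho(u)x/\log x$.

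To prove the asymptotic for a single $\psi_K(x,y;c,a,\mu_{c,a})$ I would follow Wang's strategy step by step. First, write the friability condition $P^+(\norm{\pi-\kappa})<y$ using Möbius/Legendre over the primes dividing $\norm{\pi-\kappa}$ that are $\ge y$, i.e.\ set up a sieve in which the sifting set $\cA$ is $\{\pi-\kappa\colon \pi\in\Pi_K(x;c,a)\}$ (or rather the integers $\norm{\pi-\kappa}$), the primes $\cP$ are those below $y$, and $z=y$. The quantity $S(\cA;\cP,z)$ is then exactly $\psi_K(x,y;c,a,\kappa)$ up to the negligible contribution of $\pi$ with $\norm{\pi-\kappa}$ having a small prime factor of multiplicity issues. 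The ``approximation'' $X$ to $|\cA_d|$ is $\pi_K(x)/(\text{something})$ and the congruence sum $r(\cA,d)$ is controlled: $|\cA_d|$ counts primes $\pi\equiv a\pmod c$ with $\norm{\pi-\kappa}\equiv 0\pmod{d}$, which after translating becomes a count of primes of $\OO_K$ in a fixed residue class modulo $\lcm(c,d)$, so EH (Conjecture~\ref{conjecture:EH}, in the quadratic field $K$) bounds $\sum_{\norm d<D}|r(\cA,d)|$ by $O(x/(\log x)^\omega)$ for $D$ up to $x^{1-\delta}$. Plugging into Lemma~\ref{lem:Rosser-Iwaniec} with $w$ the natural multiplicative function $w(p)=\norm p/(\norm p-1)$ (which satisfies the $\alpha$-hypothesis — this is the ``dimension $1$'' linear sieve) gives the upper bound $S(\cA;\cP,z)\ll X V(z)$, and $X V(z)\asymp (x/\log x)\prod_{\norm p<y}(1-1/\norm p)\asymp (x/\log x)/\log y$; combined with a lower bound of the same order from the Rosser–Iwaniec lower bound sieve one gets the right order of magnitude, and then a more delicate argument (Buchstab iteration, or the asymptotic evaluation via Lemma~\ref{lemma:LT}) pins down the constant and produces the factor $\rho(u)$: indeed the appearance of $\sum_{\norm n\le x, P^-(n)>y}\mu(n)/\norm n=\rho(u)+o(1)$ from Lemma~\ref{lemma:LT} is precisely where the Dickman function enters.

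The main obstacle I anticipate is the transition from the correct order of magnitude (which the linear sieve plus EH delivers more or less mechanically) to the exact asymptotic with the constant $\rho(u)x/\log x$. This requires going beyond the one-sided Rosser–Iwaniec bounds: one must either run a Buchstab recursion $S(\cA;\cP,z)=S(\cA;\cP,w)-\sum_{w\le\norm p<z}S(\cA_p;\cP,\norm p)$ and track how the combinatorial sum converges to the Dickman integral equation $u\rho'(u)=-\rho(u-1)$, or directly expand the Möbius sum over $\norm n\le x^{1-\delta}$ with $P^-(n)>y$, apply EH once to replace each $\pi_K(x;cn,\cdot)$ by $\pi_K(x)/\varphi(cn)$ with total error absorbed by EH, and then recognise the main term as $\frac{\pi_K(x)}{\varphi(c)}\sum_{P^-(n)>y}\mu(n)/\norm n$ up to $O(\text{small})$, invoking Lemma~\ref{lemma:LT} to evaluate the inner sum as $\rho(u)$. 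The bookkeeping of the error terms — in particular making sure the Möbius sum can be truncated at $\norm n\le x^{1-\delta}$ at the cost of an error that is $o(\rho(u)x/\log x)$ using the de Bruijn-type bound~\eqref{eq:HT} on $\rho$, and that the contribution of $\pi-\kappa$ not coprime to $c$ or with repeated prime factors is negligible — is where the technical weight of the argument lies, and where following Wang's paper~\cite{Wang2018} closely (now in the setting of the quadratic field $K$ rather than $\Q$, but with the EH input already stated for $K$ in Conjecture~\ref{conjecture:EH}) will be essential. The uniformity in $u$ (bounded) is automatic here since all error terms are power-saving in $\log x$ while $\rho(u)$ is bounded below by a positive constant depending only on the bound for $u$.
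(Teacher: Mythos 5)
Your overall strategy is the paper's: reduce via Deuring's theorem (Lemma~\ref{lemma:CM}) to the inert contribution plus $\varphi(c)/2$ sums $\psi_K(x,y;c,a,\mu_{c,a})$, handle the inert part by the $K=\Q$ case of Wang's result (the paper simply cites~\cite[Lemma 4.1]{Wang2018} with $a=-1$ restricted to inert primes), and evaluate each $\psi_K$ by M\"obius inversion, Elliott--Halberstam over moduli up to $x^{1-\delta}$, and Lemma~\ref{lemma:LT} to produce $\rho(u)$. Your second described route --- expand $\sum_{P^-(\norm q)>y}\mu(q)\pi_K(x;qc,a')$, replace $\pi_K(x;qc,a')$ by $\pi_K(x)/\varphi(qc)$ with total error controlled by EH, and evaluate $\sum\mu(q)/\norm q$ by Lemma~\ref{lemma:LT} --- is exactly the paper's treatment of the range $\norm q\leq x^{1-\delta}$ (the sums $S_1'$ and $S_1''$), and you correctly identify the truncation at $x^{1-\delta}$ as the remaining technical issue.

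One step as you describe it would fail, though: your first sieve setup. Taking $\cP$ to be the primes of norm below $y$ and $z=y$, the quantity $S(\cA;\cP,y)$ counts the $\pi$ for which $\norm{\pi-\kappa}$ has \emph{no} prime factor below $y$ --- the complement of friability, not $\psi_K(x,y;c,a,\kappa)$ --- and accordingly the bound $XV(y)\asymp (x/\log x)/\log y$ is not the order of magnitude of $\psi_K$ (which is $\asymp x/\log x$ for bounded $u$); no Buchstab iteration starting from that identification will recover $\rho(u)$. The sieve's actual role in the paper is confined to the tail $S_2$ of the M\"obius expansion, i.e.\ the moduli $q$ with $\norm q>x^{1-\delta}$: there one writes $\pi-\kappa=qcm$ with $\norm m\leq x^\delta$, and for each fixed small cofactor $m$ applies Rosser--Iwaniec to the set $\cA(m,c,a')=\{(\pi-a')/(mc)\}$ sifted by primes up to $z=y^{1-2\delta}$, which detects the condition $P^-(\norm q)>y$ on the large cofactor; this yields $S_2\ll \frac{x}{\log x}u\delta$ plus an EH-controlled remainder (handled via Cauchy--Schwarz and the average of $\tau^2$), and one concludes by letting $\delta\to 0$. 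With that correction, and granting the paper's convention that uniformity in bounded $u$ follows since $\rho(u)$ is bounded below, your proposal matches the paper's proof.
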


\begin{proof}[Proof of Theorem~\ref{th:equivalent}] 

%We will prove a slightly stronger result :
%\begin{align*}
%\frac{\psi_E(x,y)}{\Li(x)}\sim %\frac{\psi(xe^{\alpha(E)},y)}{\Li(xe^{\alpha(E)})},
%\end{align*}
%uniformly when $u:=(\log x)/\log y$ is upper bounded by a constant and  where $\alpha(E)$ is a constant.

By Lemma~\ref{lemma:CM} we have 
\begin{align}\label{eq:from E to K}
%\begin{array}{ccl}
\psi_E(x,y)=&|\{p\text{ split in $K$},\,p\leq x\colon P^+(|E(\F_p)|)<y\}| \nonumber\\
&+|\{p\text{ inert in $K$}, \,p\leq x\colon P^+(|E(\F_p)|)<y\}|
\nonumber\\
             =&\sum_{a\in A }\psi_K(x,y;c,a,\mu_{c,a})+|\{p\text{ inert in $K$}, \,p\leq x\colon P^+(p+1)<y\}|,
%\end{array}
\end{align}
where $A$, $c$ and $\mu$ are as in Lemma~\ref{lemma:CM} (see the notation \eqref{eq:Psi_K}). Note that for the purpose of this article one could have added a $O(1)$ term to account for ramified primes in $K$ and for primes of bad reduction of $E$, but one can be more precise and erase the $O(1)$ term because the ramified primes correspond precisely to the primes of bad reduction.
  
The second term of the right hand side is the case $a=-1$ in~\cite[Lemma 4.1]{Wang2018}:  
\begin{align}\label{eq:inert}
|\{p\text{ inert in $K$},\, p\leq x\colon P^+(p+1)<y\}|\sim \rho(u)\frac{\frac{x}{\log x}}{2}. 
\end{align}
We shall prove that $\psi_K(x,y;c,a,\mu)\sim \rho(u)x/(\varphi(c)\log x)$ and, when summing over the $|A|=\varphi(c)/2$ values of $a$ we obtain:
\begin{align}
|\{p\text{ split in $K$},\,p\leq x\colon P^+(|E(\F_p)|)<y\}|\sim \frac{1}{2}\rho(u)\frac{x}{\log x},
\end{align}
which, together with Equation~\eqref{eq:inert} implies the equivalent of $\psi_E(x,y)$ and will complete the proof.

Hence, it remains to prove an equivalent for $\psi_K(x,y;c,a,\kappa)$ for constants $c$, $a\in\OO_K$ and a constant $\kappa\in \OO_K^\times$.

%We let $\delta(x)$ be as in Conjecture~\ref{conjecture:EH} and let $y$ be such that $u=O(\frac{\log_2x}{\log_3x})$. This is such that 
%\begin{align}
%\frac{\Li(x)}{(\log x)^{1/2}}=\psi_E(x,y)\rho(u)\cdot o(\varepsilon(x,y)).
%\end{align}
%Hence, in the following the terms $O(\frac{1}{(\log x)^{1/2}})$ are negligible. 

 We note that for large enough $y$, more precisely $y>c$ (which we assume holds in the rest of the proof since $x\to \infty$ and $u$ remains bounded), one has $\gcd(q,c)=1$ as soon as $P^-(\norm{q})>y$. Therefore, by the Chinese Remainder Theorem, we can fix, for each such $q$ an element $a'\in\OO_K$ such that $a'\equiv a\pmod c$ and $a'\equiv \kappa \pmod q$. Combining this with M\"obius' inversion we write 
\begin{align}\label{eq:S1+S2} 
\psi_K(x,y;c,a,\kappa)&= |\{ \pi\in \Pi_K(x;c,a)\colon  P^+(\norm{\pi-\kappa})<y\}| \\
&=|\{ \pi\in \Pi_K(x;c,a)\colon {\rm gcd}\Big(\pi-\kappa,\prod_{\ell\text{ prime} ,\,\|\ell\|\geq y}\ell\Big)=1  \}|\nonumber
\\
%&= |\{ \pi\in \Pi_K(x;c,a)\colon  \nexists q\in \OO_K, P^-(\norm{q})>y,\, \pi\equiv \kappa\pmod q\}|   \nonumber \\
&=\sum_{\substack{q\in \OO_K,\,\|q\|\leq x+1\\
P^-(\norm{q})>y}}\mu(q)|\Pi_K(x;c,a)\cap \Pi_K(x;q,\kappa)|
= \sum_{
\substack{
q\in \OO_K,\,\|q\|\leq x+1\\
P^-(\norm{q})>y
}} \mu(q)\pi_K(x;qc,a')\nonumber
\end{align}
where we have used the fact that the algebraic norm of a root of unity is $1$.

%\begin{align}
%\begin{array}{lcl}
%\psi_K(x,y;c,a,\mu)&=& |\{ \pi\in \Pi_K(x;c,a)\colon  P^+(\norm{\pi-\mu})<y\}|
%\\
%&=& |\{ \pi\in \Pi_K(x;c,a)\colon  \nexists q\in \OO_K, P^-(\norm{q})>y, \pi\equiv \mu\pmod q\}|   \\
%&=& \sum_{
%\scaleto{
%\begin{array}{c}
%q\in \OO_K\\
%P^-(\norm{q})>y
%\end{array}}{15pt}
%} \mu(q) |\pi \in \Pi_K(x;c,a), \pi \in\Pi_K(x;qc,a')|\qquad \begin{array}{c}a'\equiv a\pmod c\\ a'\equiv \mu \pmod q \end{array}\\
%&=&\sum_{
%\scaleto{
%\begin{array}{c}
%q\in \OO_K\\
%P^-(\norm{q})>y
%\end{array}}{15pt}
%} \mu(q)\pi_K(x;qc,a')=S_1+S_2,\\
%\end{array}
%\end{align}
In order to evaluate $\psi_K(x,y;c,a,\kappa)$ we follow closely Wang's method~(\cite[D\'em. du Lemme 4.1]{Wang2018}). We highlight the necessary adaptations, omitting the details whenever they are straightforward form Wang's approach. Starting from~\eqref{eq:S1+S2} we fix an arbitrarily small $\delta>0$ and split the counting function $\psi_K(x,y;c,a,\kappa)=S_1+S_2$ where
\begin{equation}\label{eq:S1 and S2}
S_1= \sum_{ \substack{ q\in \OO_K,\, \norm{q} \leq x^{1-\delta}\\P^-(\norm{q})>y } } \mu(q)\pi_K(x;qc,a') \,,\qquad 
S_2=\sum_{ \substack{ q\in \OO_K,\, x+1\geq \norm{q} > x^{1-\delta}\\P^-(\norm{q})>y } } \mu(q)\pi_K(x;qc,a') \,. 
\end{equation}

%\begin{array}{ccl}
%S_1&=& \sum_{ \scaleto{\begin{array}{c} q\in \OO_K\\ \norm{q} \leq x^{1-\delta}\\P^-(\norm{q})>y \end{array}}{20pt} } \mu(q)\pi_K(x;qc,a') \\
%S_2&=&  \sum_{  \scaleto{ \begin{array}{c} q\in \OO_K\\ \norm{q} > x^{1-\delta}\\ P^-(\norm{q})>y \end{array}}{20pt}} \mu(q)\pi_K(x;qc,a').
%\end{array}
Next, using the multiplicativity of $\varphi$ and our assumption $y>c$, we further decompose $S_1= S_1'+S_1''$ where
\begin{equation}\label{eq:S1' and S''}
S_1'= \frac{x}{\varphi(c)\log x} \sum_{ \substack{ q\in \OO_K,\, \norm{q} \leq x^{1-\delta}\\P^-(\norm{q})>y } } \frac{\mu(q)}{\varphi(q)} , \,\qquad
S_1''=  \sum_{ \substack{ q\in \OO_K,\, \norm{q} \leq x^{1-\delta}\\P^-(\norm{q})>y } } \mu(q) r(x,qc,a')
\end{equation}
and where $r(x,qc,a')=\pi_K(x;qc,a') - \frac{x}{\varphi(cq)\log x}$.
%Since $a'$ is a constant, in the following we make the abuse and write $a$ instead of $a'$.

\medskip
\textbf{Step $1'$.} We show that $S_1'\sim \frac{x}{\varphi(c)\log x}\rho(u)$ as $x\to\infty$ and $u$ remains bounded (under these restrictions we deduce that $S_1'$ is asymptotically larger than a constant times $\frac{x}{\log x}$). For $q$ satisfying $P^-(q)>y$, observe that\footnote{We use the notation $\omega(n)$ for the number of distinct prime factors of an element $n$ in a given PID.}
$$
\frac  1{\varphi(q)}-\frac 1q=\frac 1q\times O\Big(\big(1+\frac 1y\big)^{\omega(q)}-1\Big)=O\Big(\frac 1q\times\frac {\omega(q)}{y}\Big)\,.
$$
Since $\|q\|\leq x^{1-\delta}$, we use the upper bound $\omega(q)\ll \log x$ which we combine with the fact that $y\gg x^\theta$, for some $\theta>0$ (recall that $u$ remains bounded) to conclude that uniformly for any $q$ in the index set of $S_1'$ one has
$$
\frac  1{\varphi(q)}=\frac 1q(1+o(x))\qquad (x\to\infty)\,.
$$
Using Lemma~\ref{lemma:LT}, Wang's computation~\cite[(4.6)]{Wang2018} immediately produces
\begin{equation}\label{eq:S1'}
S_1'=\frac{x}{\varphi(c)\log x}\rho(u)\left(1+O(\delta)+o(x)\right)\sim \frac{\Li(x)}{\varphi(c)}\rho(u)\qquad (x\to\infty, u\ll 1)\,.
\end{equation}

\medskip

\textbf{Step $1''$.} We use Conjecture~\ref{conjecture:EH} with a fixed $\omega>6$ to show that $S_1''=O(x(\log x)^{-1}/(\log x)^{\omega-1})$. Note that this is negligible compared to $S_1'$.

\medskip
Let $0<\tilde{\delta}<\delta$ be so that $\norm{cq}\leq x^{1-\delta}$ whenever $\norm{q}\leq x^{1-\tilde{\delta}}$. Using the triangle inequality and Conjecture~\ref{conjecture:EH} we have
\begin{align}\label{eq:S1''}
S_1''&=  \sum_{ \substack{ q\in \OO_K,\, \norm{q} \leq x^{1-\delta}\\P^-(\norm{q})>y } } \mu(q) r(x,qc,a')\leq \sum_{\norm{q}\leq x^{1-\tilde{\delta}} } r(x,qc,a') \nonumber \\
&\leq  \sum_{\norm{q'}\leq x^{1-\delta} } r(x,q',a')\ll_\omega  \frac{x}{\log(x)^{\omega}}\,.
\end{align}
%\begin{array}{ccl}
%S_1''&=& \sum_{\scaleto{ \begin{array}{c} q\in \OO_K\\ \norm{q} \leq x^{1-\delta}\\ p^-(\norm{q})>y \end{array}}{20pt} } \mu(q) r(x,qc,a)\\
%&\leq &\sum_{\norm{q}\leq x^{1-\tilde{\delta}} } r(x,qc,a) \\
% &\leq & \sum_{\norm{q'}\leq x^{1-\delta} } r(x,q',a) \\
%   &\ll_\omega & x/\log(x)^{\omega}.\\
%\end{array}

We now handle the contribution of $S_2$ defined in~\eqref{eq:S1 and S2}. We first apply the triangle inequality and we observe that the primes $p\in \OO_K$ that are counted satisfy $p-\kappa=qcm$ with $m\in\OO_K$ of norm bounded by $\|m\|\leq x^\delta$. Therefore we have
$$
|S_2|\leq \sum_{ \substack{ m\in \OO_K\\ \norm{m} \leq x^{\delta}} } |\{p \text{ prime},\,\|p\|\leq x\colon P^-\big(\big\|\tfrac{p-a'}{mc}\big\|\big)>y\}|\,.
$$
To evaluate this upper bound, the main ingredient used by Wang is the linear sieve.
%, which was introduced at the beginning of this Section. 
%For every $m$ we set  $\cA(mc,a):=\{w\in \cA: w\equiv a\pmod {cm}\}$ and note that we have 
We apply Lemma~\ref{lem:Rosser-Iwaniec} to
$$
\cA=\cA(m,c,a')=\Big\{\frac{p-a'}{mc}\colon p \in \Pi_K(x;mc,a') \Big\}\,,
\qquad \cP=\{p\text{ prime}\colon p\nmid mca'\}\,, \qquad z\leq y\,.
$$ 
Indeed for this choice of parameters, one has
\begin{equation}\label{eq:majS2}
|S_2|\leq \sum_{ \substack{ m\in \OO_K\\ \norm{m} \leq x^{\delta}} }
S(\cA(m,c,a');\cP,z)\,.
\end{equation}
and therefore, for all squarefree $d\in\OO_K$ having all its prime factors in $\cP$, we have $\gcd(d,m)=1$, and $\cA_d$ is homothetic to the translate $\Pi_K(x;dmc,a')-a'$. In particular $|\cA_d|=\pi_K(x;dmc,a')$. We set  $X=\frac{x}{\varphi(mc)\log x}$ and fix a multiplicative function $w$ on $\OO_K$ satisfying
$$w(p)=\left\{\begin{array}{ll}0, & \text{if }p\mid mca', \\ \frac{\norm{p}}{\norm{p}-1},& \text{otherwise}. \end{array}\right.$$
For any squarefree $d\in\OO_K$ with all its prime factors in $\cP$ we have therefore
$$
\frac{w(d)}{\|d\|}X=\frac{w(d)}{\|d\|}\frac{x}{\varphi(mc)\log x}=\Big(\prod_{p\mid d}\frac 1{\|p\|(1-\tfrac 1{\|p\|})}\Big)\frac{x}{\varphi(mc)\log x}=\frac{x}{\varphi(mcd)\log x}
$$
since $d$ is coprime to $mc$, by definition of $\cP$. 

Finally note that the following inequalities hold for all primes $p\in \OO_K$ with $\norm{p}>2$,
\begin{equation}\label{eq:bracket}
\left(1-\frac{1}{\norm{p}}\right)\geq \left(1-\frac{1}{\norm{p}-1}\right) \geq \left(1-\frac{1}{\norm{p}}\right) \left(1+\frac{1}{\norm{p}^2}\right)\,, 
\end{equation}
therefore, combined with Mertens' formula (see \emph{e.g.}~\cite[Chap. I.6, Th. 1.12]{Tenenbaum-cours}), this shows that the hypotheses of Lemma~\ref{lem:Rosser-Iwaniec} are satisfied.
For any fixed $D\geq z$ we obtain
\[
S(\cA;\cP,z) \ll X \prod_{\norm{p}\leq z}\left(1-\frac{w(p)}{\norm{p}}\right) + \sum_{\norm{d}<D,\, d\mid P(z)} \left||\cA_d|-\frac{x}{\varphi(dmc)\log x}\right|\,.
\]
From this upper bound, combined with~\eqref{eq:majS2}, we deduce that $|S_2|\ll S_2'+S_2''$ where
\begin{equation}\label{eq:S2' and S2''}
S_2'=\sum_{
\norm{m}\leq x^\delta}
 \frac{x}{\varphi(cm)\log x} \prod_{
\substack{
\norm{p}<z\\
p \nmid mca'}
}  \left( 1-\frac{1}{\norm{p}-1} \right)\,,\qquad
S_2''=\sum_{
\norm{m}\leq x^\delta
}  \sum_{\substack{
\norm{d}<D \\
d\mid P(z)
} }  |r(\cA,d)|\,,
\end{equation}
and where $|r(\cA,d)|=|\pi_K(x;dmc,a')- 
\frac{x}{\varphi(dmc)\log x}|$.

%\begin{array}{ccl}
%S_2'&=& \sum_{
%\norm{m}\leq x^\delta}
%ù \frac{\Li(x)}{\varphi(cm)} \prod_{
%\scaleto{
%\begin{array}{c}
%\norm{p}<z\\
%p \nmid mca'
%\end{array}}{15pt}
%}  \left( 1-\frac{1}{\norm{p}-1} \right)\\
%S_2''&=&\sum_{
%\norm{m}\leq x^\delta
%}  \sum_{
%\scaleto{
%\begin{array}{c}
%\norm{d}<D \\
%d\mid P_{m}(z)
%\end{array}
%}{15pt}}   |r(\cA,d)|  .
%\end{array}
%\end{align} 
We next set $z=D=y^{1-2\delta}$ and recall $\omega>6$. Under these conditions we prove upper bounds for $S_2'$ and $S_2''$.

%In steps $2'$ and $2''$ we prove that $S_2'=O(\Li(x)u\delta)$ and respectively $S_2''=O(\Li(x)/\log x^{\omega/2-3})$, both of which will be proven to be negligible compared to $S_1'$ because $\omega>6$.

%\textbf{Step $1'$.} By Lemma \ref{lemma:LT} we have   {\color{blue} dire un mot du fait que l'on a $\varphi(q)$ au lieu de $q$ au dénominateur}
%\begin{equation}\label{eq:S1'}
%\begin{array}{lcl}
%S_1'&=& \frac{\Li(x)}{\varphi(c)} \sum_{ \scaleto{\begin{array}{c} q\in \OO_K\\ \norm{q} \leq x^{1-\delta}\\P^-(\norm{q})>y \end{array}}{20pt} } \frac{\mu(q)}{\varphi(q)} (1+O(\frac{1}{y})), \\
%&=& \frac{\Li(x)}{\varphi(c)} \rho\left(\frac{\log(x^{1-\delta})}{\log y}\right)(1+O(\frac{1}{y}))\\
%&=&\frac{\Li(x)}{\varphi(c)}\rho(u)\left(1+O(\delta)+O(1/y)\right)\\
%&\sim & \frac{\Li(x)}{\varphi(c)}   \rho(u).
%\end{array}  
%\end{equation}
%Here we used that $\rho$ is differentiable so that, for bounded $u$, $\rho(u(1-\delta))/\rho(u)=1+O(\delta)$.

\medskip
\textbf{Step $2'$.} We prove that $S_2'=O(\Li(x)u\delta)$. 
%The parameters $z$ and $D$ of Lemma~\ref{lem:Rosser-Iwaniec} are set as follows $z=D=y^{1-2\delta}$. 
Since $c$ and $a'$ are constants we relax the condition $p\nmid mca'$ into $p\nmid m$ in the index set of the product appearing in $S_2'$. Then we have 
%({\color{blue} expliquer pourquoi on peut oublier $c$, quitte à avoir une contante dépendant de $c$}RÉSOLU)
\begin{align*}
S_2' &\ll_{c,a'} \sum_{
\norm{m}\leq x^\delta
} \frac{x}{\varphi(m)\log x} 
\prod_{\substack{\norm{p}<z\\
p \nmid m}}\Big( 1-\frac{1}{\norm{p}-1} \Big)\\
&\ll \frac{x}{\log x}
\left(\prod_{\norm{p}<z} \Big( 1-\frac{1}{\norm{p}-1} \Big)\right)
\left(\sum_{
\norm{m}\leq x^\delta
} \frac{1}{\varphi(m)} 
\prod_{
\substack{
\norm{p}<z\\
p \mid m}
}  
\Big( 1-\frac{1}{\norm{p}-1} \Big)^{-1}\right)\\
\end{align*}
The first factor over primes on the right hand side is $\ll (\log z)^{-1}=((1-2\delta)\log y)^{-1}$ by Mertens' formula combined with~\eqref{eq:bracket}. For the right-most factor we write $\frac{\varphi(m)}{\norm m}=\prod_{p\mid m}(1-\frac 1{\norm p})$ and note that the function $f$ defined on $\OO_K$ by 
$$
f(m)=%\frac 1{\|m\|}
\prod_{p \mid m}
\Big( 1-\frac{1}{\norm{p}-1} \Big)^{-v_p(m)}\Big(1-\frac{1}{\norm{p}}\Big)^{-v_p(m)}
%=\prod_{p\mid m}\Big(1-\frac{1}{\norm{p}-2}\Big)^{-1}\,,
$$
(where $v_p(m)$ is the $p$-adic valuation of $m$) is completely multiplicative. Since in the product defining $f$, each factor at $p$ is $\geq 1$ we obtain:
$$
\sum_{
\norm{m}\leq x^\delta
} \frac{1}{\varphi(m)} 
\prod_{
\substack{
\norm{p}<z\\
p \mid m}
}  
\Big( 1-\frac{1}{\norm{p}-1} \Big)^{-1}\leq 
\sum_{
\norm{m}\leq x^\delta
} \frac{f(m)}{\norm m}\,.
$$
Here note that the general term of the product over primes on the left hand side is $\geq 1$ and therefore the upper bound holds both in the case $z\geq x^\delta$ and $z<x^\delta$. Using a partial Euler product and Mertens' formula combined with~\eqref{eq:bracket} we deduce that
$$
\sum_{
\norm{m}\leq x^\delta
} \frac{1}{\varphi(m)} 
\prod_{
\substack{
\norm{p}<z\\
p \mid m}
}  
\Big( 1-\frac{1}{\norm{p}-1} \Big)^{-1}
\leq \sum_{
\norm{m}\leq x^\delta
} \frac{f(m)}{\norm m}
\ll \prod_{\norm p\leq x^\delta}\Big(1-\frac{1}{\norm{p}-2}\Big)^{-1}\ll \delta\log x\,,
$$
This concludes step $2'$:
\begin{equation}\label{eq:S2'}
S_2'=O\Big(\frac{x}{\log x}u\frac{\delta}{1-2\delta}\Big)=O\Big(\frac{x}{\log x}u\delta\Big)\,.
\end{equation}

\medskip
\textbf{Step $2''$.} We prove that $S_2''=O(x(\log x)^{-1}/\log x^{\omega/2-3})$. Note that if $\norm{m}\leq x^\delta$ and $\norm{d}\leq D=y^{1-2\delta}\leq x^{1-2\delta}$ then $n:=md$ is such that $\norm{n}\leq x^{1-\delta}$. We denote by $\tau(n)$ the number of divisors of $n$. Hence we have, applying Cauchy--Schwarz in the last step,
\[
S_2''\leq \sum_{\norm{n}\leq x^{1-\delta}} \sum_{d\mid n} |r(\cA;cd,a')|
\leq \sum_{\norm{n}\leq x^{1-\delta}} \tau(q) |r(\cA;nc,a')|
\leq  (S_{2,\star}''S_{2,\dagger}'')^\frac12
\]
%\begin{align*} 
%\begin{array}{rcl} 
%S_2'' & \leq & \sum_{\norm{n}\leq x^{1-\delta}} \sum_{d\mid n} |r(\cA;cd,a)|  \\
%      & \leq & \sum_{\norm{q}\leq x^{1-\delta}} \tau(q) |r(\cA;cd,a)|\\
%      & \leq & (S_{2,\star}''S_{2,\dagger}'')^\frac12,
%\end{array}
%\end{align*}
where 
\[S_{2,\dagger}''= \sum_{\norm{n}\leq x^{1-\delta}}|r(\cA;nc,a')|
\,,\qquad 
S_{2,\star}''=  \sum_{\norm{n}\leq x^{1-\delta}} \tau(n)^2|r(\cA;nc,a')|\,.
\]
%\begin{align*}
%\begin{array}{rcl}
%S_{2,\dagger}''&=& \sum_{\norm{q}\leq x^{1-\delta}}|r(\cA;cd,a)| \\
%S_{2,\star}''&=&  \sum_{\norm{q}\leq x^{1-\delta}} %\tau(q)^2|r(\cA;cd,a)|.
%\end{array}
%\end{align*}
For $S_{2,\dagger}''$ we recognize the expression of Conjecture~\ref{conjecture:EH} so, recalling that $c$ is a constant (therefore for big enough $x$ one has $\norm c\leq x^{\delta/2}$), we deduce 
\begin{align*}
S_{2,\dagger}''\ll x/(\log x)^\omega.
\end{align*}
As in~\cite[(4.10)]{Wang2018} we first use a trivial upper bound on   $S_{2,\star}''$: 
\[
 |r(\cA;nc,a)| = \Big| \pi_K(x;nc,a)- \frac{\pi_K(x)}{\varphi(nc)}\Big| \ll  \frac{x}{\norm{n}},
\]
to obtain the upper bound:
$$
S_{2,\star}''\ll x\sum_{\norm{n}\leq x^{1-\delta}} \frac{\tau(n)^2}{\norm{n}}\ll x(\log x)^4 
$$
where the last step uses summation by parts and knowledge of the average order of $\tau^2$ (see \emph{e.g.}~\cite[(1.25)]{Wilson1923}). Note also that, invoking positivity for the general term of the sum, the implied constant is absolute (and in particular does not depend on $\delta$). Overall we obtain
\begin{equation}\label{eq:S2''}
S_2''\ll \Big( \frac{x}{(\log x)^\omega}\Big )^\frac12 x^{\frac 12}(\log x)^{2}\ll \frac{x(\log x)^{-1}}{ (\log x)^{\frac{\omega}2-3}}\,.
\end{equation}
%so ({\color{blue} dire un mot sur la majoration du nombre de divieseurs, cf Wang EN FAIT WANG NE BORME PAS $\tau(q)$, ce sont des termes positifs donc on ne doit pas prouver la convergence des séries auxquelles on applique l'inégalité de Cauchy}
%M\begin{align}\label{eq:S2''}
%\begin{array}{rcl}
%S_2''& \leq& ( x/(\log x)^\omega )^\frac12 \left(x \sum_{\norm{q}\leq x^{1-\delta}} \frac{\tau(q)^2}{\norm{q}}\right)^\frac12\\
%   & \ll & \Li(x)/ (\log x)^{\omega/2-3}.
%\end{array}
%\end{align}
%{\color{blue} justifier la dernière majoration, via Tenenbaum}
Putting together~\eqref{eq:S1'}, \eqref{eq:S1''}, \eqref{eq:S2'}, and \eqref{eq:S2''}, we see that the sums $S_1''$ and $S_2''$ are negligible compared to $x/\log x$. Since $\rho(u)$ is lower bounded by a constant, the proof of Theorem~\ref{th:equivalent} is finished by letting $\delta\to 0$.
%\begin{align*}
%\begin{array}{lcl}
%\psi_K(x,y,c,a)&=&S_1'+S_1''+S_2\\
           % &=& \frac{\Li(x)}{\varphi(c)}\rho(u)(1+o(1)) \\
           % &\sim& \Li(x)\rho(u)/\varphi(c).
%\end{array}
%\end{align*}
\end{proof}

\section{Uniform version of the Elliott--Halberstam conjecture and proof of Theorem~\ref{th:main}(2)}\label{sec:uniform}

In this section we start by stating a refined version of Conjecture~\ref{conjecture:EH} and we prove Theorem~\ref{th:main}(2) under this refined conjecture. We next discuss the error term implicit in Theorem~\ref{th:main}(1) and (2).

\subsection{Proof of Theorem~\ref{th:main}(2)}
%This section is devoted to the proof of Theorem~\ref{th:main}(2).

The argument we provide is a consequence of~{\cite[Th. 1.5]{LiuWuXi2019}}. Let us first state the refined version of Conjecture~\ref{conjecture:EH} required in our analysis.

\begin{conjecture}[parametric EH, {\cite[Conj. 1]{LiuWuXi2019}} in the $K=\Q$, same $\delta(x)$ as in the rational case, with the same adjustments as in Conjecture~\ref{conjecture:EH} for the case of quadratic $K$]\label{conjecture:parametricEH} Let $\delta(x)$ be a decreasing function such that
\begin{equation}\label{eq:conjecture 1}
(\log_2x)/(\eta \log x) \leq \delta(x) < \eta\qquad (x\geq x_0(\eta)),
\end{equation}
for any $\eta\in(0,1/2]$
Let $K$ be either $\Q$ or an imaginary quadratic field of class number one. For all $q\in K$ we set $\norm{q}=|\OO_K/q|$ the algebraic norm and $\varphi(q)=|(\OO_K/q)^*|$ the Euler function. Let us consider   
\[
\pi_K(x)=\{p \in \OO_K,\,\text{prime}\colon \norm{p}\leq x\}\,,\,\,
\pi_K(x;c,a)=\{p \in \OO_K,\,\text{prime}\colon \norm{p}\leq x,\,\, p\equiv a\bmod c\}.
\]
Then for any fixed $a\in \OO_K$, $a\neq 0$, and $\omega>0$ we have 
\[
 \sum_{\substack{
q\in \OO_K,\, (q,a)=1 \\
\norm{q}\leq x^{1-\delta(x)}}}
 \left|\pi_K(x;q,a)-\frac{\pi_{K}(x)}{\varphi(q)} \right| \ll_{\omega}  \frac{x}{(\log x)^\omega},
\] 
uniformly for $x\geq x_0(\eta)$. 
\end{conjecture}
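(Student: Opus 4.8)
The statement above is a \emph{conjecture}, not a theorem, and we do not propose to prove it: even the case $K=\Q$ is open, being exactly~\cite[Conj.~1]{LiuWuXi2019}. That case is a quantitatively sharpened Elliott--Halberstam statement in which the level of distribution $x^{1-\delta(x)}$ is allowed to tend to~$x$; the strongest unconditional result in this direction remains the Bombieri--Vinogradov theorem, which only reaches moduli of size $x^{1/2-\epsilon}$, i.e.\ $\delta(x)$ bounded below by a constant exceeding~$1/2$, and pushing past $\sqrt x$ for a single fixed residue class and \emph{unrestricted} moduli is of essentially the same strength as full Elliott--Halberstam. The known ``beyond $\sqrt x$'' results in the style of Zhang~\cite{Zhang2014} require the moduli to be smooth or well-factorable---precisely the restriction forbidden here---and in any case do not reach $\delta(x)\to 0$. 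The lower bound on $\delta(x)$ imposed in~\eqref{eq:conjecture 1} is not cosmetic either: by the Friedlander--Granville phenomenon~\cite{FriedlanderGranville1992} the conjecture \emph{fails} below roughly $(\log_2 x)/\log x$, whereas above that threshold no obstruction is known and the standard probabilistic model for primes in progressions predicts the stated bound. This is exactly the heuristic on which~\cite{LiuWuXi2019} base their conjecture, and it is essentially all the justification one can offer.

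What genuinely requires an argument, and is all the present paper needs, is the passage from $K=\Q$ to $K$ an imaginary quadratic field of class number one; the plan to make the ``direct'' claim precise is as follows. First I would use $h_K=1$ to replace ``integers $\le x$'' by ``principal ideals of $\OO_K$ of norm $\le x$'', working with generators modulo the finitely many units, so that $\pi_K(x)$ and $\pi_K(x;q,a)$ are the evident ideal-theoretic analogues of $\pi(x)$ and $\pi(x;q,a)$. Next I would record the number-field versions of the three analytic inputs used by Liu--Wu--Xi: the Siegel--Walfisz theorem for Hecke (equivalently, ray class) $L$-functions of $K$; the large-sieve inequality for the characters of $(\OO_K/q)^*$ as $q$ runs over ideals of bounded norm; and the resulting Bombieri--Vinogradov theorem over $\OO_K$, deduced from the previous two exactly as over~$\Z$. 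With these in place, the dispersion argument and the Dirichlet-polynomial manipulations of~\cite{LiuWuXi2019} are formal and transcribe line by line; the splitting behaviour of rational primes in $K$ is harmless, since the inert primes are negligible and the degree-one primes carry the main term with the same sieve weights.

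The hardest part of writing the reduction out in full is bookkeeping rather than ideas: one must control the conductors of the Hecke characters that appear after opening $(\OO_K/q)^*$ by orthogonality, keep all error terms uniform in $\omega$ and in the implied threshold $x_0(\eta)$, and verify that the $\sqrt x$ barrier in the $\OO_K$-analogue of Bombieri--Vinogradov is genuinely $\norm{q}\le x^{1/2-\epsilon}$---that is, not weakened by a power of $\log x$ coming from the different of $K/\Q$ or from the unit contribution. Since all of this is routine once the $\Q$-case is granted, the statement is, appropriately, left as a conjecture attributed to~\cite{LiuWuXi2019}, with the generalization to imaginary quadratic $K$ of class number one noted to be direct.
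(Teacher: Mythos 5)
You are right that there is nothing to prove here: the statement is a conjecture, and the paper does not prove it either — it merely states the quadratic-field, parametric version by analogy, motivating it by Huxley's number-field Bombieri--Vinogradov theorem (and Pollack's class-number-one variant) together with Liu--Wu--Xi's parametric strengthening over $\Q$, and noting the Friedlander--Granville constraint that forces the lower bound on $\delta(x)$. Your discussion of why the $\Q$-case is open and why the lower bound on $\delta(x)$ is necessary matches the paper's framing. One caution on your middle paragraph: the generalization from $\Q$ to imaginary quadratic $K$ of class number one is \emph{not} a reduction — a conjecture over $K$ cannot be deduced from the corresponding conjecture over $\Q$, since the moduli and characters involved are genuinely different; what does transcribe "line by line'' are the \emph{conditional} arguments built on top of the conjecture (this is exactly how the paper treats its Theorem~\ref{th:kappa}, obtained from~\cite[Th.~1.5]{LiuWuXi2019} with minor modifications), while the $K$-version of the conjecture itself is simply an independent hypothesis stated by analogy, supported by the unconditional number-field Bombieri--Vinogradov results of Huxley and Pollack that you correctly list among the analytic inputs.
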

The original EH conjecture is stated for $K=\Q$ and constant $\delta$. As already mentioned it is a strengthening of the Bombieri--Vinogradov (BV) Theorem. Huxley~\cite{Huxley1971} proved a number field variant of the BV Theorem (see also~\cite[Cor. p. 203]{Johnson1979} for the particular case of imaginary quadratic fields or Pollack~\cite[Lemma 2.3]{Pollack2016} for imaginary quadratic fields of class number $1$), so it seems natural to state a number field EH conjecture extending Huxley's result with the same range on $q$ as in the original EH conjecture. Finally, Liu et al.~\cite{LiuWuXi2019} extended EH by replacing $\delta$ with a decreasing function. In the present work we use the number field EH. If we only want to focus on non uniform results we can restrict to the original EH. However in order to prove uniform results, our analysis requires a new variant of EH (Conjecture~\ref{conjecture:parametricEH}) which combines the number field EH (extending~\cite[Lemma 2.3]{Pollack2016}) with the parametric EH (see~\cite{LiuWuXi2019}).

The proof of Theorem~\ref{th:main}(2) will follow from a generalized form of~\cite[Th. 1.5]{LiuWuXi2019} that we now state.

\begin{theorem}[{\cite[Th. 1.5]{LiuWuXi2019}}  generalized to imaginary quadratic fields]\label{th:kappa} Let $K$ be an imaginary quadratic field of class number~$1$. Let $a\in\OO_K\setminus\{0\}$ and let $\mu$ denote a root of unity of $K$. Let $\omega>0$ and let $\kappa$ be a non-negative arithmetic function on  $\OO_K$. 
%For $q\in\OO_K$, let $$
%\pi_K(x,y;q,a)=\#\{p\in \OO_K,\,\text{prime}\colon \|p\|\leq x,\,q\mid (p-a),\, P^+(\tfrac{p-a}q)\leq y\}\,.
%$$
With notation as in~\eqref{eq:Psi_K} and assuming Conjecture~\ref{conjecture:parametricEH} for a given function $\delta$ satisfying~\eqref{eq:conjecture 1}, we have 
\begin{align*}
\sum_{\substack{q\in\OO_K,\,\norm{q}\leq Q \\
(q,a)=1}
}
\left|\psi_K(x,y;q,a,\mu)-\frac{\pi_ K(x)}{\varphi(q)}\rho\left(\frac{\log(x/\norm{q})}{\log y}\right)\right|  \ll_{a,\omega} &\frac{x\sqrt{\sum_{\norm{q}\leq x} \kappa(q)^2/\norm{q}}}{(\log x)^\omega}\\
&+ \pi_{K}(x)\delta(x)u\sum_{\norm{q}\leq Q}\frac{\kappa(q)}{\varphi(q)}
\end{align*}
For every $\epsilon>0$ this upper bound is uniform for $x\geq 2$, $\exp((\log x)^{2/5+\epsilon})\leq y\leq x$ and $Q\leq \min(y,\sqrt{x})$.
\end{theorem}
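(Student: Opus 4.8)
The plan is to mirror the proof of~\cite[Th. 1.5]{LiuWuXi2019} in the number field setting, with the extra twist of carrying the root of unity $\mu$ through the Möbius inversion exactly as in the proof of Theorem~\ref{th:equivalent}. First I would fix $q$ with $\norm q\le Q$ and $(q,a)=1$, and for each squarefree $n\in\OO_K$ with $P^-(\norm n)>y$ use the Chinese Remainder Theorem (valid because $Q\le y$ forces $(n,q)=1$ and $(n,a)$ can be arranged away) to pick $a'=a'(n,q)$ with $a'\equiv a\pmod q$ and $a'\equiv\mu\pmod n$; then Möbius inversion gives
\[
\psi_K(x,y;q,a,\mu)=\sum_{\substack{n\in\OO_K,\ \norm n\le x+1\\ P^-(\norm n)>y}}\mu(n)\,\pi_K(x;nq,a').
\]
Exactly as in Theorem~\ref{th:equivalent} I would split the $n$-sum at $\norm n=x^{1-\delta(x)}/\norm q$ (so that $\norm{nq}\le x^{1-\delta(x)}$ stays in the range where Conjecture~\ref{conjecture:parametricEH} applies) into a ``small $n$'' piece $S_1(q)$ and a ``large $n$'' piece $S_2(q)$, and in $S_1(q)$ replace $\pi_K(x;nq,a')$ by its expected main term $\pi_K(x)/\varphi(nq)$ plus an error $r(x,nq,a')$.

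Next I would sum the whole identity against $\kappa(q)$ over $\norm q\le Q$, $(q,a)=1$, and organize the result into three contributions. The main term, after using multiplicativity of $\varphi$ and Lemma~\ref{lemma:LT} (the imaginary quadratic generalization) to evaluate $\sum_{P^-(\norm n)>y}\mu(n)/\varphi(n)$ restricted to $\norm n\le x^{1-\delta}/\norm q$, should reproduce $\tfrac{\pi_K(x)}{\varphi(q)}\rho\!\big(\tfrac{\log(x/\norm q)}{\log y}\big)$ up to an error of size $\delta(x)u$ times that same quantity — this is the origin of the second term $\pi_K(x)\delta(x)u\sum_{\norm q\le Q}\kappa(q)/\varphi(q)$ in the statement, and it comes from the Lipschitz-type behaviour of $\rho$ together with the truncation of the $n$-range (the relevant estimate is de Bruijn's~\eqref{eq:HT}, giving $\rho(u-O(\delta u))/\rho(u)=1+O(\delta u)$, combined with the difference between truncating at $x^{1-\delta}$ versus $x$). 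The error term $S_1''(q)$-contribution, namely $\sum_{\norm q\le Q}\kappa(q)\sum_{P^-(\norm n)>y,\ \norm{nq}\le x^{1-\delta}}|r(x,nq,a')|$, is where one needs to be careful: after writing $m=nq$ and counting the number of ways $\norm m\le x^{1-\delta}$ factors as $nq$ (which costs a divisor function $\tau(m)$), one bounds $\sum_{\norm m\le x^{1-\delta}}\tau(m)\,\kappa_{\max}\,|r(x,m,a')|$ by Cauchy--Schwarz as $\big(\sum\kappa(q)^2 \tau(m)^2/\norm m \cdot x\big)^{1/2}\big(\sum|r(x,m,a')|\big)^{1/2}$, the second factor being bounded by Conjecture~\ref{conjecture:parametricEH} as $x/(\log x)^{2\omega'}$ with $\omega'$ chosen large, and the first factor producing the $x\sqrt{\sum_{\norm q\le x}\kappa(q)^2/\norm q}$ up to powers of $\log x$ absorbed into the choice of $\omega$. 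The ``large $n$'' piece $S_2(q)$ is handled by the Rosser--Iwaniec linear sieve (Lemma~\ref{lem:Rosser-Iwaniec}) exactly as in Steps $2'$ and $2''$ of Theorem~\ref{th:equivalent}, with the homothety $p\mapsto (p-a')/(mc)$ replaced here by $p\mapsto(p-a')/(mq)$ and with $\kappa(q)$ riding along as a harmless weight; the resulting contribution is again of size $x\,\delta(x)u$-type plus a power-saving error, and merges into the two terms already present.

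The main obstacle, I expect, is the bookkeeping around uniformity: one must check that every implied constant depends only on $a$ and $\omega$ (and not on $q$, $y$, $Q$, or the function $\kappa$), which requires being scrupulous about where $\norm q\le Q\le\min(y,\sqrt x)$ is used — in particular it guarantees $(n,q)=1$ for the Möbius-inverted moduli, it keeps $\norm{nq}$ below $x^{1-\delta(x)}$ in the $S_1$ range, and it makes the sieve level of distribution $D=y^{1-2\delta}$ compatible with $nq$ having norm up to $x^{1-\delta}$. A secondary technical point is justifying the $\rho$-Lipschitz estimate uniformly in the stated range $\exp((\log x)^{2/5+\epsilon})\le y\le x$ and $u$ possibly growing like $\log_3x/\log_4x$; here one invokes~\eqref{eq:HT} to control $\rho(u)$ from below and the derivative bound $u\rho'(u)=-\rho(u-1)$ to control $\rho(u+O(\delta u))-\rho(u)$, exactly the package used in~\cite{LiuWuXi2019}. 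Once these uniformity checks are in place, summing the three contributions over $q$ and combining the error bounds yields the claimed inequality, and the restriction of Lemma~\ref{lemma:LT} to $y\ge\exp((\log x)^{2/5+\epsilon})$ is precisely what forces the same lower bound on $y$ in the hypothesis.
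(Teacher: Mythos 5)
Your plan follows exactly the route the paper takes: Theorem~\ref{th:kappa} is obtained by rerunning the proof of~\cite[Th.~1.5]{LiuWuXi2019} over $\OO_K$, using the same M\"obius-inversion / splitting-at-$x^{1-\delta}$ / Lemma~\ref{lemma:LT} / Cauchy--Schwarz-plus-EH / Rosser--Iwaniec architecture already deployed for Theorem~\ref{th:equivalent} and Proposition~\ref{prop:thmain2}, and the paper itself only records that these are ``minor modifications'' of the original argument. One caution: the multiplicative estimate $\rho(u-O(\delta u))/\rho(u)=1+O(\delta u)$ you invoke is false once $u$ grows (by~\eqref{eq:HT} the ratio is $\exp(O(\delta u\log u))$); what the argument actually needs --- and what your appeal to $u\rho'(u)=-\rho(u-1)$ does deliver, since $|\rho'|\leq 1$ --- is the additive bound $|\rho(u(1-\delta))-\rho(u)|\ll\delta u$, which is precisely what produces the term $\pi_K(x)\delta(x)u\sum_{\norm{q}\leq Q}\kappa(q)/\varphi(q)$.
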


Note that the original statement~\cite[Th. 1.5]{LiuWuXi2019} is over $\Q$: in \emph{loc. cit.} the prime counting function $\pi_K$ is replaced by the usual prime counting function $\pi$ and $\psi_K(x,y;q,a,\mu)$ is replaced by
$$
\pi(x,y;q,a)=\#\{p\leq x\colon q\mid (p-a),\, P^+(\tfrac{p-a}q)\leq y\}\,.
$$
Obtaining Theorem~\ref{th:kappa} from the original~\cite[Th. 1.5]{LiuWuXi2019} requires minor modifications only. As we will see below, Corollary~\ref{cor:main} (and in turn Theorem~\ref{th:main}(2)) will follow from Theorem~\ref{th:kappa}. If one wants to dispense from generalizing~\cite[Th. 1.5]{LiuWuXi2019} to an imaginary quadratic field, one can instead consider the version of Corollary~\ref{cor:main} proved in Appendix~\ref{appendix:main}, where the arguments used are similar to those of the proof of Theorem~\ref{th:main}(1).

In~\cite[Cor. 1.8 (p. 5)]{LiuWuXi2019} a result is proven for $u\leq \frac{\log_2 x}{\log_3x}$. We note here that a stronger consequence of Theorem~\ref{th:kappa} holds if one restricts to $u\leq \frac{\log_3x}{\log_4x}$, and implies in turn Theorem~\ref{th:main}(2).

\begin{corollary}\label{cor:main}
Let $K$ be $\Q$ or an imaginary quadratic field of class number~$1$. Let $a, c\in\OO_K$ with $a\neq 0$ and let $\mu$ be a unit of $K$. Finally let $\beta>0$.
%, and let $C>0$.
%$\omega>0$ and $\kappa$ be a non-negative arithmetic function. 
Assuming Conjecture~\ref{conjecture:parametricEH}, we have 
\begin{align}\label{eq:corollary:delta}
\pi_K(x,y;c,a,{\mu})=\frac{x}{\varphi(c)\log x}\rho(u)  \left(1+O\left( \delta u {\rho(u)^{-1}} ) \right)\right). 
\end{align}
uniformly for $1\leq u\leq \frac{\log_3x}{\log_4 x}$ and for $\delta(x)\ll \frac1{(\log_2 x)^{1+\beta}}$.
Consequently Theorem~\ref{th:main}(2) holds.

%\textcolor{red}{In particular when $\delta(x)=o(\frac{1}{\log_2 x})$ and $u=O(\frac{\log_3x}{\log_4 x})$, we have a relevant error term.}
\end{corollary}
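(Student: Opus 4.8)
The plan is to deduce Corollary~\ref{cor:main} from Theorem~\ref{th:kappa} by making the cheapest possible choice of weight, namely $\kappa$ the indicator function of the single modulus~$c$ (throughout one may assume $\gcd(a,c)=1$, the only case relevant to the applications), and then to obtain Theorem~\ref{th:main}(2) from Corollary~\ref{cor:main} exactly as Theorem~\ref{th:equivalent} was obtained from the non-uniform estimate for $\psi_K(x,y;c,a,\kappa)$, \emph{i.e.}\ by splitting $\psi_E$ via Lemma~\ref{lemma:CM}. Concretely, I would apply Theorem~\ref{th:kappa} (in its original form~\cite[Th.~1.5]{LiuWuXi2019} when $K=\Q$, in the generalized form otherwise) with $\omega:=3$, with $\kappa$ the indicator of the class of~$c$, and with $Q:=\norm{c}$; since $c$ is fixed while $\log y=\log x/u\ge(\log x)(\log_4x)/\log_3x\to\infty$ over the range $1\le u\le\log_3x/\log_4 x$, for $x$ large all hypotheses of Theorem~\ref{th:kappa} (in particular $\norm{c}\le Q\le\min(y,\sqrt x)$ and $\exp((\log x)^{2/5+\epsilon})\le y\le x$) hold. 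The left-hand side of Theorem~\ref{th:kappa} is an unweighted non-negative sum, so by positivity it dominates the single term $q=c$; computing $\sum_{\norm{q}\le x}\kappa(q)^2/\norm{q}=\norm{c}^{-1}$ and $\sum_{\norm{q}\le Q}\kappa(q)/\varphi(q)=\varphi(c)^{-1}$, this yields
\[
\psi_K(x,y;c,a,\mu)=\frac{\pi_K(x)}{\varphi(c)}\,\rho\!\left(\frac{\log(x/\norm{c})}{\log y}\right)+O\!\left(\frac{x}{(\log x)^{3}}+\frac{\pi_K(x)\,\delta(x)\,u}{\varphi(c)}\right),
\]
with implied constant depending only on $a$ and $c$.

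The next step is to clean up the main term. First replace $\pi_K(x)$ by $x/\log x$, at the cost of a relative error $O(1/\log x)$ (prime ideal theorem). Then replace the argument of $\rho$ by $u$: writing $\tfrac{\log(x/\norm{c})}{\log y}=u-\varepsilon$ with $0\le\varepsilon=\tfrac{\log\norm{c}}{\log y}\ll\tfrac1{\log y}$, and using the exact relation $t\rho'(t)=-\rho(t-1)$ together with the standard companion of~\eqref{eq:HT}, namely $\rho(t-1)/(t\rho(t))\ll\log(t+2)$ for $t\ge 1$, one gets $\rho(u-\varepsilon)=\rho(u)\bigl(1+O(\log(u+2)/\log y)\bigr)$. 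Dividing the preceding display by $\tfrac{x}{\varphi(c)\log x}\rho(u)$, the total relative error becomes
\[
O\!\left(\frac{1}{(\log x)^2\rho(u)}+\frac{1}{\log x}+\frac{\log(u+2)}{\log y}+\delta(x)\,u\,\rho(u)^{-1}\right),
\]
so it remains to bound the first three terms by $\delta(x)\,u\,\rho(u)^{-1}$. Here I would invoke the lower bound $\delta(x)\ge\log_2x/(\eta\log x)$ built into Conjecture~\ref{conjecture:parametricEH}, giving $\delta(x)\,u\,\rho(u)^{-1}\ge\delta(x)\gg\log_2x/\log x$: the inequality $\tfrac1{(\log x)^2\rho(u)}\ll\tfrac{\delta(x)u}{\rho(u)}$ reduces, after cancelling $\rho(u)^{-1}$, to $\tfrac1{(\log x)^2}\ll\delta(x)u$, which holds; $\tfrac1{\log x}\ll\log_2x/\log x\le\delta(x)\,u\,\rho(u)^{-1}$; and since $u\le\log_3x$ forces $\log(u+2)\ll\log_4 x$ while $\log y=\log x/u\ge(\log x)(\log_4x)/\log_3x$, one gets $\tfrac{\log(u+2)}{\log y}\ll\log_3x/\log x\ll\log_2x/\log x$. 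This establishes~\eqref{eq:corollary:delta}.

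Finally, for Theorem~\ref{th:main}(2), Lemma~\ref{lemma:CM} gives
\[
\psi_E(x,y)=\sum_{a\in A}\psi_K(x,y;c,a,\mu_{c,a})+\#\{p\le x\text{ inert in }K\colon P^+(p+1)<y\}+O(1),
\]
where $K$ denotes the CM field, and the inert count equals $\sum_{a_0}\psi_\Q(x,y;|d_K|,a_0,-1)$ summed over the $\varphi(|d_K|)/2$ residue classes $a_0\bmod|d_K|$ inert in $K$ (the $O(1)$ absorbing ramified and bad-reduction primes). Applying~\eqref{eq:corollary:delta} to each of these $\varphi(c)/2+\varphi(|d_K|)/2$ terms — the implied constants being uniform because $A$, $c$ and $|d_K|$ are fixed — and summing gives $\psi_E(x,y)=\tfrac{x}{\log x}\rho(u)\bigl(1+O(\delta(x)\,u\,\rho(u)^{-1})\bigr)$, the error being genuinely $o(1)$ on the stated range because, by monotonicity of $\rho$ and~\eqref{eq:HT}, $\rho(u)^{-1}\le\rho(\log_3x/\log_4x)^{-1}=(\log_2x)^{1+o(1)}$, whence $\delta(x)\,u\,\rho(u)^{-1}\ll(\log_3x/\log_4x)(\log_2x)^{-\beta+o(1)}\to0$. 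The one genuinely analytic ingredient — a parametric, number-field Bombieri--Vinogradov-type bound for friable shifted primes — is already packaged in Theorem~\ref{th:kappa}; within the present argument the sole delicate point is the error-term bookkeeping just sketched, which combines the de~Bruijn estimate for $\rho'/\rho$ with the lower bound $\delta(x)\gg\log_2x/\log x$ built into Conjecture~\ref{conjecture:parametricEH} — without the latter, the power-saving term $x/(\log x)^3$ need not be negligible against $\delta(x)\,u\,x/\log x$.
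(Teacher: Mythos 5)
Your proposal is correct and follows essentially the same route as the paper: both apply Theorem~\ref{th:kappa} with $\kappa=\mathbf 1_{\{c\}}$, absorb the power-of-$\log$ error term and the shift in the argument of $\rho$ into $O(\delta(x)u\rho(u)^{-1})$ via the lower bound $\delta(x)\geq\log_2x/(\eta\log x)$ from Conjecture~\ref{conjecture:parametricEH}, and then deduce Theorem~\ref{th:main}(2) by summing over the classes furnished by Lemma~\ref{lemma:CM} (your treatment is in fact slightly more explicit than the paper's on the replacement $\pi_K(x)\mapsto x/\log x$ and on the inert primes, which you correctly handle by applying the corollary over $\Q$ with modulus $|d_K|$ and $\mu=-1$).
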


\begin{proof}
First note that under the stated assumptions $(\delta(x)u/\rho(u))=o(1)$ as $x\to \infty$. Indeed, by~\eqref{eq:HT}, one has $\rho(u)\gg \exp(-(1+\beta')u\log u)$ for $u\geq 1$ and any fixed $\beta'$ satisfying $0<\beta'<\beta$. Therefore we compute:
\begin{align*}
\frac u{\rho(u)}&\ll u{\rm e}^{(1+\beta')u\log u}\leq\frac{\log_3x}{\log_4x}
\exp\Big((1+\beta')\frac{\log_3 x}{\log_4x}\log\big(\frac{\log_3 x}{\log_4 x}\big)\Big)\\
&\leq \frac{\log_3x}{\log_4x}
\exp\Bigg((1+\beta')\log_3x\Big(1-\frac{\log_5x}{\log_4x}\Big)\Bigg)\\
&\leq (\log_2 x)^{1+\beta'}\frac{\log_3x}{\log_4x}
\exp\Big(-(1+\beta')\frac{\log_3x}{\log_4x}\Big)=o\big((\log_2 x)^{1+\beta}\big)\,.
\end{align*}

Next the assumption on the size of $u$ implies that $\log y\geq \log x\log_4x/\log_3x$ and thus for big enough $x$ one has $c\leq\min(y,\sqrt x)$, since $c$ is fixed. Setting $\kappa=\textbf{1}_{\{c\}}$ in Theorem~\ref{th:kappa} 
%we set  {\color{blue} $c\leq\min(y,\sqrt x)$} and 
we obtain:
\begin{equation}\label{eq:deltau}
\left|\psi_K(x,y;c,a,\mu)-\frac{\pi_K(x)}{\varphi(c)}\rho\left(\frac{\log(x/\norm{c})}{\log y}\right)\right|  \ll_{a,\omega} \frac{x}{\sqrt{ \norm{c}}(\log x)^\omega}+ \frac{\pi_K(x)}{\varphi(c)}\delta(x)u.
\end{equation}
The second summand on the right hand side of~\eqref{eq:deltau} is $\ll \frac{x}{\log x}\delta(x)u$. Likewise, since $u\geq 1$, and since $\delta$ is lower bounded by assumption in Conjecture~\ref{conjecture:parametricEH}, we have for the first summand:
$$
\frac{x}{(\log x)^{\omega}}\ll_\eta\frac{x}{\log x}\delta(x)u \frac{\log x}{(\log x)^{\omega-1}\log_2 x}
$$
which is $\ll \frac{x}{\log x}\delta(x)u$ for any fixed $\omega\geq 2$.

%{\color{blue} J'ai l'impression qu'il y a un problème avec le 2nd terme d'erreur. L'estimation~\eqref{eq:corollary:delta} sous-tend que $\frac 1{\rho(u)}\ll 1$, or $\frac 1{\rho(u)}\gg \exp(u\log u)$ d'après le développement asymptotique~\eqref{eq:HT}}
% \textcolor{red}{C'est vrai, mais il reste un résultat pertinent}.
 
 %Since $u\leq \frac{\log_2 x}{\log_3 x}$ we have $x/(\log x)^\omega=o(\min(\Li(x)\rho(u), \Li(x)\delta(x)u))$. 

Finally, since $\rho$ is smooth on $(1,\infty)$, $c$ is fixed, and we may assume that $y$ is big enough (recall that the assumptions imply that $\log y\geq \log x\log_4x/\log_3x$) there exists  $\xi \in (u- \frac{\log \norm{c}}{\log y},u)$ such that
\[
\left|\rho\big(\frac{\log(x/\norm{c})}{\log y} \big)-\rho(u)\right|=\left|\frac{\log \norm{c}}{\log y}\rho'(\xi)\right|\leq \left|\frac{\log \norm{c}}{\log y}\frac{\rho(\xi-1)}\xi \right|\ll 
\left| \frac{\rho(\xi-1)}{\log x}\right|
\,.\]
We deduce
\[
\frac 1{x(\log x)^{-1}\delta(x)u}\left|\rho\big(\frac{\log(x/\norm{c})}{\log y} \big)-\rho(u)\right|\ll_\eta \frac 1x\frac{\log x}{\log_2 x}=o(1)\,.
\]
This finishes the proof of~\eqref{eq:corollary:delta}.

%\begin{align*}
%\begin{array}{lcl}
%\frac{1}{\rho(u)\Li(x)}\left|\rho\left(\frac{\log(x/\norm{c})}{\log y} \right)-\rho(u)\right|&=&\left|\frac{\log \norm{c}}{\log y}\frac{\rho'(\xi)}{\rho(u)}\right|\\
%& \ll&\exp(\frac{\log u}{\log y})\\
%&=&o(\frac{\log(u+1)}{\log}
%\end{array}
%\end{align*}
%  This is a 
%  \begin{align}
%  O(\Li(x)u/\log x)=o(\Li(x)/(\log x)^{1/2})
%\end{align}
%and is negligible compared to the other expressions in Equation~\eqref{eq:deltau}, which is hence proven. 
To deduce Theorem~\ref{th:main}(2), we combine~\eqref{eq:from E to K} with~\eqref{eq:corollary:delta}, using again that $c$ depends only on $E$. 

%note that for $u=O(\log_3 x/\log _4 x)$ and $\delta= \frac{\log x}{\log_2 x}$ we have TODO CHECK
%\begin{align*}
%\begin{array}{lcl}
% \frac{x/\sqrt{\norm{c}}(\log x)^\omega}{\pi(x)}+ \frac{1}{\varphi(c)}\delta(x)u&=&O( \frac{\log_3 x \log_2 x}{\log x\log_4 x})\\
%&=&O(\delta u).
%\end{array}
%\end{align*}
\end{proof}

%We give an alternative proof of Corollary~\ref{cor:main} in Appendix~\ref{appendix:main} with a slightly larger error term $\delta u\log u$ instead of $\delta u$.

\subsection{Discussion on the implicit error terms in Theorem~\ref{th:main}}\label{ssec:discussion}

The error term in the estimates of Theorem~\ref{th:main} plays an important role in deciding whether an elliptic curve $E_1$ is more ECM-friendly than a second curve $E_2$. To explain this, let us first recall~\cite[Problem 5.1]{BShinde2021}.

\begin{problem}\label{prob:3}
Let $E/\Q$ be an elliptic curve without CM. Decide whether there exists a real number $\beta(E)$ such that
\begin{align*}
\Prob(\# E(\F_p)\text{ is $B$-friable}\colon p\sim n) \sim_n \Prob(m\text{ is $B$-friable}\colon m\sim n{\rm e}^{\beta(E)}),
\end{align*}
where $\sim_n$ denotes the asymptotic equivalent as $n\to\infty$, for positive numbers $a,b$ we write $a\sim b$ as a shorthand for $a\in[b-\sqrt{b},b+\sqrt{b}]$, and ``$\Prob$'' on the left hand side denotes the natural density of a subset of primes, while ``$\Prob$'' on the right hand side denotes the uniform probability on a finite set. 
\end{problem}

Next we mention two results that investigate the size of the error terms in approximations of the counting function of friable integers. 

\begin{theorem}[{\cite[Cor. 1.2, Th. 1.3]{Scourfield2004}}]\label{th:Scourfield}
Let $K$ be an imaginary quadratic field. Then for a fixed $ \varepsilon >0$, for all $x$ and $y$ such that $(\log_2 x)^{5/3+\varepsilon}\leq \log y\leq \log x$, one has 
\[
\psi_K(x,y)=L(1,\chi)x\rho(u)\left(1-\tfrac{\xi(u)}{\log y}\Big(\gamma_K+O\big(\tfrac{\log u}{\log y}+\tfrac{\log u}{\sqrt u}\big)\Big)\right)\qquad (u\to\infty)\,.
\]
%\begin{align*}
%\psi_K(x,y)=\lambda_Kx\left(\rho(u)+\frac{\rho'(u)}{\log y}(\alpha_K+o(1))\right)\qquad (x\to\infty)\,.
%\end{align*}
Here $\psi_K(x,y)=|\{(a)\text{ ideal of }\OO_K\colon \|a\|\leq x,\,\max\{\mathcal N\mathfrak p\colon\mathfrak p\triangleleft\OO_K\text{ prime},\,\mathfrak p\mid (a)\}\leq y\}|$, %$\lambda_K$ is the residue at $s=1$ of the Dedekind zeta function $\zeta_K$ of $K$ and 
$\gamma_K=(L'/L)(1,\chi)$ for $\chi$ the Kronecker character of~$K$, and $\xi(u)$ is defined for $u>1$ by the equality $\exp(\xi(u))=1+u\xi(u)$.

% \textcolor{blue}{
In particular, since there are $L(1,\chi)x(1+o(1))$ integral ideals of $\OO_K$ of norm $\leq x$, one has
$$\frac{\psi_K(x,y)}{\psi_K(x,\infty)}=\rho(u)\left(1-\frac{\log(u+1)}{\log y}\gamma_K(1+o(1))\right)\,.$$
%}
\end{theorem}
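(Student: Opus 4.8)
The first displayed formula is quoted directly from~\cite{Scourfield2004} (it combines~\cite[Cor.~1.2]{Scourfield2004} and~\cite[Th.~1.3]{Scourfield2004}), so there is nothing to prove there; the plan is to derive the ``In particular'' estimate from it. The idea is simply to divide by $\psi_K(x,\infty)$ and then simplify the resulting error terms in the stated range of $(x,y)$, using an elementary asymptotic for the saddle-point quantity $\xi(u)$.

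First I would record the asymptotic for the denominator. Writing $\zeta_K(s)=\zeta(s)L(s,\chi)$ for the imaginary quadratic field $K$, the Dedekind zeta function has a simple pole at $s=1$ with residue $L(1,\chi)$, so by Landau's theorem on the distribution of integral ideals (or a direct Perron-type argument) the number of integral ideals of $\OO_K$ of norm at most $x$ is $L(1,\chi)x+O(\sqrt x)$; hence $\psi_K(x,\infty)=L(1,\chi)x\,(1+O(x^{-1/2}))$, an error which will turn out negligible compared to everything else.

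Next I would analyse $\xi(u)$ and check that the $O(\cdot)$ term in Scourfield's formula is genuinely negligible. From $\exp(\xi(u))=1+u\xi(u)$ one gets $\xi(u)\to\infty$ and then $\xi(u)=\log u+\log\xi(u)+O(1/(u\xi(u)))=\log u+\log_2 u+o(1)$, so in particular $\xi(u)=\log(u+1)\,(1+o(1))$ as $u\to\infty$, since $\log(u+1)=\log u+o(1)$. For the error term, the hypothesis $\log y\geq(\log_2 x)^{5/3+\varepsilon}$ forces $u=\log x/\log y\leq\log x$, whence $\log u\leq\log_2 x+O(1)$ and
\[
\frac{\log u}{\log y}\leq\frac{\log_2 x+O(1)}{(\log_2 x)^{5/3+\varepsilon}}=O\big((\log_2 x)^{-2/3-\varepsilon}\big)=o(1),\qquad \frac{\log u}{\sqrt u}=o(1)\quad(u\to\infty),
\]
so that $\gamma_K+O\big(\tfrac{\log u}{\log y}+\tfrac{\log u}{\sqrt u}\big)=\gamma_K+o(1)$.

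Finally I would assemble the pieces: dividing Scourfield's formula by $\psi_K(x,\infty)=L(1,\chi)x(1+O(x^{-1/2}))$ and substituting $\xi(u)=\log(u+1)(1+o(1))$ together with the bound just obtained gives
\[
\frac{\psi_K(x,y)}{\psi_K(x,\infty)}=\rho(u)\Big(1-\frac{\log(u+1)}{\log y}\,\gamma_K\,(1+o(1))\Big),
\]
as claimed. There is no real obstacle in this deduction; the only point needing a little care is that none of the auxiliary error terms --- the $O(x^{-1/2})$ from the ideal count, the $o(1)$ coming from $\xi(u)/\log(u+1)$, and the $O(\cdot)$ inside the bracket --- can swamp the main correction $\tfrac{\log(u+1)}{\log y}\gamma_K$, and this is exactly what the two displays above guarantee (note $x^{-1/2}=o\big(\log(u+1)/\log y\big)$ since $\log(u+1)/\log y\gg 1/\log x$). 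If instead one wished to prove Scourfield's formula itself, the genuinely hard part would be the contour/saddle-point analysis of $\psi_K(x,y)$ via $\zeta_K(s)$ and the identification of $\gamma_K=(L'/L)(1,\chi)$ as the secondary term coming from logarithmic differentiation of the Euler product; that is carried out in~\cite{Scourfield2004} and we do not reproduce it.
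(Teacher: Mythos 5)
Your proposal is correct and matches the paper's (essentially tacit) treatment: the main formula is quoted from Scourfield, and the ``In particular'' clause is obtained exactly as you do, by dividing by the ideal count $\psi_K(x,\infty)=L(1,\chi)x(1+o(1))$ and noting that $\xi(u)=\log(u+1)(1+o(1))$ while the bracketed error terms are $o(1)$ in the stated range. Your version merely makes explicit the routine verifications that the paper leaves to the reader.
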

% Use Corollary 1.2 + Theorem 1.3 + point (ii) of Lemma 3.2 + formula for g_K(1) after Theorem 1.3

The result was generalized from $\zeta_K$ to a large class of Dirichlet series of the form $Z(s) G(s)$ where $Z$ is a product of zeta functions with positive exponents and $G$ a well behaved function (\emph{e.g.} a holomorphic function). The following particular case is sufficient for our applications.

\begin{theorem} [{\cite[Th. 1.1, Th. 1.2]{Hanrot2008}}, case $Z=\zeta$, $G$ holomorphic]\label{th:Hanrot}
Let $h$ be an arithmetic function with Dirichlet series $\cH(s)=\sum_{n\geq 1} \frac{h(n)}{n^s}$. We assume that $\cH$ extends to a meromorphic function with a simple pole at~$s=1$ and we write $\cH(s)=a_0/(s-1)+a_1+O(s-1)$ in a neighborhood of $1$.
Then one  has
%\begin{align*}
%\sum_{
%\scaleto{
%\begin{array}{c}
%n\leq x \\
%P^-(n)>y
%\end{array}
%}{15pt}}
\[
\sum_{\substack{n\leq x \\P^+(n)\leq y}}h(n)=x\rho(u)\left(a_0+a_1\frac{\log(u+1)}{\log y}+O\left(\frac{(\log(u+1))^2}{(\log y)^2}\right)\right),
\]
%\end{align*}
uniformly on $\exp((\log_2 x)^{5/3+\epsilon})\leq y\leq x$, for any fixed~$\epsilon>0$.
\end{theorem}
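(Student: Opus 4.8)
The plan is to reduce the weighted friable sum to the ordinary friable counting function $\Psi(x,y)$ via a Dirichlet convolution, and then to feed in the sharp asymptotics for $\Psi(x,y)$ due to de~Bruijn and Hildebrand.

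First I would exploit that, since $\cH$ has only a simple pole at $s=1$, the quotient $G(s):=\cH(s)/\zeta(s)$ is holomorphic in a neighbourhood of $s=1$ — this is the clause ``$Z=\zeta$, $G$ holomorphic'' of the cited statement — so that, comparing Laurent expansions with $\zeta(s)=(s-1)^{-1}+\gamma+O(s-1)$, one gets $G(1)=a_0$ and $G'(1)=a_1-\gamma a_0$, with $\gamma$ Euler's constant. Writing $G(s)=\sum_{d\ge1}g(d)d^{-s}$ one has $h=\mathbf{1}\ast g$, hence
\[
\sum_{\substack{n\le x\\ P^+(n)\le y}}h(n)=\sum_{\substack{d\ge1\\ P^+(d)\le y}}g(d)\,\Psi\!\left(\frac xd,y\right),
\]
the condition $P^+(d)\le y$ being automatic since a $y$-friable integer has no $y$-rough divisor. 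At this point one also needs a quantitative decay rate for $g$ — equivalently holomorphy and polynomial growth of $G$ slightly to the left of $\mathrm{Re}(s)=1$ — which is part of the full hypothesis of~\cite{Hanrot2008} and is subsumed under ``$G$ holomorphic''.

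Next I would insert the sharp asymptotic for $\Psi(\cdot,y)$. In the range $\exp((\log_2x)^{5/3+\epsilon})\le y\le x$ — exactly the range in which the theorem is uniform — de~Bruijn's and Hildebrand's saddle-point estimates give $\Psi(v,y)=v\rho(u_v)\big(1+O(\log(u_v+1)/\log y)\big)$ together with a computable secondary term of size $v\rho(u_v)\log(u_v+1)/\log y$, $u_v:=\log v/\log y$. Taking $v=x/d$ (so $u_{x/d}=u-\log d/\log y$), Taylor-expanding $\rho$ about $u$ to second order and summing over $d$ — using $\sum_d g(d)/d=G(1)=a_0$ and $\sum_d g(d)\log d/d=-G'(1)$, the $d$-sum being truncated and its friability constraint dropped at negligible cost thanks to the decay of $g$, and using $-\rho'(u)\asymp\rho(u)\log(u+1)$, $|\rho''(u)|\ll\rho(u)(\log(u+1))^2$ from~\eqref{eq:HT} — I would recover the main term $a_0\,x\rho(u)$ plus a single secondary term proportional to $x\rho(u)\log(u+1)/\log y$. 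Its coefficient assembles from two contributions: the secondary term of $\Psi$ itself, which carries the Euler constant $\gamma$ (traceable to $\zeta(s)=(s-1)^{-1}+\gamma+\cdots$ and the saddle location $\alpha(x,y)=1-\xi(u)/\log y+\cdots$, $\xi$ being de~Bruijn's saddle exponent as in Theorem~\ref{th:Scourfield}); and the linear Taylor coefficient $G'(1)=a_1-\gamma a_0$, picked up by $-\rho'(u)$ against $\sum_d g(d)\log d/d$. These combine to $a_1$, and everything left over is $O\big(x\rho(u)(\log(u+1))^2/(\log y)^2\big)$.

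The hard part is the sharp, uniform analysis of $\Psi(x,y)$ itself: extracting its secondary term with a remainder of the right order throughout $\exp((\log_2x)^{5/3+\epsilon})\le y\le x$ requires Hildebrand's saddle-point method and de~Bruijn's precise estimates for $\rho,\rho',\rho''$, and it is there that the lower bound on $y$ is genuinely forced — below it the saddle-point approximation, and with it the clean power of $\log(u+1)/\log y$ in the remainder, breaks down. The remaining bookkeeping — tail of the $d$-sum, removal of the friability constraint on $d$, control of the $\rho''$-term and of the error in replacing $G$ by its degree-one Taylor polynomial — is routine once the decay rate of $g$ is made quantitative. As an alternative that renders $a_1=\gamma a_0+G'(1)$ most transparent, I would work directly with the Perron integral of $\big(\sum_{P^+(n)\le y}h(n)n^{-s}\big)x^s/s$ and its saddle point at $s=\alpha(x,y)$, using near $s=1$ the exact splitting $\cH(s)=a_0\zeta(s)+G'(1)+O(s-1)$ and reading off the contributions of $\zeta(s,y)=\prod_{p\le y}(1-p^{-s})^{-1}$ and of $\prod_{p>y}(1-p^{-s})$ separately.
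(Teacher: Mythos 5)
This statement is not proved in the paper: it is quoted from Hanrot--Tenenbaum--Wu \cite{Hanrot2008} as an external input, so there is no internal proof to compare yours against. Your outline is nonetheless the classical route for the case $Z=\zeta$: write $h=\mathbf{1}\ast g$ with $G=\cH/\zeta$, insert a two-term expansion of $\Psi(x/d,y)$, Taylor-expand $\rho$, and sum against $\sum_d g(d)/d=G(1)=a_0$ and $\sum_d g(d)\log d/d=-G'(1)$. The identifications $G(1)=a_0$ and $G'(1)=a_1-\gamma a_0$ are correct, the convolution identity is correct, and you rightly observe that the hypotheses as printed (a simple pole of $\cH$ at $s=1$) are too weak and that analyticity plus growth/decay control of $G$ to the left of $\mathrm{Re}(s)=1$ must be imported from the full hypotheses of \emph{loc.\ cit.}

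The genuine gap is in the final coefficient extraction. You convert a secondary term proportional to $x\rho'(u)/\log y$ into one proportional to $x\rho(u)\log(u+1)/\log y$ via ``$-\rho'(u)\asymp\rho(u)\log(u+1)$'', but an order-of-magnitude relation cannot deliver the exact constant $a_1$. Indeed $-\rho'(u)/\rho(u)=\xi(u)+O(1/u)$ with $\xi(u)=\log u+\log_2u+O(\log_2u/\log u)$ (notation of Theorem~\ref{th:Scourfield}), so $-\rho'(u)-\rho(u)\log(u+1)$ is of size $\rho(u)\log_2(u+3)$ for large $u$, and a nonzero multiple of $\rho(u)$ for generic bounded $u$; divided by $\log y$, this discrepancy is much larger than the claimed remainder $O\big(\rho(u)(\log(u+1))^2/(\log y)^2\big)$ in essentially the whole range. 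What your computation actually yields is an expansion of the form $x\big(a_0\rho(u)+c_1\rho'(u)/\log y+O(\rho(u)(\log(u+1))^2/(\log y)^2)\big)$ --- which is the form in which such results are genuinely stated --- and the passage to the displayed normalization with coefficient exactly $a_1$ does not follow from what you wrote. The statement itself is delicate on precisely this point: for $h=1$ the Laurent expansion of $\zeta$ gives $a_1=\gamma$, whereas the paper's own specialization~\eqref{eq:devPsi} quotes Saias with constant $\gamma-1$. So your closing claim that the two contributions ``combine to $a_1$'' needs to be verified against a precise form of the secondary term of $\Psi(x,y)$ (de Bruijn's $\Lambda(x,y)$, expressed through $\rho'(u)$ or $\rho(u-1)$), not against the $\log(u+1)$ surrogate; as it stands, that step would fail to produce the stated error term.
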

Let us add that a similar result holds for $Z=1/\zeta$, the Dirichlet series of $\mu$. 
The case $h=1$, $a_0=1$, $a_1=\gamma-1$ (the Euler--Mascheroni constant) of Theorem~\ref{th:Hanrot}  was previously established by Saias~\cite[Main corollary]{Saias1989} and yields in particular
\begin{equation}\label{eq:devPsi}\psi(x,y)=x\rho(u)\left(1+\Big(\frac{\log (u+1)}{\log y}(\gamma-1+o(1))\Big)\right)\,,
\end{equation}
as $x\to\infty$ and under the same restrictions on $(x,y)$ as in Theorem~\ref{th:Hanrot}.

%A direct application of Theorem~\ref{th:Hanrot} yields 
%\begin{align*}
%\psi(x {\rm e}^{\alpha(E)},y)=\exp(u+\alpha(E)\frac{\log(u+1)}{\log y}(1+o_x(1)))\,.
%\end{align*}
Note that Theorem~\ref{th:main}(1) gives a positive answer to Problem~\ref{prob:3} in the CM case while Theorem~\ref{th:main}(3) does so in the non CM case. 
%In the non-CM cas, point 3 of the main theorem (Th~\ref{th:main nonCM}) answers the question : positively if $\tau(E)=1$ and negatively if $\tau(E)\neq 1$.
Finally, Theorem~\ref{th:main}(2) raises the necessity of finding asymptotics for $\log(\psi_E(x,y)/\psi(x,y))$. The numerical statistics in Appendix~\ref{sec:numerical} suggest that the following question is relevant.
%Note that the alternative proof of Corollary~\ref{cor:main} given in the Appendix shows that %$|\psi_E(x,y)-S_1'|\leq S_1''+S_2'+S_2''$,
%where $S_1''$, $S_2'$ and $S_2''$ are error terms. Also note that the proof establishes 
%\begin{align*}
%\log(S_1'/\psi(x,y))\sim \alpha(E) \frac{\log(u+1)}{\log y}+O(\delta u\log u),
%\end{align*}
%where $\alpha(E)=L'(1,\chi)/L(1,\chi)$.
%It is natural to take $u$ small enough so that $\delta u\log u =o(\frac{\log u+1}{\log y})$. One can ask if a different proof method or stronger assumptions of EH type would imply small enough bounds on the error terms $S_1''$, $S_2'$ and $S_2''$. It is then natural to ask the following question.
\begin{problem}\label{problem:main}
Let $E$ be a CM elliptic curve and let $K$ be the associated imaginary quadratic field.  Let $\chi$ be the Kronecker character of $K$ and let $\gamma_K=L'(1,\chi)/L(1,\chi)$. Does the following formula hold:
 \[
 \log(\psi_E(x,y)/\psi(x,y))\sim (-\gamma+1-\gamma_K) \frac{\log(u+1)}{\log y}~\text{?}
  \]
\end{problem}
\begin{remark}
The result~\cite[Th.  1.1]{LachandTenenbaum2015}, which was used in~\cite{Wang2018} and is sufficient for Theorem~\ref{th:main}(1), is not enough here because the error term given is $O(\log(u+1)/\log y)$. We use instead the stronger Lemma~\ref{lemma:LT} due to de la Bretèche and Fiorilli.
\end{remark}

If Problem~\ref{problem:main} receives a positive answer, 
%with the main term being $\alpha(E)\log(u+1)/\log y$ for a constant $\alpha(E)$, this
the constant $\gamma_K$ will be used as a criterion to compare ECM-friendliness of CM elliptic curves.  In the non CM case, Peter Montgomery used without proof\footnote{Peter Montgomery is famous for having invented algorithms and concepts which are very effective in computer science but are not justified rigorously or are not presented as part of a broader theory. For instance the modern presentation~\cite{MontgomeryMultiplication} of the Montgomery reduction is Barrett's reduction with $\Q_2$ replacing $\R$ whereas the use of Murphy's $\alpha$ to compare polynomials for NFS, originally used by Montgomery, was justified in~\cite{BLachand2017}.}~(\cite[\S 6.3, pp. 75--76]{MontgomeryPhD}) the constant $\alpha(E)=\sum_\ell\alpha_\ell(E)$ (see Proposition~\ref{prop:alpha} below) to compare the ECM-friendliness of two given elliptic curves $E_1$ and $E_2$, where the sum is over primes $\ell$ such that $\alpha_\ell(E_1)\neq \alpha_\ell(E_2)$. The next result recalls~\cite[Th. 5.1]{BShinde2021} which justifies the existence of $\alpha(E)$ in the non CM case, and gives an analogue of $\alpha(E)$ in the CM case. Moreover we relate explicitly the quantities $\gamma_K$ and $\alpha(E)$ in the CM case.
%to compare $\psi_{E_1}(x,y)$ and $\psi_{E_2}(x,y)$. 

\begin{proposition}\label{prop:alpha}
Let $E/\Q$ be an elliptic curve. For every rational prime $\ell$ we set 
\begin{itemize}
    \item if $E$ is a non CM curve,
    %\begin{equation} \label{eq:compute alpha}
    \[
\alpha_\ell(E)=
\Big(\frac{1}{\ell-1}-\EE_p(\val_\ell(|E(\F_p)|))\Big)\log \ell\,,
\]
%\end{equation}
\item If $E$ is a CM curve,
\[
\alpha_\ell(E)=
\Big(\frac 3{\ell-1}-4\EE_p(\val_\ell(|E(\F_p)|))\Big)\log\ell
\]
\end{itemize}
where $\EE_p$ is the operator $\lim_{x\to\infty}\pi(x)^{-1}\sum_{p\leq x,\, p\nmid\Delta_E}$ and $\val_\ell$ denotes the $\ell$-valuation. 

Then, the series $(\sum_\ell \alpha_\ell(E))$ converges. Furthermore, denote by $\alpha(E)$ the limit of the converging series: if $E/\Q$ has CM by an order of an imaginary quadratic field $K$, one has the formula
\begin{equation} \label{eq:compute alpha}
\alpha(E)=\gamma_K-\Sigma_K\,,
\end{equation}
where $\gamma_K=L'(1,\chi)/L(1,\chi)$ for $\chi$ the Kronecker character of $K$, and where $\Sigma_K$ is the value of the following converging sum depending only on $K$:
\[
\Sigma_K=\sum_{\ell\text{ inert}}\frac {2\log \ell}{\ell^2-1}\Big(-1+\frac{ 2\ell^2}{\ell^2-1}\Big)
+ \sum_{\ell\text{ ram.}}\frac{\ell\log\ell}{(\ell-1)^2}
+\sum_{\ell\text{ prime}}\log \ell\frac{3+\chi(\ell)}{(\ell-1)^2}\,.
\]
\end{proposition}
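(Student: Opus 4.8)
The plan is to treat the two assertions — convergence of $\sum_\ell \alpha_\ell(E)$ and the formula $\alpha(E)=\gamma_K-\Sigma_K$ in the CM case — by first computing the local expectations $\EE_p(\val_\ell(|E(\F_p)|))$ explicitly. For a prime $\ell$, one has $\val_\ell(|E(\F_p)|)\geq k$ precisely when $E(\F_p)$ has a point of order $\ell^k$ (after a small correction for the possible $\ell$-part of the group structure being of rank $2$), and this condition is governed by the image of the mod-$\ell^k$ Galois representation attached to $E$. Concretely, $\EE_p(\val_\ell(|E(\F_p)|))=\sum_{k\geq 1}\Prob_p(\ell^k\mid |E(\F_p)|)$, and by the Chebotarev density theorem each probability equals the proportion of matrices $g$ in the image $G_{\ell^k}\subseteq \GL_2(\Z/\ell^k)$ of Galois satisfying $\det(g-\Id)\equiv 0\pmod{\ell^k}$. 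In the CM case the image is (up to bounded index, and exactly so once we reduce to $\OO=\OO_K$ as permitted in the text) contained in the normalizer of a Cartan subgroup, which is what produces the factor $4$ and the split/inert/ramified trichotomy; the relevant counts are classical and reduce the expectation to an Euler-type sum over $k$.

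Next I would assemble the series. Writing $\alpha_\ell(E)=\big(\tfrac{c}{\ell-1}-c\,\EE_p(\val_\ell|E(\F_p)|)\big)\log\ell$ with $c=1$ (non-CM) or $c=3$, one checks that for all but finitely many $\ell$ (those of good reduction and with surjective/maximal image) the quantity $\EE_p(\val_\ell|E(\F_p)|)$ has the ``generic'' value, for which $\tfrac{c}{\ell-1}-c\,\EE_p(\cdots)=O(\log\ell/\ell^2)$; this gives absolute convergence of $\sum_\ell\alpha_\ell(E)$ since $\sum_\ell \log\ell/\ell^2<\infty$. For the CM formula, the key identity is that $\sum_\ell \big(\tfrac{3}{\ell-1}-4\,\EE_p(\val_\ell|E(\F_p)|)\big)\log\ell$ can be rearranged, after substituting the explicit split/inert/ramified expectations, into a sum that telescopes against the logarithmic derivative of $L(s,\chi)$ at $s=1$. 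The cleanest route is to recognize $\sum_\ell$ (over split $\ell$) of the ``main'' contribution as $-\sum_\ell \big(\log\ell\big)\sum_{k\geq1}\big(\chi(\ell)/\ell^k + \text{l.o.t.}\big)$, i.e.\ essentially $-\sum_{\mathfrak p}\log\Norm\mathfrak p\cdot\Norm\mathfrak p^{-s}$-type data evaluated via the Euler product $L(s,\chi)=\prod_\ell(1-\chi(\ell)\ell^{-s})^{-1}$, whose logarithmic derivative at $s=1$ is exactly $-\gamma_K=-L'(1,\chi)/L(1,\chi)$ up to the convergent correction terms that are collected into $\Sigma_K$. The three pieces of $\Sigma_K$ are then just the bookkeeping of: the inert primes (where $|E(\F_p)|=p+1$, so $\val_\ell$ is a shifted geometric-type random variable), the ramified primes, and the second-order terms $\chi(\ell)/\ell^{2k}$, $1/\ell^{2k}$ coming from points of $\ell^k$-torsion with $k\geq 2$ or rank-$2$ $\ell$-part — each manifestly a convergent sum of terms $O(\log\ell/\ell^2)$.

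The main obstacle I expect is the precise computation of $\EE_p(\val_\ell(|E(\F_p)|))$ in the CM case, including the ramified prime $\ell$ and the primes of bad reduction, and the careful matching of every error/second-order term so that the rearrangement into $\gamma_K-\Sigma_K$ is an exact identity rather than an asymptotic one. In particular one must be attentive to: (i) the index of the CM Galois image inside the Cartan normalizer (harmless after passing to $\OO_K$, but it affects finitely many $\ell$), (ii) the distinction between ``$\ell^k$ divides the order'' and ``there is a point of exact order $\ell^k$'', which is where the $\chi(\ell)/(\ell-1)^2$ and $3/(\ell-1)^2$ terms originate, and (iii) absolute convergence justifying the interchange of $\sum_\ell$ and $\sum_k$ and the manipulation of the resulting double series against $\log L(s,\chi)$. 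Since the statement is attributed to~\cite[Th. 5.1]{BShinde2021} for the non-CM convergence and the CM case is the new contribution, I would structure the write-up as: (1) local computation via Chebotarev and CM theory (Lemma~\ref{lemma:CM}); (2) the generic-prime estimate giving convergence; (3) the explicit rearrangement yielding~\eqref{eq:compute alpha}, with $\Sigma_K$ read off term by term.
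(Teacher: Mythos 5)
Your overall strategy coincides with the paper's: compute $\EE_p(\val_\ell(|E(\F_p)|))$ locally (the paper does this via Deuring's theorem, Lemma~\ref{lemma:CM}, by counting primes $\pi\in\OO_K$ with $\pi-\mu$ of prescribed $\lambda$-adic valuation in each residue class mod $c$ --- the concrete, abelian form of your Cartan-normalizer Chebotarev count), then rearrange the resulting sum over $\ell$ against the prime-sum expression for $L'(1,\chi)/L(1,\chi)$ extracted from the factorization $\zeta_K=\zeta\cdot L(\cdot,\chi)$. The non-CM convergence is indeed simply quoted from Barbulescu--Shinde, so the CM computation is the whole content, and your steps (1) and (3) are the right ones.

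There is, however, one genuine error in your step (2): the claim that the generic local computation yields $\alpha_\ell(E)=O(\log\ell/\ell^2)$ and hence \emph{absolute} convergence. This is correct in the non-CM case but fails in the CM case. The local computation gives
\[
4\,\EE_p(\val_\ell(|E(\F_p)|))=\frac{3+\chi(\ell)}{\ell-1}+O\Big(\frac{1}{\ell^2}\Big),
\qquad\text{hence}\qquad
\alpha_\ell(E)=-\frac{\chi(\ell)\log\ell}{\ell-1}+O\Big(\frac{\log\ell}{\ell^2}\Big),
\]
and $\sum_\ell|\chi(\ell)|\log\ell/(\ell-1)$ diverges. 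The series $\sum_\ell\alpha_\ell(E)$ therefore converges only conditionally, and its convergence cannot be settled as a preliminary step independent of the formula: it is equivalent to the convergence at $s=1$ of the prime sum representing $-L'(s,\chi)/L(s,\chi)$, which rests on $L(1,\chi)\neq 0$ (the prime number theorem for $K$). This is exactly how the paper organizes the argument: it first lets $s\to 1$ in the Euler-product identity to obtain
\[
\frac{L'(1,\chi)}{L(1,\chi)}=-\sum_{\ell\text{ prime}}\log\ell\,\Big(\frac{\chi(\ell)}{\ell-1}+\frac{|\chi(\ell)|(1-\chi(\ell))}{\ell^2-1}\Big),
\]
and only then reads off convergence and the value $\gamma_K-\Sigma_K$ simultaneously. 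A related slip likely at the source of the error: you write $\alpha_\ell(E)=\big(\tfrac{c}{\ell-1}-c\,\EE_p(\cdots)\big)\log\ell$ with $c=3$ in the CM case, whereas the coefficient of the expectation in the definition is $4$, not $3$; it is precisely this mismatch between $3$ and $4$ that leaves behind the non-absolutely-summable term $-\chi(\ell)\log\ell/(\ell-1)$. Once the convergence step is restructured so that it is deduced from the $L'/L$ identity rather than assumed absolute, your plan matches the paper's proof.
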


\begin{proof}
As already mentioned, the non CM case is due to Barbulescu--Shinde~\cite[Th. 5.1]{BShinde2021}.

Consider an elliptic curve $E/\Q$ that has CM by an order of an imaginary quadratic field $K$.
Fix $s\in \C$ such that ${\rm Re}(s)>1$. We use the factorization $\zeta_K(s)=\zeta(s)L(s,\chi)$ of the Dedekind Zeta function $\zeta_K$ of $K$ combined with the fact that the logarithmic derivative of $\zeta_K$ at $s$ coincides, up to sign, with the Dirichlet series at $s$ of the von Mangoldt function of $K$. We obtain
%{\color{blue}
%Je pense qu'il faut enlever le $2$ en bleu ci-dessous: pour chaque $\ell$ premier scind\'e dans $K$, il y a deux premiers au-dessus, ce qui explique la compensation $1+\chi(\ell)=2(1+\chi(\ell))/2$ dans la premi\`ere somme. En revanche dans la seconde somme, si $\ell$ est inerte, il est de norme $\ell^2$, mais il y a un seul ``premier au-dessus''(qui est lui-m\^eme) d'o\`u la contribution $((1-\chi(\ell))/2) \log (\ell^2)$, pour chaque premier inerte $\ell$. Les ``$2$'' se compensent.
%}
\begin{align*}
    \frac{\zeta_K'(s)}{\zeta_K(s)}&=\frac{L'(s,\chi)}{L(s,\chi)}+\frac{\zeta'(s)}{\zeta(s)}
    =-\sum_{k\geq 1,\, \mathfrak p}\frac{\log (\mathcal N\mathfrak p)}{(\mathcal N\mathfrak p)^{ks}}\\
    &=-\sum_{\ell\text{ prime}}\frac{(1+\chi(\ell))\log \ell}{\ell^s-1}
    -\sum_{\substack{\ell\text{ prime}\\ \text{unram. in } K }}\frac{(1-\chi(\ell))\log \ell^{2}}{2(\ell^{2s}-1)}\,,
\end{align*}
where, in the first sum, $\mathfrak p$ runs over the prime ideals of $\OO_K$ and $\mathcal N \mathfrak p=|\OO_K/\mathfrak p|$. 
Using the analogous link between the logarithmic derivative of $\zeta$ and the classical von Mangoldt function, we deduce that
\[
\frac{L'(s,\chi)}{L(s,\chi)}=
\sum_{\ell\text{ prime}}\log \ell\Big(\frac{1}{\ell^s-1}-  
\frac{1+\chi(\ell)}{\ell^s-1}\Big)-
\sum_{\substack{\ell\text{ prime}\\ \text{unram. in } K }}
\log \ell
\frac{1-\chi(\ell)}{\ell^{2s}-1}  \,.
\]
Since both sums on the right hand side converge at $s=1$, we let $s\to 1$ and get
\begin{equation}\label{eq:L'/L}
\frac{L'(1,\chi)}{L(1,\chi)}=-
\sum_{\ell\text{ prime}}\log \ell\Big(\frac{\chi(\ell)}{\ell-1}+\frac{|\chi(\ell)|(1-\chi(\ell))}{\ell^2-1}\Big)\,.
\end{equation} 
Next we fix any prime number $\ell$ and compute
\[
\EE_p(\val_\ell(|E(\F_p)|))=\lim_{x\to\infty}\frac 1{\pi(x)}\sum_{\substack{p\leq x\\p\text{ good}}}\val_\ell(|E(\F_p)|)
\]
Let $c\in\OO_K$ be as in Lemma~\ref{lemma:CM}. Let $R$ denote either $\Z$ or $\OO_K$. For any prime $\lambda\in R$ and for any integer $k\geq 0$, the density of primes $\pi\in R$ that\footnote{Again, we identify prime ideals with one given generator.} satisfy $\pi\equiv\mu+b\lambda^k\bmod \lambda^{k+1}$, for a fixed unit $\mu$ of $R$ and some $b\in (R/(\lambda))^*$ (\emph{i.e.} primes $\pi$ for which $\pi-\mu$ has $\lambda$-adic valuation equal to $k$) is $|(R/(\lambda))^*|/|(R/(\lambda^k))^*|=\varphi(\lambda)/\varphi(\lambda^{k+1})=\|\lambda\|^{-k}$, where $\varphi$ is Euler's indicator function for $R$. Moreover, if we require the extra condition $\pi\equiv a\bmod c$, the Chinese Remainder Theorem asserts that the density of the primes $\pi$ considered shrinks to  $\varphi(\lambda)/(\varphi(\lambda^{k+1})\varphi(c))$, which equals $\varphi(c)^{-1}\|\lambda\|^{-k}$. Using this density computation (combined with Lemma~\ref{lemma:CM}) in the case $R=\Z$ we obtain the contribution of inert primes to  $\EE_p(\val_\ell(|E(\F_p)|))$:
\[ \frac 1{\pi(x)}\sum_{\substack{p\leq x\\ \text{inert in }K}}{\rm val}_\ell(p+1)=\frac 12\sum_{k\geq 0}k\frac{\#\{p\leq x\colon {\rm val}_\ell(p+1)=k\}}{\pi(x)}
\rightarrow \frac 12\sum_{k\geq 0}\frac{k}{\ell^k}=\frac 12\frac\ell{(\ell-1)^2}\,.
\]
In the case $R=\OO_K$ we handle, using Lemma~\ref{lemma:CM} again, the contribution of split primes. To do so we use the notation of Lemma~\ref{lemma:CM} and factorize $\pi-\mu_{c,a}=\prod_i\lambda_i^{e_i}$, where $\lambda_i$ is a prime of $\OO_K$ above a prime number $\ell_i$. If $\ell_i$ is inert in $K$ then $\val_{\ell_i}(\|\pi-\mu_{c,a}\|)=2e_i$, otherwise $\val_{\ell_i}(\|\pi-\mu_{c,a}\|)=e_i$. Thus, using similar computations as the ones performed in the case of inert primes $p$, we obtain:
\begin{align*}
\lim_{x\to\infty}\frac 1{\pi(x)}\sum_{\substack{p\leq x\\ \text{split in }K}}{\val}_\ell(|E(\F_p)|)&= \lim_{x\to\infty}\frac 1{\pi(x)}\sum_{a\in A}\sum_{\substack{\pi\in\OO_K,\, \|\pi\|=p\leq x\\ \pi\equiv a\bmod c}}\val_\ell(\|\pi-\mu_{c,a}\|)\\
&= \frac {|A|}{\varphi(c)}\Big(\frac{1+\chi(\ell)+{\bf 1}_{\ell\mid{\rm disc }K}}2\frac{\ell}{(\ell-1)^2}+\frac{|\chi(\ell)|(1-\chi(\ell))}2\frac{2\ell^2}{(\ell^2-1)^2}\Big)\\
&=\frac{1+\chi(\ell)+{\bf 1}_{\ell\mid{\rm disc }K}}4\frac{\ell}{(\ell-1)^2}+\frac{|\chi(\ell)|(1-\chi(\ell))}4\frac{2\ell^2}{(\ell^2-1)^2}
\end{align*}
Overall
\begin{align}\label{eq:alpha theoretical}
4\EE_p(\val_\ell(|E(\F_p)|))=\frac{3+\chi(\ell)}{\ell-1}\Big(1+\frac 1{\ell-1}\Big)+\frac{{\bf 1}_{\ell\mid{\rm disc}K}\ell}{(\ell-1)^2}+\frac{4{\bf 1}_{\ell\text{ inert}}\ell^2}{(\ell^2-1)^2}
\end{align}
Combining with~\eqref{eq:L'/L}, one deduces as wished,
\begin{align*}
\sum_{\ell\text{ prime}}
\Big(4\EE_p(\val_\ell(|E(\F_p)|))-\frac3{\ell-1}\Big)\log\ell=& -\frac{L'(1,\chi)}{L(1,\chi)}+\sum_{\ell\text{ inert}}\frac {2\log \ell}{\ell^2-1}\Big(-1+\frac{ 2\ell^2}{\ell^2-1}\Big)\\
&+ \sum_{\ell\text{ ram.}}\frac{\ell\log\ell}{(\ell-1)^2}
+\sum_{\ell\text{ prime}}\frac{3+\chi(\ell)}{(\ell-1)^2}
\end{align*}

\end{proof}

\begin{example} We have computed the value of $\gamma_K=L'/L(1,\chi)$, $\Sigma_K$ and respectively $\alpha(E)$ using Equation~\eqref{eq:L'/L}, the  summed for $\ell\leq 10^6$. The rapidly converging series $\Sigma_K$ is evaluated using the formula in the statement of Porposition~\ref{prop:alpha} using $\ell\leq 10^6$. Finally, for each prime $\ell\leq 10^4$ one approximates the average value of $\val_\ell |E(\F_p)|$ using the primes $p\leq 10^3$; and we obtained $\widetilde{\alpha}(E)$. The results illustrate the equality $\widetilde{\alpha}(E)\approx \alpha(E)=\gamma_K-\Sigma_K$ for a list of elliptic curves having CM by the quadratic fields $K=\Q(\sqrt{-d})$, $(d>0$) of class number $1$. 

\begin{small}
\begin{align*}
\begin{array}{c|c|c|c|c|c|c|c|c|c}
d   & 
1     & 2     & 3     & 7     & 11    & 19    & 43    & 67   & 163    \\
\hline
\widetilde{\alpha}(E)     & 
-3.042 &    -2.990 &   -3.038 &   -3.073 &   -3.019 &   -3.045 &   -3.080 &   -3.091 &  -3.119 \\
\alpha(E) &
-2.268 & -3.058 &
-1.878 & -3.924 &
-2.908 & -2.284 &
-1.541 & -1.041 &
0.585 
\\
 \Sigma_K  & 
 2.509 & 3.032 &
2.242 & 3.936 &
2.820 & 2.194 &
1.793 & 1.692 & 1.594 \\
\gamma_K      &
0.245 & -0.022 &
0.367 & -0.015 &
-0.085 & -0.085 &
0.246 & 0.659 &
2.171 \\
\hline
\widetilde{\alpha}(E)-(\gamma_K-\Sigma_K) &
0.78 &   0.07 &  -1.16 &  0.85 & -0.11 &  -0.76 &  -1.54 &   -2.05 &   -2.53
\end{array}
\end{align*}
\end{small}
The difference $\widetilde{\alpha}(E)-(\gamma_K-\Sigma_K)$ is close to $0$, but not negligible, depending on $K$. 
%as proven by the Proposition~\ref{prop:alpha}. 
Indeed the numerical estimation of the average $\ell$-valuation of $|E(\F_p)|$ is slow and the sample of primes $p\leq 10^3$ is not sufficient to produce very small differences $\widetilde{\alpha}(E)-(\gamma_K-\Sigma_K)$. Note also that the rate of convergence of the series involved seems to depend on the field $K=\Q(\sqrt{-d})$.
\end{example}

\begin{remark}
The computation of $L'(1,\chi)/L(1,\chi)$ is slow if one uses a naive evaluation of each of the series $L'(s,\chi)$ and $L(s,\chi)$ (see~\cite{languasco2021} for a recent algorithm). This gives a second purpose of the formula~\eqref{eq:compute alpha}: quickly computing $(L'/L)(1,\chi)$. Note that~\cite{BLachand2017} gives bounds on the convergence speed. 
\end{remark}

\begin{remark}
In the study of friability of binary forms, Murphy~\cite{Murphy1998} associated a function to irreducible polynomials $f\in\Z[x]$ as follows. For a prime $\ell$,
\begin{align*} 
\alpha_\ell(f)&= (\log \ell)\cdot\Big(\Esp_n(\val_\ell n) -  \Esp_{(a,b)=1} \big(\val_\ell b^{\deg(f)} f(a/b)\big)\Big),\\
\alpha(f)&= \sum_{\ell\text{ prime}} \alpha_\ell(f),   
\end{align*}
where $\Esp_{(a,b)=1}$ corresponds to natural density for randomly chosen pairs of integers $(a,b)$ which are relatively prime; the convergence of the series is proven in~\cite[\S 2.2 ]{BLachand2017}. We note that $\alpha(E)$ has an expression similar to $\alpha(f)$ with $f$ such that $K\simeq \Q[x]\slash (f)$ (up to the condition $(a,b)=1$).
\end{remark}

%The technical conditions on $\delta$ have an interesting story : Halberstam formulated the conjecture for constant $\delta$, H. L. Montgomery suggested that one could take $\delta(x)\rightarrow 0$ and finally Friedlander and Granville~\cite{FriedlanderGranville1992} showed that the conjecture fails if $\delta$ is less than a certain function of~$x$. Our arguments show that the version of the conjecture with constant $\delta$ implies results with constant $(\log x)/(\log y)$. However for the cryptographic motivation explained in Section~\ref{sec:motivation} we are interested in uniform results in $x$ and $y$ which allow to conduct numerical experiments and vary the two parameters independently.

\section{The set $\Psi_{E,z}(x,y)$: proof of Theorem~\ref{th:Q}}\label{sec:Q}
%{\textcolor{blue}{ Bouger cette section avant la section 4 ?}}
This section is devoted to the proof of Theorem~\ref{th:Q}. We first state and prove a lemma, which is a variation on the fact that a set of primes which has a natural density also has an analytic (or logarithmic) density (see~\cite[\S III.1]{Tenenbaum-cours}).

%\begin{lemma}\label{lemma:crude lower bound}
%Let $Q$ be as in Theorem~\ref{th:Q}. There is a constant such that for $x\geq 1$ and  $u\geq (\log x)^{1/4}$ and $x\geq 1$ we have
%\begin{align}
%\Psi_v(x,x^\frac{1}{u}) > x \left(\lambda(x^{1/u},v)^u \frac{1}{u^{3u}}\right)^{1-1/u}.
%\end{align}
%\end{lemma}
%\begin{proof}
%We can assume $x^\frac{1}{u}$ is sufficiently large, so that we have
%\begin{align}
%\begin{array}{c}
%\pi(x^{1/u})> ux^{1/u}/(2\log x),\\
%\pi_v(x^{1/u})>\lambda(x^{1/u},v) \pi(x^{1/u}) \geq \lambda(x^{1/u},v) ux^{1/u}/(2\log x),\\
%%\pi_v(x^{\theta/u})>\frac{1}{\pi(c_1)} \lambda(x^{\theta/u},v) \pi(x^{\theta/u}) \geq \lambda(x^{\theta/u},v) ux^{\theta/u}/(\theta2\log x)
%\end{array}
%\end{align}
%Let $\pi_Q'(y)$ be $\pi_Q(y)$ if $y\geq 2$ and $1$ otherwise. Let $u=m+\theta$, where $m=\lfloor u\rfloor$. We evidently have 
%\begin{align*}
%\begin{array}{ccl}
%\psi_Q(x,x^{1/u})& \geq & \pi_Q(x^{1/u})^m/m! \\  
%               &  \geq  &\lambda(x^{1/u},v)^{m(1+O(\frac{u}{\log x}))}  \left(\frac{ux^{1/u}}{2\log x}\right)^m \\
%%\lambda(x^{\theta/u},v)^{1+O(\frac{u}{\theta \log x})} \left(\frac{ux^{\theta/u}}{2\theta\log x} \right)  /m! \\  
%              &   \geq  & \lambda(x^{1/u},v)^ux^{1-1/u} \left(\frac{u^m}{2^{m}(\log x)^{m} m!} \right)  \\ 
%           &  \geq & \lambda(x^{1/u},v)^ux^{1-1/u} \exp(-(u+1)(\log_2x+\log 2) )  \\
%  & \geq & ( \lambda(x^{1/u},v)x )^{1-1/u}  \exp(-3u\log u). 
%\end{array}
%\end{align*}
%\end{proof}

\begin{lemma}\label{lemma:density}
Let $Q$ be a set of primes and, for $x\geq 2$, let $\Pi_Q(x)=Q\cap [1,x]$. In the case where $Q$ is the set of all primes, we will simply write $\Pi(x)$ for $\Pi_Q(x)$. Assume that there exists a positive non increasing function $\lambda(x)$ and a constant $\omega>0$ such that for all $x\geq 2$, $$\frac{|\Pi_Q(x)|}{|\Pi(x)|}-\lambda(x)\ll\frac{1}{\big(\log_2 x)^{1+\omega}}\,.$$
Then we have 
\[
\frac{\sum_{p\in \Pi_Q(x)}p^{-1}}{\sum_{p\in \Pi(x)}p^{-1}}
 = \lambda(x)(1+o(1))
 +O\Big(\frac{1}{(\log_2 x)^{1+\omega}}\Big)\,.
\]
%\begin{align*}
%h(x)=\left\{\begin{array}{ll}
%\frac{\log_2 x}{\log_3 x},& \text{ if }g(x)= \frac{1}{\log_2x}\\
%\frac{1}{(\log x)^{A-1}},& \text{ if }g(x)=\frac{1}{(\log x)^{A}}.
% \end{array}\right.
%\end{align*}
\end{lemma}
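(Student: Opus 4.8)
The plan is to compare the two weighted sums via partial summation (Abel summation), using the hypothesis to control the ``numerator'' at every scale. Write $R(x)=|\Pi_Q(x)|/|\Pi(x)|-\lambda(x)$, so by assumption $R(x)\ll (\log_2 x)^{-(1+\omega)}$ for all $x\ge 2$, hence $|\Pi_Q(x)|=\lambda(x)|\Pi(x)|+R(x)|\Pi(x)|$. First I would recall the classical fact that $\sum_{p\le x}p^{-1}=\log_2 x+M+o(1)$ (Mertens), so the denominator $\sum_{p\in\Pi(x)}p^{-1}\sim\log_2 x$; this is the normalizing quantity we will divide by at the end.

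Next, I would apply Abel summation to $\sum_{p\in\Pi_Q(x)}p^{-1}=\sum_{2\le n\le x}\mathbf 1_{n\in Q}\,n^{-1}$ with summatory function $N_Q(t):=|\Pi_Q(t)|$, obtaining
\[
\sum_{p\in\Pi_Q(x)}\frac1p=\frac{N_Q(x)}{x}+\int_2^x\frac{N_Q(t)}{t^2}\,dt .
\]
Substituting $N_Q(t)=\lambda(t)|\Pi(t)|+R(t)|\Pi(t)|$ splits this into a main term $\frac{\lambda(x)|\Pi(x)|}{x}+\int_2^x\frac{\lambda(t)|\Pi(t)|}{t^2}\,dt$ and an error term $\frac{R(x)|\Pi(x)|}{x}+\int_2^x\frac{R(t)|\Pi(t)|}{t^2}\,dt$. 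For the main term I would run the same Abel summation backwards: since $\lambda$ is non-increasing, I would use monotonicity to pull $\lambda$ through. The cleanest route is to write the main term as $\int_2^x \lambda(t)\,d\big(\sum_{p\le t}p^{-1}\big)$ (Stieltjes form of the same Abel identity), and then, because $\lambda$ is non-increasing and $\sum_{p\le t}p^{-1}\sim\log_2 t$ with $\log_2 t$ slowly varying, a standard argument (splitting the range at $t=x^{1/\log_2 x}$, say, or invoking the slow variation of $\log_2$) shows this Stieltjes integral is $\lambda(x)(\log_2 x)(1+o(1))$. Dividing by the denominator $\sim\log_2 x$ yields $\lambda(x)(1+o(1))$, the announced main term. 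The only subtlety here is that $\lambda(x)$ need not tend to a limit, so one must genuinely use its monotonicity together with the slow variation of $\log_2$ rather than any convergence; I expect this to be the main obstacle, though it is a routine Karamata-type estimate.

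For the error term, I would bound $|R(t)|\le C(\log_2 t)^{-(1+\omega)}$ and $|\Pi(t)|\ll t/\log t$, so $\frac{R(t)|\Pi(t)|}{t^2}\ll \frac{1}{t\log t(\log_2 t)^{1+\omega}}$, whose integral from (a fixed constant) to $x$ is, by the substitution $s=\log_2 t$, $\ll\int^{\log_2 x}s^{-(1+\omega)}\,ds\ll 1$ — a bounded quantity. Dividing by the denominator $\sim\log_2 x$ would already give $o(1)$, but to recover the sharper shape $O((\log_2 x)^{-(1+\omega)})$ claimed in the statement I would instead integrate by parts once more, or simply note that the dominant contribution to the divided error comes from $t$ near $x$: more precisely, estimate the tail $\int_{\sqrt x}^x\frac{|R(t)||\Pi(t)|}{t^2}dt\ll (\log_2 x)^{-(1+\omega)}\log_2 x=(\log_2 x)^{-\omega}$ and the head $\int_2^{\sqrt x}\ll 1$, then divide by $\log_2 x$; a small bookkeeping argument isolating the contribution of $t\in[x^{1-\epsilon},x]$ for a suitable $\epsilon\to 0$ gives the stated $O((\log_2 x)^{-(1+\omega)})$ term. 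Collecting the main term and the error term and dividing throughout by $\sum_{p\in\Pi(x)}p^{-1}=\log_2 x(1+o(1))$ completes the proof.
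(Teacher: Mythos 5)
Your route is the same as the paper's: Abel/partial summation on $\sum_{p\in \Pi_Q(x)}p^{-1}$, substitution of $|\Pi_Q(t)|=\lambda(t)|\Pi(t)|+R(t)|\Pi(t)|$ together with the Prime Number Theorem, Mertens' estimate for the denominator, and a monotonicity argument for the $\lambda$-part. (Writing the main term as a Stieltjes integral rather than the paper's sum $\sum_n |\Pi_Q(n)|/(n(n+1))$ is cosmetic.)

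There is, however, one step that fails as described: the recovery of the $O((\log_2 x)^{-(1+\omega)})$ error. You correctly observe that $\int_2^x \frac{dt}{t\log t(\log_2 t)^{1+\omega}}$ is bounded, but it is a \emph{convergent} integral whose value is $\asymp 1$, with the mass concentrated at small $t$; so after dividing by $\log_2 x$ the error is $O(1/\log_2 x)$, and no bookkeeping over $t\in[x^{1-\epsilon},x]$ can improve this, since the obstruction is the head $\int_2^{\sqrt x}\asymp 1$, not the tail (which is in fact already $\ll(\log_2 x)^{-(1+\omega)}$, as the integral of $1/(t\log t)$ over $[\sqrt x,x]$ is $\log 2$, not $\log_2 x$). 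To be fair, the paper's own proof has the identical defect: its Cauchy-condensation step bounds the convergent sum $\sum_{k\leq \log x}k^{-1}(\log k)^{-(1+\omega)}$ by $(\log_2 x)^{-\omega}$, which cannot hold since that sum exceeds its first term. A second point you rightly flag as ``the main obstacle'' is also not closed by monotonicity alone: the claim $\int_2^x\lambda(t)\,d\big(\sum_{p\leq t}p^{-1}\big)=\lambda(x)\log_2 x(1+o(1))$ fails for, e.g., $\lambda(t)=1/\log_2 t$, where the left side grows like $\log_3 x$ while the right side stays bounded; monotonicity gives only the lower bound $\geq \lambda(x)\log_2 x(1+O(1/\log_2x))$ (which is all the paper actually uses, via $\lambda(n)\geq\lambda(x)$), and the matching upper bound needs quantitative control on how fast $\lambda$ may decay. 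In short: same approach and same gaps as the paper; what your argument rigorously yields is the weaker conclusion with error term $\lambda(x)(1+o(1))+O(1/\log_2 x)$.
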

\begin{proof}
First note that
\[
\sum_{p\in \Pi_Q(x)}p^{-1}
=
\sum_{n=1}^{ \lfloor x \rfloor } \frac{|\Pi_Q(n)|-|\Pi_Q(n-1)|}{n}\,.
\]
An Abel summation then yields
\[
\sum_{n=1}^{ \lfloor x \rfloor  } \frac{|\Pi_Q(n)|-|\Pi_Q(n-1)|}{n}
=
\frac{|\Pi_Q(x)|}{\lfloor x\rfloor+1}+\sum_{n=1}^{ \lfloor x \rfloor } \frac{|\Pi_Q(n)|}{n(n+1)}\,.
\]
%By assumption $\pi_Q(n) \sim \lambda n/\log n$ so $ \frac{\pi_Q(n)}{n(n+1)}\sim \lambda/(n\log n) $. Then 
%\begin{align*}
%\sum_{n=1}^{ \lfloor x \rfloor } \frac{\pi_Q(n)}{n(n+1)}
%\sim 
%\lambda \sum_{n\leq x} \frac{1}{n\log n}\sim \lambda \log\log x. 
%\end{align*}
%The term $\pi_Q(x)/x\sim \lambda/\log x$ is negligible, which proves point (1). 

Now we use the Prime Number Theorem under the form $|\Pi(x)|=(x/\log x)(1+o(1))$. We obtain: 
\begin{equation}\label{eq:PiQ}
\sum_{n=1}^{ \lfloor x \rfloor } \frac{|\Pi_Q(n)|}{n(n+1)}
= 
 \sum_{n\leq x} \lambda(n)\Big(\frac{1}{n\log n}+o(n^{-2})\Big)+O\Big(\sum_{n\leq x} \frac{1}{n\log n(\log_2 n)^{\omega+1}}\Big)\,.
\end{equation}
To handle the error term we make use of Cauchy's condensation criterion. Precisely
$$
\sum_{n\leq x} \frac{1}{n\log n(\log_2 n)^{\omega+1}}\ll \sum_{1\leq 2^k\leq x}\frac{2^k}{2^k\log(2^k)\log_2(2^k)^{\omega+1}}\ll \sum_{k\leq \log x}\frac{1}{k(\log k)^{\omega+1}}\ll (\log_2(x))^{-\omega}\,.
$$
Plugging this into~\eqref{eq:PiQ} and using the fact that $\lambda(n)\geq \lambda(x)$ for all $n\leq x$, one deduces that
\begin{align*}
\sum_{n=1}^{ \lfloor x \rfloor } \frac{|\Pi_Q(n)|}{n(n+1)}
&= 
\lambda(x) \Big(\sum_{n\leq x} \frac{1}{n\log n}+O(1)\Big)+O\big((\log_2x)^{-\omega}\big)\\ 
&=\lambda(x) \log_2x(1+o(1))+O\big(\lambda(x)+ (\log_2x)^{-\omega}\big). 
\end{align*}
Finally, the term $|\Pi_Q(x)|/\lfloor x\rfloor$ has size $\frac{\lambda(x)}{\log x}(1+o(1))+O(\frac{1}{\log x(\log_2 x)^{1+\omega}})$, which is negligible compared to the left hand side of~\eqref{eq:PiQ}. The proof of the lemma is finished by using Dirichlet's estimate $\sum_{p\in\Pi(x)}p^{-1}=\log_2 x+O(1)$.
\end{proof}

\begin{proof}[Proof of Theorem~\ref{th:Q}]
Let $Q=\{p\text{ prime }\colon |E(\F_p)|\text{ is }z\text{-friable}\}$ and recall that $z=y^{1/v}$. 
\begin{align*}
\frac{\psi_{E,z}(x,y)}{\psi(x,y)}&= \Big(\sum_{
p\in Q,\,
p\leq y
}\psi(x/p,y)\Big)\cdot\Big(\sum_{p\leq y}\psi(x/p,y)\Big)^{-1}\\
 &=
\Big(\sum_{
p\in Q,\,
p\leq y
}  \frac xp \rho(u)(1+\varepsilon(x,y,p))  \Big)\cdot\Big(\sum_{p\leq y}  \frac xp\rho(u)(1+\varepsilon(x,y,p))   \Big)^{-1}\,,
\end{align*}
where $\varepsilon(x,y,p)=(\psi(x/p,y)-(x/p)\rho(u))/((x/p)\rho(u))$. 
%The case $h=1$, $a_0=1$, of Theorem~\ref{th:Hanrot} yields the well known estimate
%$$\psi(x,y)=x\rho(u)\left(1+O\left(\frac{\log (u+1)}{\log y}\right)\right)\,.$$
We combine~\eqref{eq:devPsi} with the fact that for any $u\in\Delta$, one has $\frac{\log(u+1)}{\log y}\ll \frac{\log_2x}{\log x}$ to obtain that $\varepsilon(x,y,p)= O(\frac{\log (u+1)}{\log y})=O(\frac{1}{(\log x)^\omega})$ for any fixed $0<\omega<1$ and for $u\in \Delta$. We deduce
$$
\psi_{E,z}(x,y)=\psi(x,y)\big(1+O( \tfrac{\log (u+1)}{\log y})\big)
\Big(\sum_{
p\in Q,\,
p\leq y
}
1/p\Big)\cdot\Big(\sum_{p\leq y}1/p\Big)^{-1}\,.
$$
In order to apply Lemma~\ref{lemma:density}, we invoke Theorem~\ref{th:main}(2) which asserts that
$$
\frac{\psi_E(y,z)}{|\Pi(x)|}-\rho(v)\ll \delta(y)v\,.
$$
Here we fix $\beta>0$ such that $\delta(y)v\ll (\log_2 y)^{-1-\beta}\log_3y(\log_4y)^{-1}$. This is $\ll (\log_2 y)^{-1-\frac \beta 2}$. Therefore, by Lemma~\ref{lemma:density} we have that
$$
\Big(\sum_{
p\in Q,\,
p\leq y
}
1/p\Big)\cdot\Big(\sum_{p\leq y}1/p\Big)^{-1}=\rho(v)(1+o(1))+O\big((\log_2 y)^{-1-\frac\beta 2}\big).
$$
Hence we infer
\begin{align*}
\psi_{E,z}(x,y)&=x\rho(u)\Big(1+O( \frac{\log (u+1)}{\log y})\Big)\rho(v)(1+o(1))+O\big((\log_2 y)^{-1-\frac\beta 2}\big)\\
& =x\rho(u)\rho(v)(1+o(1))\,.
\end{align*}
\end{proof}

\section{The case of non CM elliptic curves}\label{sec:heuristic}

It is interesting to investigate potential analogues of Theorem~\ref{th:equivalent} in the non CM case. This section suggests such an analogue and highlights its theoretical limitations.
%In the cryptographic applications one can test if a presumed formula for $\alpha(E)$ matches the numerical experiments. This section, based on a working hypothesis, is not a proof but an attempt to find a formula for $\alpha(E)$ in the case of curves without CM, and it continues in Section~\ref{ssec:alpha nonCM}. 
Let $E/\Q$ be a non CM elliptic curve. Deuring's Theorem (Lemma~\ref{lemma:CM}) enabled us in the CM case to relate $\psi_E(x,y)$ to the count of primes in arithmetic progressions. In the non CM case a natural choice for the analogous prime counting function is the following:
\[
\pi_E(x;d)=|\{ p\leq x\colon d\mid |E(\F_p)| \}|\,.
\]
%Zywina~\cite{Zywina2011} noted that $\pi_E(x,d)/\pi(x)$ is not multiplicative. It becomes so if we restrict to values of $d$ coprime to $M_E$ an integer constant $M_E$ depending only on $E$. 
In celebrated work~\cite{Serre1972}, Serre shows the existence of an integer $M_E$ depending only on $E$, such that for $n$ coprime with $M_E$, the Galois group $G_n$ of the $n$-torsion field extension $E[n](\overline{\Q})/\Q$ is isomorphic to the full group ${\rm GL}_2(\Z/\Z)$. Moreover one has additional multiplicativity property $G_{mn}\simeq G_m\times G_n$ for any $m$ coprime with $n$.
David and Wu~\cite[Proof of Lemma 4.1]{DavidWu2012b} give an asymptotic development under GRH for the Dedekind zeta function of $\Q(E[d](\overline{\Q}))$ when $d$ is coprime to $M_E$ and squarefree: 
\begin{align}\label{eq:DW}
\pi_E(x;d)&=\frac{w(d)}d\frac{x}{\log x}+O_E(d^{3/2}x^{1/2}\log(dx))\,,\\
w_E(d)&=\prod_{\substack{\ell\mid d\\ \ell\text{ prime}}}\frac{\ell^2(\ell^2-2)}{|{\rm GL}_2(\Z/\ell\Z)|}
=\prod_{\substack{\ell\mid d\\ \ell\text{ prime}}}\frac{\ell(\ell^2-2)}{(\ell-1)(\ell^2-1)}
\,.\nonumber
%\pi_E(x;d)=w(d)\Li(x)+O_E(d^{3}x^{1/2}\log(d\cdot N_Ex)).
\end{align}
%The error term is $O_x(d^3\log d)$, which is much smaller than that given by the effective Chebotarev theorem in the case of a generic number field of degree~$d$. 
In the spirit of the Bombieri--Vinogradov Theorem and of its expected generalization Conjecture~\ref{conjecture:EH}, it is tempting to expect some strong average version of~\eqref{eq:DW} over $d$. The following results are evidence for the validity of this ``Elliott--Halberstam phenomenon'' that we next state (Hypothesis~\ref{hypothesis} below).
%One can sum the error terms up to $d\leq x^{1/4}/(\log x)^2$ and obtain a Bombieri-Vinogradov statement for elliptic curves.

\begin{theorem} \label{th:BVnonCM}Let $E/\Q$ be a non CM elliptic curve and assume the GRH for Dedekind Zeta functions.
\begin{enumerate}
    \item One has (\cite[Prop. 3.8]{Kowalski2005}):
    $$
    \sum_{d\leq x^{\frac 14}/(\log x)^2}\varphi(d)|\{p\leq x\colon E[d](\overline{\Q})\subset E(\F_p)\}|=\Bigg(\sum_{d\geq 1}\frac{\varphi(d)}{|G_d|}\Bigg)\frac{x}{\log x}+O_E\Big(\frac x{(\log x)^3}\big)
    $$
\item One has (\cite[(4.7)]{DavidWu2012b}): 
$$
\sum_{\substack{d\leq x^{\frac 15}/(\log x)^4\\ p\mid d\Rightarrow  M_E<p\leq x^{1/10}/(\log x)^4}}2^{\omega(d)}\mu(d)^2\Big|\pi_E(x;d)-\frac{w_E(d)}d\frac{x}{\log x} \Big| \ll_{E}  \frac{x}{(\log x)^3} \,.
$$
\end{enumerate}
%[under GRH, Prop 5.3 in \cite{Kowalski2005}\footnote{See also Equation (4.7) in \cite{DavidWu2012b}.}.]\label{th:BV for elliptic curves} For $X=x^{1/5}$,
%\begin{align}\label{eq:BV}
%\sum_{
%\scaleto{
%q\leq X
%}{7pt}
%}
%\left|\pi_E(x;d)-\frac{w_E(d)\Li(x)}{d} \right| \ll_{E,\omega}  \frac{x}{(\log x)^3},
%\end{align} 
%where $0<w_E(d)<d$ is a multiplicative function depending on~$E$. 
\end{theorem}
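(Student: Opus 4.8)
Both assertions are established in the cited works of Kowalski~\cite{Kowalski2005} and David--Wu~\cite{DavidWu2012b}, so the plan is to recall their common strategy, which rests on the Galois representations attached to $E$ together with the GRH-conditional effective form of the Chebotarev density theorem. The starting point is that, for a squarefree integer $d$, both conditions under study are Chebotarev conditions inside $G_d=\Gal(\Q(E[d])/\Q)$: one has $E[d](\overline{\Q})\subset E(\F_p)$ exactly when the Frobenius at $p$ acts trivially on $E[d]$, i.e. when $p$ splits completely in $\Q(E[d])$, whereas $d\mid|E(\F_p)|$ holds exactly when, for every prime $\ell\mid d$, the Frobenius at $p$ has $1$ as an eigenvalue on $E[\ell]$. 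When $\gcd(d,M_E)=1$ we may use the isomorphism $G_d\cong\prod_{\ell\mid d}G_\ell$ together with $G_\ell\cong\GL_2(\Z/\ell\Z)$, so the second condition selects a union $C_d$ of conjugacy classes of $G_d$; an elementary matrix count shows that $\GL_2(\F_\ell)$ contains exactly $\ell(\ell^2-2)$ elements admitting $1$ as an eigenvalue, whence $|C_d|/|G_d|=w_E(d)/d$, which is the source of the weight $w_E$ in~\eqref{eq:DW}.

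The next step is to feed this into the GRH-conditional effective Chebotarev density theorem for the Dedekind zeta function of $\Q(E[d])$: for a union $C$ of conjugacy classes of $G_d$ one has
\[
\pi_C\bigl(x,\Q(E[d])/\Q\bigr)=\frac{|C|}{|G_d|}\Li(x)+O\Bigl(\frac{|C|}{|G_d|}\,x^{1/2}\bigl(\log\Disc(\Q(E[d]))+|G_d|\log x\bigr)\Bigr).
\]
Since $|G_d|\ll d^{4}$, and since $\Q(E[d])$ ramifies only at the primes dividing $d\Delta_E$ with bounded exponents, the conductor--discriminant formula yields $\log\Disc(\Q(E[d]))\ll_E|G_d|\log d$; hence the error term above simplifies to $O_E(|C|\,x^{1/2}\log(dx))$.

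For part (1), $|C|=1$ and so the per-$d$ error is $O_E(x^{1/2}\log(dx))$. Summing this against $\varphi(d)$ over $d\le D:=x^{1/4}/(\log x)^2$ gives $O_E(D^{2}x^{1/2}\log x)=O_E(x/(\log x)^3)$, and the tail estimate $\sum_{d>D}\varphi(d)/|G_d|\ll D^{-2}$ shows that replacing the partial sum of main terms by $\bigl(\sum_{d\ge1}\varphi(d)/|G_d|\bigr)\Li(x)$ costs only $O(x^{1/2}(\log x)^{3})$. This gives part (1).

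For part (2) the same backbone applies with $C=C_d$, but now $|C_d|\asymp d^{3}$, so a term-by-term application of effective Chebotarev to the fields $\Q(E[d])$ is far too lossy once $d$ is allowed to range up to $x^{1/5}/(\log x)^4$ against the divisor-type weight $2^{\omega(d)}\mu(d)^2$; controlling this is the main obstacle. Following~\cite[\S4]{DavidWu2012b}, one instead exploits the multiplicative decomposition $G_d\cong\prod_{\ell\mid d}G_\ell$ to factor the relevant Artin $L$-functions into factors of conductor $\ll_E\ell^{O(1)}$, and combines the resulting estimates with a sieve over the modulus $d$. The hypotheses that every prime factor of $d$ lie in the window $(M_E,x^{1/10}/(\log x)^4]$ (so that $G_\ell=\GL_2(\F_\ell)$ and the conductors remain small) and that $d\le x^{1/5}/(\log x)^4$ are precisely calibrated so that the summation of all these error terms stays $\ll_E x/(\log x)^3$.
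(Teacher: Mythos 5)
The paper offers no proof of this statement: both parts are imported verbatim from \cite[Prop.~3.8]{Kowalski2005} and \cite[(4.7)]{DavidWu2012b}, so there is no internal argument to compare yours against, and your job here is really to reconstruct the cited proofs. Your reconstruction is correct in its main lines: both conditions are Chebotarev conditions in $G_d$, the matrix count $\ell(\ell^2-2)$ for elements of $\GL_2(\F_\ell)$ with eigenvalue $1$ does give $|C_d|/|G_d|=w_E(d)/d$, and your treatment of part (1) (class of size $1$, per-modulus error $O_E(x^{1/2}\log(dx))$, summation against $\varphi(d)$ up to $x^{1/4}/(\log x)^2$, tail bound $\sum_{d>D}\varphi(d)/|G_d|\ll D^{-2}$ via $|G_d|\gg_E d^4$) matches Kowalski's argument. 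The one point to correct is your discussion of part (2): the term-by-term application of effective Chebotarev is ``far too lossy'' only if one uses the crude error term proportional to $|C_d|\asymp d^3$. The refined GRH-conditional form with the square-root saving in the class size --- obtained precisely by factoring through the Artin $L$-functions of $\Q(E[d])$ and the decomposition $G_d\cong\prod_{\ell\mid d}G_\ell$ --- is what yields the pointwise estimate \eqref{eq:DW} with error $O_E(d^{3/2}x^{1/2}\log(dx))$, already recorded in the paper just above the theorem. Once that estimate is in hand, David--Wu's (4.7) \emph{is} a term-by-term summation: with $D=x^{1/5}/(\log x)^4$ one has
\[
\sum_{d\leq D}2^{\omega(d)}\mu(d)^2\,d^{3/2}x^{1/2}\log(dx)\ll x^{1/2}(\log x)\,D^{5/2}\log D\ll \frac{x}{(\log x)^{8}}\,,
\]
comfortably within the stated bound. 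So the multiplicativity of the Galois image enters in proving \eqref{eq:DW} itself, not in a separate sieve replacing the summation; the constraint $P^-(d)>M_E$ guarantees $G_d=\GL_2(\Z/d\Z)$ and the validity of the product formula for $w_E$, while the upper restriction on the prime factors of $d$ is calibrated for the sieve application in \cite{DavidWu2012b} rather than being needed for the summation above.
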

Regarding point 2 of Theorem~\ref{th:BVnonCM}, we follow Pollack who studied the elliptic curve analogue of the Titchmarsh divisor problem (\cite[p.185]{Pollack2016}): 
\begin{center}
\textit{
``We pretend that this approximation
is valid for $d$ up to size $\approx x$, at least on average''}.
\end{center}
This gives rise to the following hypothesis inspired by Conjecture~\ref{conjecture:EH}.

\begin{hypothesis}\label{hypothesis}
Let $E/\Q$ be a non CM elliptic curve. Then one has:
 \[
 \sum_{d\leq X}
\Big|\pi_E(x;d)-\frac{w_E(d)}d\frac{x}{\log x} \Big| \ll_{E,\omega}  \frac{x}{(\log x)^\omega}\,,
 \]
for any $X\leq x^{1-\delta}$, $x\geq 2$, for any constant $\omega> 0$, and where one extends $w_E$ to a function on $\N$ satisfying $w_E(mn)=w_E(m)w_E(n)$ for any coprime integers $m,n$ such that either $m$ or $n$ is coprime to $M_E$ (see~\cite[\S 2]{DavidWu2012b}).
\end{hypothesis}

Note that Hypothesis~\ref{hypothesis} allows any exponent $\omega>0$ on the denominator of the upper bound. This mimicks the upper bound appearing in the Elliott--Halberstam conjecture~\ref{conjecture:EH}; moreover we believe that there was no attempt to optimize the exponent $3$ appearing in the upper bounds of Theorem~\ref{th:BVnonCM} in the works of Kowalski and David--Wu. Finally, as in the proof of Theorem~\ref{th:equivalent}, we need an exponent $\omega>6$ to conclude the proof of Theorem~\ref{th:main nonCM}.

Hypothesis~\ref{hypothesis} enables us to prove the following analogue of Theorem~\ref{th:equivalent}.
\begin{theorem}\label{th:main nonCM}
For any $x\geq 2$ and $y\in[1,x]$ we set $u=\frac{\log x}{\log y}$. Assume Hypothesis~\ref{hypothesis} for a non CM elliptic curve $E/\Q$. 
Then, as $x\rightarrow \infty$ and $y$ is such that $u\leq u_0$, for some fixed constant $u_0$, we have 
\[
\psi_E(x,y) \sim \frac{x}{\log x}\rho(u)\,. 
\]

%In particular $\varepsilon(x,y)\rightarrow 0$ uniformly on~$H$. Moreover, we have
%\begin{align*}
%c(E)=\prod_{q\text{ prime}} \frac{1-\frac{w_E(q)}{q}}{1-\frac{1}{q}}.
%\end{align*}
\end{theorem}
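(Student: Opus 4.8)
The plan is to transcribe the proof of Theorem~\ref{th:equivalent}, with $\pi_E(x;d)$ playing the role of the count $\pi_K(x;qc,a')$ of primes in a progression of $\OO_K$, with $w_E(d)/d$ playing the role of $1/\varphi(q)$, and with Hypothesis~\ref{hypothesis} in place of Conjecture~\ref{conjecture:EH}. Since $|E(\F_p)|=p+1-a_p\le(\sqrt p+1)^2\ll x$, the number $|E(\F_p)|$ is $y$-friable if and only if it has no prime factor $\geq y$, so M\"obius inversion gives
\[
\psi_E(x,y)=\sum_{P^-(d)\ge y}\mu(d)\,\pi_E(x;d)+O_E(1),
\]
where $d$ runs over squarefree integers, the sum being finite since $\pi_E(x;d)=0$ once $d>(\sqrt x+1)^2$, and the $O_E(1)$ absorbs the primes of bad reduction. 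In contrast with the CM case there is no inert/split dichotomy to deal with, because $\pi_E(x;d)$ already runs over all primes. I would then fix a small $\delta>0$, split the sum at $x^{1-\delta}$ into $S_1+S_2$, and write $S_1=S_1'+S_1''$ through $\pi_E(x;d)=\tfrac{w_E(d)}{d}\tfrac{x}{\log x}+r_E(x;d)$, exactly as in~\eqref{eq:S1 and S2}--\eqref{eq:S1' and S''}.

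Because $u$ stays bounded we have $y\to\infty$; and since $w_E(\ell)=1+O(1/\ell)$ while any $d$ occurring in $S_1'$ has at most $u$ prime factors, all $\ge y$, one gets $w_E(d)/d=d^{-1}(1+o(1))$ uniformly over that range. Combining this with Lemma~\ref{lemma:LT} applied at $x^{1-\delta}$ and the smoothness of $\rho$,
\[
S_1'=\frac{x}{\log x}\sum_{\substack{d\le x^{1-\delta}\\ P^-(d)\ge y}}\mu(d)\frac{w_E(d)}{d}=\frac{x}{\log x}\big(\rho((1-\delta)u)+o(1)\big)=\frac{x}{\log x}\rho(u)\big(1+O(\delta)+o(1)\big),
\]
which is $\gg x/\log x$ since $\rho(u)$ is bounded below on the relevant range. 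For $S_1''$, the triangle inequality and Hypothesis~\ref{hypothesis} with a fixed $\omega>6$ give $|S_1''|\le\sum_{d\le x^{1-\delta}}|r_E(x;d)|\ll_\omega x/(\log x)^\omega$, which is negligible.

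The tail $S_2=\sum_{x^{1-\delta}<d\ll x,\,P^-(d)\ge y}\mu(d)\pi_E(x;d)$ is handled as in the proof of Theorem~\ref{th:equivalent}. A prime counted in $S_2$ has $|E(\F_p)|$ divisible by some $d>x^{1-\delta}$ with $P^-(d)\ge y$, so $m:=|E(\F_p)|/d$ satisfies $m\ll x^{\delta}$, and therefore
\[
|S_2|\ \le\ \sum_{m\ll x^{\delta}}S(\cA(m);\cP,z),\qquad \cA(m)=\Big\{\tfrac{|E(\F_p)|}{m}\colon p\le x,\ m\mid|E(\F_p)|\Big\},
\]
with $\cP=\{\ell\ \text{prime}:\ell\nmid mM_E\}$ and $z=y^{1-2\delta}$. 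For squarefree $d\mid P(z)$ one has $|\cA(m)_d|=\pi_E(x;dm)$, and taking $X=\tfrac{w_E(m)}{m}\tfrac{x}{\log x}$ with sieve density $w(\ell)=w_E(\ell)$ gives $\tfrac{w(d)}{d}X=\tfrac{w_E(dm)}{dm}\tfrac{x}{\log x}$ by multiplicativity of $w_E$ away from $M_E$. One checks that $0<w_E(\ell)<\ell$ and that $w_E$ has sieve dimension one: since $1-w_E(\ell)/\ell=(1-1/\ell)(1+O(1/\ell^2))$, Mertens' theorem supplies the hypothesis of Lemma~\ref{lem:Rosser-Iwaniec} for some absolute $\alpha\ge2$. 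The linear sieve with $D=z$ then yields $|S_2|\ll S_2'+S_2''$, where
\[
S_2'=\sum_{m\ll x^\delta}\frac{w_E(m)}{m}\frac{x}{\log x}V(z),\qquad S_2''=\sum_{m\ll x^\delta}\sum_{\substack{d<z\\ d\mid P(z)}}\Big|\pi_E(x;dm)-\frac{w_E(dm)}{dm}\frac{x}{\log x}\Big|.
\]
Here $V(z)\ll1/\log y$ by Mertens and $\sum_{m\le T}w_E(m)/m\ll_E\log T$, so $S_2'=O\big(\tfrac{x}{\log x}u\delta\big)$; and writing $n=dm\ll x^{1-\delta}$, with at most $\tau(n)$ representations, Cauchy--Schwarz gives
\[
S_2''\le\Big(\sum_{n\ll x^{1-\delta}}\tau(n)^2|r_E(x;n)|\Big)^{1/2}\Big(\sum_{n\ll x^{1-\delta}}|r_E(x;n)|\Big)^{1/2}\ll\frac{x}{(\log x)^{\omega/2-3}},
\]
using Hypothesis~\ref{hypothesis} for the second factor and the trivial bound $|r_E(x;n)|\ll x/n$ together with $\sum_{n\le T}\tau(n)^2/n\ll(\log T)^4$ for the first; this is negligible for $\omega>6$. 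Collecting $S_1'$, $S_1''$, $S_2$ and letting $\delta\to0$ after $x\to\infty$ yields $\psi_E(x,y)\sim\rho(u)\tfrac{x}{\log x}$.

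I expect the genuine obstacle to be the control of $S_2''$, specifically the second-moment estimate $\sum_{n\ll x^{1-\delta}}\tau(n)^2\pi_E(x;n)\ll x(\log x)^{O(1)}$. In the CM case the corresponding bound came for free from counting lattice points in a residue class of $\OO_K$, but $\pi_E(x;n)$ has no such elementary description, and for $n$ as large as $x^{1-\delta}$ neither the Hasse-interval count nor effective Chebotarev delivers $\pi_E(x;n)\ll x/n$; one therefore needs either a Brun--Titchmarsh-type upper bound for $\pi_E(x;n)$ in this range, or a mild strengthening of Hypothesis~\ref{hypothesis} bounding a $\tau$-weighted remainder, and I would isolate this as the key input. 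The verification of the sieve axioms for $w_E$ is a secondary, routine point, and the remaining steps are faithful copies of Wang's argument as already carried out in the proof of Theorem~\ref{th:equivalent}.
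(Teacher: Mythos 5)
Your proposal follows essentially the same route as the paper, which itself describes its proof as a verbatim translation of the proof of Theorem~\ref{th:equivalent} with $\pi_E(x;d)$, $w_E(d)/d$ and Hypothesis~\ref{hypothesis} replacing $\pi_K(x;qc,a')$, $1/\varphi(q)$ and Conjecture~\ref{conjecture:EH}, and with the same Möbius-inversion splitting, the same treatment of $S_1'$ via Lemma~\ref{lemma:LT} and $w_E(q)=1+O(1/P^-(q))$, and the same Rosser--Iwaniec sieve for $S_2$ with density $w_E$. The second-moment issue for $S_2''$ that you rightly flag (the lack of an elementary bound $\pi_E(x;n)\ll x/n$ for large $n$) is not addressed in the paper either, which simply refers back to the CM argument, so your observation is a fair criticism of the argument as written rather than a divergence from it.
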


\begin{proof}
The argument is a verbatim translation of the proof of Theorem~\ref{th:equivalent}. We fix $\delta>0$.
By M\"obius inversion we split the studied prime counting function:
\[
%\tag{\ref{eq:S1+S2}}
\psi_E(x,y)=\Big|\Big\{p\leq x\colon {\rm gcd}\big(|E(\F_p)|,\prod_{\substack{\ell\text{ prime}\\ \ell>y}}\ell\big)=1\Big\}\Big|=S_1+S_2,
\]
where 
\[
S_1=\sum_{\substack{
q\leq x^{1-\delta}\\
P^-(q)>y}}
\mu(q)\pi_E(x;q) \,,\qquad
S_2=  \sum_{
\substack{x+2\sqrt{x}\geq q> x^{1-\delta}\\
P^-(q)>y}}
\mu(q)\pi_E(x;q)   \,.
\]
Note that the upper bound on $q$ in the index set of $S_2$ comes from the Hasse--Weil bound on $|E(\F_p)|$.
We next decompose $S_1=S_1'+S_1''$ where
\[
S_1'=\frac{x}{\log x}\sum_{
\substack{
q \leq x^{1-\delta}\\
P^-(q)>y}
}
 \frac{\mu(q)w_E(q)}{q}\,,\qquad
S_1''=\sum_{\substack{
q \leq x^{1-\delta}\\
P^-(q)>y}
} 
\mu(q)r(x,q)\,,
\]
and where $r(x,q)=\pi_E(x;q)-\frac{x}{\log x}\frac{w_E(q)}{q}$. Since $u$ remains bounded, we may assume that $M_E<y\leq x$, so that $w_E$ is multiplicative on all integers 
$q$ such that $P^-(q)>y$. If in addition $q$ is squarefree, the formula~\eqref{eq:DW} for $w_E(q)$ is valid and yields $w_E(q)=1+O((P^-(q))^{-1})$. We compute
%Next we write all squarefree integer $d$ as $d=d'q'$ with $d'\mid M_E$ and $\gcd(q',M_E)=1$. Then, we have
\begin{align*}
    \frac{S_1'}{x(\log x)^{-1}}&=\sum_{
\substack{
q \leq x^{1-\delta}\\
P^-(q)>y\,,
\gcd(q,M_E)=1
}}\frac{\mu(q)w_E(q)}{q}
=\sum_{\substack{
q \leq x^{1-\delta}\\
P^-(q)>y}}\frac{\mu(q)}q+O\Big(\sum_{\substack{
q \leq x^{1-\delta}\\
P^-(q)>y}}\frac 1{qP^-(q)}\Big)\\
&\sim \rho(u)+O\Big(\frac 1y\sum_{q\leq x^{1-\delta}}\frac 1q\Big)\qquad (x\to\infty,\, u\ll 1)\,,
\end{align*}
where we have used Lemma~\ref{lemma:LT}. Finally, the fact that $u\ll 1$ implies that the error term is $O((\log y)/y)=o(1)$ as $x\to \infty$. This establishes $S_1'\sim \rho(u)\frac{x}{\log x}$ as $x\to\infty $ with $u\ll 1$. 
%$$\tau_0(E)=\Big(\sum_{d'\mid M_E}\frac{{\color{blue}\mu(d')}w_E(d')}{d'}\Big)\cdot \Big(\prod_{p\mid M_E}(1-\frac{w_E(p)}{p})^{-1}\Big)\,,\qquad
%$$
%\tau_{x,y}(E)=\prod_{\substack{ y<p\leq x^{1-\delta}\\ p\nmid M_E}} (1-\frac{w_E(p)}{p})(1-\frac{1}{p})^{-1} 
%$$ converges to $1$ as $x\to\infty$ and $u\ll 1$ because $(1-\frac{w_E(p)}{p})(1-\frac 1p)^{-1}=1+O(\frac{1}{p^2})$. 

%\begin{align*}
%\frac{S_1'}{\Li(x)}&=\Big(\sum_{d'\mid M_E}\frac{{\color{blue}\mu(d')}w_E(d')}{d'}\Big)\Bigg(\sum_{
%\substack{
%q' \leq x^{1-\delta}\\
%P^-(q')>y\,,
%\gcd(q',M_E)=1
%}}\frac{\mu(q')w_E(q')}{q'}\Bigg)\\
%&=\Big(\sum_{d'\mid M_E}\frac{{\color{blue}\mu(d')}w_E(d')}{d'}\Big)
%\Bigg(\prod_{\substack{
%y<p\leq x^{1-\delta}\\
%p\nmid M_E
%}
%}(1-\frac{w_E(p)}{p})\Bigg)\\
%&=\tau_0(E) \prod_{
%y<p\leq x^{1-\delta}
%}(1-\frac{w_E(p)}{p})= \tau(E) \prod_{
%y<p\leq x^{1-\delta}
%}(1-\frac{1}{p})
%\sim \tau(E) \rho(u)\,,
%\end{align*}

%The sum $S_1'$ is identical to Theorem~\ref{th:equivalent} so we can use Equation~\eqref{eq:S1'}. 
As in the proof of Theorem~\ref{th:equivalent} (step $1''$), we show that $S_1''=O(x/(\log x)^\omega)$ by virtue of Hypothesis~\ref{hypothesis}. In particular, for bounded $u$, one has $S_1''=o(\frac{x}{\log x}\rho(u))$.

We turn to the evaluation of $S_2$. As in the proof of Theorem~\ref{th:equivalent}, we use Lemma~\ref{lem:Rosser-Iwaniec} to obtain the following upper bound: $|S_2|\ll S_2'+S_2''$. Here we define
\[
S_2'=\sum_{
m\leq x^\delta}
 \frac{x}{\log x}\frac{w_E(m)}{m} \prod_{
\substack{
p\in\cP,\,p<z\\
p \nmid mM_E}
}  \left( 1-\frac{w_E(p)}{p} \right)\,,\qquad
S_2''=\sum_{
m\leq x^\delta
}  \sum_{\substack{
d<D \\
d\mid P(z)
} }  |r(x,md)|\,,
\]
where $z=D=1-2\delta$, the parameter $ca'$ has to be replaced by $M_E$ to define the set of primes $\cP$ and where, for $p\in\cP$, we choose $w(p)=w_E(p)$ (which satisfies the hypotheses of Lemma~\ref{lem:Rosser-Iwaniec} as shown in~\cite[Proof of Lemma 4.1]{DavidWu2012b}). We denote by $P(z)$ the product of primes in $\cP$ that are less than $z$. The fact that $(1-\frac{w_E(p)}{p})/(1-\frac{1}{p})=1+O(\frac{1}{p^2})$ for $p\in\cP$ implies that the method used in the proof of Theorem~\ref{th:equivalent} to handle the contribution of $S_2'$ also yields in the present case\footnote{Alternatively, one could appeal to~\cite[(4.3), (4.9)]{DavidWu2012b} to estimate the inner product over primes in the upper bound for $S_2'$.} that $S_2'=O(\delta \frac{x}{\log x}u)$.

Finally, to obtain the bound for $S_2''=o(\frac{x}{\log x})$ we argue as in the proof of Theorem~\ref{th:equivalent}, invoking Hypothesis~\ref{hypothesis} for a fixed $\omega>6$. We conclude by letting $\delta\to 0$.

%For $S_2$ we apply Lemma~\ref{lem:Rosser-Iwaniec} with $D=y^{1-\delta}$ and $z=D$. Then we can write $S_2=S_2'+S_2''$, where
%\begin{align*}
%\begin{array}{ccl}
%S_2'&\ll& \left(\sum_{
%\scaleto{
%m\leq x^\delta
%}{7pt}
%} \frac{\pi(x)}{\varphi(m)} \prod_{
%\scaleto{
%\begin{array}{c}
%p<z\\
%p \nmid m
%\end{array}}{15pt}
%}  \left( 1-\frac{w_E(p)}{p} \right)\right)
%\\
%S_2''&\ll&\sum_{
%\scaleto{
%m\leq x^\delta
%}{7pt}}  \sum_{
%\scaleto{
%\begin{array}{c}
%d<D \\
%P^+(d) <z\\
%\gcd(d,cm)=1
%\end{array}
%}{15pt}}   |r_E(md)|,
%\end{array}
%\end{align*} 

%As in step $1'$ of the proof of Theorem~\ref{th:main} one can show that 
%\begin{align*}
%S_1'=\Li(x)\rho(u)\left(1+ O(\frac{\delta(x)}{u})+O(\frac{\log_2x}{\log y})\right),
%\end{align*}
% whose main term is the one announced for $\psi_E(x,y)$ and whose errors is hidden in the $o(\varepsilon(x,y))$ term of the theorem's statement.  
%Again, Step $2''$ can be copied mutatis mutandis to prove that $S_2''=o(\varepsilon(x,y))$. 
%Finally, Equation~\eqref{eq:DW} implies that $(1-\frac{w_E(p)}{p})/(1-\frac{1}{p})=1+O(\frac{1}{p^2})$, so, up to a multiplicative constant, $S_2''$ has the same value as in Equation~\eqref{eq:S2'}, so $S_2' = %o(\varepsilon(x,y))$.

\end{proof}

\appendix
\section{Alternative proof of Corollary~\ref{cor:main}}\label{appendix:main}

%\textcolor{red}{On efface cette annexe ?}

%The following result is similar to Corollary 1.8~in~\cite{LiuWuXi2019}.
%\begin{theorem}[uniform version of Themore~\ref{th:main}]\label{th:parametric}
%Assume Conjecture~\ref{conjecture:parametricEH} and set $\delta(x)= (\log_3 x)/(\log_2 x)$, which satisfies the conditions of Equation~\ref{eq:conjecture 1}. Let $E$ be a elliptic curve with complex multiplication (CM) and rational coefficients.  Set  
%\begin{align*}
%H =\left\{(x,y)\in \R_+^2 :  x^{\frac{\log_3x}{\log_2x}} \leq y \leq x\right\}.
%\end{align*}
%For any $y\in[1,x]$ we set $u=\frac{\log x}{\log y}$.
%Then there exists a constant $\alpha(E)$ such that, as $x\rightarrow \infty$, we have the asymptotic development
%\begin{align*}
%%\frac{\Psi_E(x,y)}{\Psi_E(x,\infty)}=\rho(u)(1+\varepsilon(x,y)) 
%\frac{\psi_E(x,y)}{\Li(x)}=\frac{\psi(xe^{\alpha(E)},y)}{\Li(xe^{\alpha(E)})}\left(1+O\left(\frac{\log(u+1)^2}{(\log y)^2}\right)\right),
%\end{align*}
%uniformly for $(x,y)\in H$, where $c(E)$ is a constant.
%\end{theorem}

We prove the following form of Corollary~\ref{cor:main}, using the same method as for Theorem~\ref{th:equivalent} (\emph{i.e} an adaptation of Wang's approach~\cite{Wang2018}). 

\begin{proposition}\label{prop:thmain2}
Let $K$ be $\Q$ or an imaginary quadratic field of class number~$1$. Let $a,c\in\OO_K$ be fixed and let $\mu\in \OO_K^\times$. Let $C\in (0,\frac 12)$ and $\beta\in (2C,1)$. Assuming Conjecture~\ref{conjecture:parametricEH}, we have 
\[
\psi_K(x,y;c,a,\mu)=\frac{x}{\varphi(c)\log x}\rho(u)  \left(1+O\left( \delta(x)^{1-\beta} u\log u) \right)\right)\qquad (x\to \infty) 
\]
uniformly for $2\leq u:=\log x/\log y\leq C\log_2x/\log_3 x$ and $\delta(x)\in (\frac{\log_2 x}{\eta \log x},(\log x)^{-2C/\beta})$.
\end{proposition}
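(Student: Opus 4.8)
The plan is to reproduce the proof of Theorem~\ref{th:equivalent} (Wang's strategy) almost verbatim, the new feature being that $u$ is now allowed to grow, up to $C\log_2x/\log_3x$, and that $\delta=\delta(x)$ plays the double role of splitting threshold and of exponent in Conjecture~\ref{conjecture:parametricEH}, so that every error term must be made explicit in $u$ and $\delta$. We may assume $u\to\infty$, since for bounded $u$ the argument of Theorem~\ref{th:equivalent} applies with any fixed small splitting parameter, which is admissible in Conjecture~\ref{conjecture:parametricEH} because $\delta(x)\to0$. As in that proof, M\"obius inversion (valid once $y>\|c\|$, which holds for $x$ large since $\log y=\log x/u\to\infty$) writes $\psi_K(x,y;c,a,\mu)=\sum_{\|q\|\le x+1,\,P^-(\|q\|)>y}\mu(q)\,\pi_K(x;qc,a')$; I would split this as $S_1+S_2$ at $\|q\|\le x^{1-\delta}$, taking the splitting parameter $\delta$ to be a fixed multiple of $\delta(x)$ large enough that $\|c\|\,x^{1-\delta}\le x^{1-\delta(x)}$, and decompose $S_1=S_1'+S_1''$ and $|S_2|\ll S_2'+S_2''$ exactly as there.

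For $S_1'$ I would apply Lemma~\ref{lemma:LT}, which is available because $\log y=\log x/u\ge\log x\,\log_3x/(C\log_2x)$ dwarfs $(\log x)^{2/5+\epsilon}$; after replacing $1/\varphi(q)$ by $(1+O(u/y))/\|q\|$ for $q$ with $P^-(\|q\|)>y$, it yields $S_1'=\frac{x}{\varphi(c)\log x}\,(\rho((1-\delta)u)+O_\epsilon(\exp\{-(\log y)^{3/5-\epsilon}\}))\,(1+O(u/y))$. Writing $g:=-\log\rho$, formula~\eqref{eq:HT} gives $g'(v)\asymp\log v$ on the relevant range, hence $\rho((1-\delta)u)=\rho(u)(1+O(\delta u\log u))$; the same formula together with $u\le C\log_2x/\log_3x$ gives $g(u)\le C\log_2x\,(1+o(1))$, i.e. $\rho(u)\ge(\log x)^{-C-\epsilon}$, which shows both that the parasitic errors $\exp\{-(\log y)^{3/5-\epsilon}\}$ and $u/y$ are negligible against $\rho(u)$ and that $\delta u\log u\le C\log_2x\,(\log x)^{-2C/\beta}=o(1)$. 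For $S_1''$ I would use Conjecture~\ref{conjecture:parametricEH} with an exponent $\omega$ as large as needed: since $\|c\|\,x^{1-\delta}\le x^{1-\delta(x)}$, one has $S_1''\le\sum_{\|Q\|\le x^{1-\delta(x)}}|r(x,Q,a')|\ll_\omega x/(\log x)^\omega$.

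For $S_2$ I would run the Rosser--Iwaniec linear sieve (Lemma~\ref{lem:Rosser-Iwaniec}) with the same $\cA(m,c,a')$, $\cP$ and weight $w$ as in Theorem~\ref{th:equivalent}, and with $z=D=y^{1-2\delta}$; the Mertens-type hypothesis still holds by~\eqref{eq:bracket}. Step $2'$ is identical except that $u$ can no longer be absorbed into constants: $\prod_{\|p\|<z}(1-1/(\|p\|-1))\ll1/\log z=u/((1-2\delta)\log x)$ and the auxiliary divisor sum is $\ll\delta\log x$, so $S_2'\ll\frac{x}{\log x}u\delta$. Step $2''$ is the same Cauchy--Schwarz computation: with $n=md$ of norm $\le x^{1-\delta}$, Conjecture~\ref{conjecture:parametricEH} gives $S_{2,\dagger}''\ll x/(\log x)^\omega$, while the trivial bound and the mean value of $\tau^2$ give $S_{2,\star}''\ll x(\log x)^4$, hence $S_2''\ll x(\log x)^{2-\omega/2}$.

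Finally I would collect the four pieces: $S_1'=\frac{x}{\varphi(c)\log x}\rho(u)(1+O(\delta u\log u))$; the bounds $S_1'',S_2''\ll x/(\log x)^{\omega-2}$ are negligible once $\omega$ is large enough (in contrast to Theorem~\ref{th:equivalent} one wants $\omega$ slightly larger than $6$, say $\omega\ge8$); and $|S_2|\le S_2'+S_2''$ contributes a relative error $\ll u\delta/\rho(u)$. The one substantive estimate --- and the only place where $C<\frac12$, $2C<\beta<1$ and $\delta(x)<(\log x)^{-2C/\beta}$ are genuinely used --- is $\frac{u\delta}{\rho(u)}=u\,\delta^{1-\beta}\cdot\frac{\delta^\beta}{\rho(u)}\ll u\,\delta^{1-\beta}$, which follows from $\delta^\beta\le(\log x)^{-2C}$ and $\rho(u)\ge(\log x)^{-C-\epsilon}$, so that $\delta^\beta/\rho(u)\le(\log x)^{-C+\epsilon}=o(1)$; the constraint $C<\frac12$ is precisely what makes the interval $2C<\beta<1$ non-empty, and $\beta>2C$ is what keeps the admissible range of $\delta(x)$ non-empty given its lower bound $\log_2x/(\eta\log x)$ in Conjecture~\ref{conjecture:parametricEH}. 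Since moreover $\delta^\beta\log u=o(1)$, the $S_1'$ error $\delta u\log u$ is itself dominated by $u\delta^{1-\beta}$, and putting everything together gives $\psi_K(x,y;c,a,\mu)=\frac{x}{\varphi(c)\log x}\rho(u)(1+O(\delta^{1-\beta}u\log u))$. I expect the main obstacle to be precisely this bookkeeping: checking that the explicit power-of-$\log x$ lower bound for $\rho(u)$ coming from~\eqref{eq:HT} is strong enough, against the prescribed decay of $\delta(x)$, to fit the sieve error $S_2'$ inside $O(\delta^{1-\beta}u\log u)$, while $\delta(x)$ stays large enough for Conjecture~\ref{conjecture:parametricEH} to remain applicable with a large exponent.
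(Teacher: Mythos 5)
Your proposal is correct and follows essentially the same route as the paper's Appendix A proof: the same Möbius-inversion decomposition $S_1'+S_1''+S_2'+S_2''$, Lemma~\ref{lemma:LT} plus~\eqref{eq:HT} for $S_1'$, the parametric EH conjecture with large $\omega$ for $S_1''$ and $S_2''$, the Rosser--Iwaniec sieve for $S_2'$, and the identical key inequality $\delta^\beta/\rho(u)\ll 1$ (via $\rho(u)\gg(\log x)^{-2C(1+o(1))}$ and $\delta^\beta\le(\log x)^{-2C}$) to convert the sieve bound $u\delta$ into $O(\delta^{1-\beta}u\log u)$. No gaps worth noting.
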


\begin{proof}%[Alternative proof of Theorem~\ref{cor:main}] 
%All the quantities negligible with respect to $\Li(x)\rho(u)\varepsilon(x,y)$ will be called negligible.

%The arguments preceding Equation~\eqref{eq:from E to K} in the proof of Theorem~\ref{th:equivalent} prove that it is enough to prove and equivalent and an error term for $\psi_K(x,y;c,a;\mu)$ for three constants $c,a,\mu$ as in Lemma~\ref{lemma:CM}.

Let $\varepsilon(x,y)=\delta u \log u$.  Recall that $\delta(x)$ satisfies the assumptions of Conjecture~\ref{conjecture:parametricEH}.
%and let $y$ be such that $u=O\left(\frac{\log_2x}{\log_3x}\right)$. This is such that 
%Note that 
%\begin{align}
%\frac{\Li(x)}{(\log x)^{1/2}}=\psi_E(x,y)\cdot o(\varepsilon(x,y))
%\end{align}
%thanks to the equivalent of $\psi_E(x,y)$ proven in Theorem~\ref{th:equivalent}. Hence, in the following the terms $O\left(\frac{\Li(x)}{(\log x)^{1/2}}\right)$ are negligible. 

We follow through the steps of the proof of Theorem~\ref{th:equivalent}. We split $\psi_K(x,y;c,a,\mu)=S_1+S_2$ with $S_1$ and $S_2$ as in Equation~\eqref{eq:S1 and S2} (up to replacing $\delta$ by $\delta(x)$). We write $S_1=S_1'+S_1''$ as in Equation~\eqref{eq:S1' and S''} and we upper bound $|S_2|\leq S_2'+S_2''$ as in Equation~\eqref{eq:S2' and S2''}.

{\bf Step $1'$.} We show that $S_1'=\frac{x}{\varphi(c)\log x}\rho(u)(1+ o(\varepsilon(x,y)))$. By Lemma \ref{lemma:LT} we have   
\begin{equation}\label{eq:asympS1p}
S_1'= \frac{x}{\varphi(c)\log x} \sum_{ \substack{ q\in \OO_K,\, \norm{q} \leq x^{1-\delta}\\P^-(\norm{q})>y } } \frac{\mu(q)}{\varphi(q)}
=\frac{x}{\varphi(c)\log x}\Big( \rho\big(\frac{\log(x^{1-\delta(x)})}{\log y}\big) + O(\exp(-(\log y)^{\frac35-\epsilon}))\Big)\,.
\end{equation}
Note that $u\ll \log_2x/\log_3 x$ implies that $\log y\gg \log x\log_3 x/\log_2 x$ (so that $y$ lies in the range of validity for Lemma~\ref{lemma:LT}) 
%and $u\log u\geq u(\log_2 x-\log_2 y) \geq \log(\log_2 x/\log_3 x)$ since $u\geq 1$.
%moreover, since $\delta(x)\gg \log_2x/\log x$, one has $\delta(x)u\log u\gg \log(\log_2 x/\log_3 x)$. 
In particular the error term in~\eqref{eq:asympS1p} is $o( \delta(x)\rho(u)u\log u)$. Indeed one has
\begin{align*}
\frac{{\rm e}^{-(\log y)^{\frac35-\epsilon}}}{\rho(u)}&\ll
{\rm e}^{-(\log y)^{\frac35-\epsilon}+2u\log u}\\
& \ll\exp\Big(-\big(\log x\frac{\log_3 x}{\log_2 x}\big)^{\frac 35-\epsilon}+
O\big(\log_2 x(1-\frac{\log_4 x}{\log_3x})\big)\Big)
\ll 1
\end{align*}
%\exp\Big(\big(\frac{\log x}{\log_2 x}\big)^{\frac 35}\big(-\log_3(x)^{\frac 35}+o(1)\big)\Big)=O(1)\qquad (x\to\infty)\,,
where we have also used the lower bound $\rho(u)\gg \exp(-2u\log u)$ coming from~\eqref{eq:HT}. 
%By Lemma \ref{lemma:LT} we have   
%\begin{equation}\label{eq:S1'uniform}
%\begin{array}{lcl}
%S_1'&=&\Li(x)\sum_{
%\scaleto{
%\begin{array}{c}
%q> x^{1-\delta(x)}\\
%P^-(q)>y
%\end{array}
%}{15pt}
% }\frac{\mu(q)}{\varphi(q)}\\
%&=& \frac{\Li(x)}{\varphi(c)} \rho\left(\frac{\log(x^{1-\delta(x)})}{\log y}\right) + O(-(\log y)^{\frac35-\epsilon}))\\

%&=& \frac{\Li(x)}{\varphi(c)} \rho\left(\frac{\log(x^{1-\delta(x)})}{\log y}\right)(1+o(\delta u\log u))\\
%\end{array}
%\end{equation}
Finally, using~\eqref{eq:HT} again, we compute (writing $\delta$ instead of $\delta(x)$, for simplicity):
\[
\log\Big( \rho\big(\frac{\log(x^{1-\delta})}{\log y}\big)\Big)-\log (\rho(u))  ) = -u(1-\delta)(\log u+\log (1-\delta)) + u\log u +O(u\log_2 u)\,.
%&\ll & \delta u \log u
\]
Hence $\rho\left(\frac{\log(x^{1-\delta(x)})}{\log y}\right)/\rho(u)=1+O(\delta(x) u\log u)$ which proves the stated estimate for $S_1'$.

{\bf Step $1''$.} We show that $S_1''=o(\frac{x}{\log x}\rho(u)\varepsilon(x,y))$. To do so
we apply Conjecture~\ref{conjecture:parametricEH} in the same way we applied Conjecture~\ref{conjecture:EH} in the proof of Theorem~\ref{th:equivalent}. We fix $\omega>2C-2$ where $C$ is an absolute constant such that we work under the restriction $u\leq C\log_2x/\log_3 x$. The exact same argument as the one used to obtain~\eqref{eq:S1''} yields $S_1''=O(x(\log x)^{-1}/(\log x)^{\omega-1})$. The point is that Conjecture~\ref{conjecture:parametricEH}
 asserts that the implied constant in this upper bound is uniform in $u$. We now compute, using again the bound $\rho(u)\gg \exp(-2u\log u)$ and the fact that $u\log u$ grows as $x\to \infty$,
 \begin{align*}
 \frac{S_1''}{x(\log x)^{-1}\rho(u)\varepsilon(x,y)}&\ll \frac {{\rm e}^{2u\log u}}{(\log x)^{\omega-1}\delta(x)u\log u}\ll \frac{(\log _2 x){\rm e}^{2u\log u}}{(\log x)^{\omega-2}u\log u}\\
 &\ll \frac{(\log _2 x)\exp(2C\log_2 x)}{(\log x)^{\omega-2}}\ll \frac{\log_2 x}{(\log x)^{\omega-2C-2}}  \,.
 \end{align*}
%We show that $S_1''/\psi_E(x,y) =o(\varepsilon(x,y))$.   
%In the case of $S_1''$ we use the trivial inequality $\mu(q)\leq 1$ and use Conjecture~\ref{conjecture:parametricEH} with $0<\tilde{\delta}(x)<\delta(x)$ so that $cq\leq x^{1-\delta(x)}$ whenever $\norm{q}\leq x^{1-\tilde{\delta(x)}}$.
%\begin{equation}\label{eq:S1''uniform}
%\begin{array}{ccl}
%S_1''&\leq&  \sum_{\norm{m}\leq x^{1-\tilde{\delta}} } \left| \pi_K(x;mc,a')-\frac{\Li(x)}{ \varphi(q)}      \right| \\
%   &\ll & \Li(x)/\log(x)^{\omega},\\
%\end{array}
%\end{equation}
%where $\omega>0$ is an absolute constant. 

%In steps $2'$ and $2''$ we prove that $S_2'/\psi_E(x,y)=O(u\delta(x))$ and respectively $S_2''/\psi_E(x,y)=O(\frac{1}{\log x})$, both of which are negligible compared to $\delta u\log u$.

\medskip
\medskip

{\bf Step $2'$.} We prove that $S_2'=O\big(\frac{x}{\log x}\frac{\delta(x)}{1-2\delta(x)}u\big)$ in the exact same way as for~\eqref{eq:S2'} (where the implied constant is absolute). From the bound $1-2\delta(x)\geq 1-2\eta\gg 1$ (recall Conjecture~\eqref{conjecture:parametricEH}),
%and the fact that $\log u\geq \log(\log_2 x/\log_3 x)$
we conclude that
$$
\frac{\delta(x)^\beta S_2'}{\rho(u)x(\log x)^{-1}\varepsilon(x,y)}\ll \frac{\delta(x)^\beta}{\rho(u)}\ll
\delta(x)^\beta{\rm e}^{2C\log_2x}\ll 1\,. 
$$
Therefore we have $S_2'=O(\frac{x}{\log x}\delta(x)^{1-\beta}\rho(u)u\log u)$.
%As in the proof of Theorem~\ref{th:equivalent}, Lemma~\ref{lem:Rosser-Iwaniec} allows to write as in Equation~\eqref{eq:S2'}:
%\begin{align*}
%\begin{array}{ccl}
%S_2'
%&\ll&     \Li(x) \prod_{\norm{p}<z}\left(1-\frac{1}{\norm{p}} \right) \prod_{\norm{p}<x^{\delta(x)}}\left(1-\frac{1}{\norm{p}} \right)^{-1}  \\
%&\ll &  \Li(x)\frac{\log(x^{\delta(x)})}{\log z}\\
%&\ll & \Li(x) \delta(x) u = o(\Li(x)\rho(u)\varepsilon(x,y)).
%\end{array}
%\end{align*}

%The terms of $S_2'$ are all positive, so we get an upper bound if we delete the condition $(a,m)=1$ under the $\sigma$ sign. The parameters $z$ and $D$ are set as follows $z=D=y^{1-2\delta}$. Then we have
%\begin{align} 
%\begin{array}{ccl}
%S_2' &\ll & \Li(x) \prod_{\norm{p}<z}\left(1-\frac{1}{\norm{p}-1}\right)  \sum_{\scaleto{m\leq x^\delta}{7pt}} \frac{f(m)}{\norm{m}} 
%\end{array},
%\end{align}
%where $f$ is the multiplicative function such that, for all $p\in \OO_K$, $f(p)=1+O(\frac{1}{\norm{p}})$. 
%We replace the sum $\sum\frac{f(m)}{\norm{m}}$ by a partial Euler product and obtain
%\begin{align}
%\begin{array}{ccl}
%S_2'&\ll&     \pi(x) %\prod_{\norm{p}<z}\left(1-\frac{1}{\norm{p}} \right) \prod_{\norm{p}<x^\delta}\left(1-\frac{1}{\norm{p}} \right)^{-1}  \\
%&\ll &  \frac{x}{\log x} \delta(x) s(x)u.\\
%\end{array}
%\end{align}

\medskip
\medskip

{\bf Step $2''$.} We prove that $S_2''=o(\frac{x}{\log x}\rho(u)\varepsilon(x,y))$.
%/\psi_E(x,y)=O(\frac{1}{\log x})$ (negligible compared to $\delta u\log u$). 
As in the proof of~\eqref{eq:S2''}, we have 
$$
S_2''=O\Big(\frac{x(\log x)^{-1}}{(\log x)^{\frac \omega 2 -3}}\Big)
$$ 
with an implied constant depending only on $\omega$. We then argue as in Step $1''$ above by requiring this time that $\omega>4C+4$. This concludes step $2''$ and the proof of Proposition~\ref{prop:thmain2}.
%note that if $\norm{m}\leq x^\delta$ and $\norm{d}\leq D=y^{1-2\delta}\leq x^{1-2\delta}$ then $q:=md$ is such that $\norm{q}\leq x^{1-\delta}$. Hence %\begin{align} 
%\begin{array}{rcl} 
%S_2'' & \leq & \sum_{\norm{q}\leq x^{1-\delta(x)}} \sum_{d\mid q} |r(\cA;d,a)|  \\
%      & \leq & \sum_{\norm{q}\leq x^{1-\delta(x)}} \tau(q) |\pi_K(x;q,a)- \frac{\pi_K(x)}{\varphi(q)} |\\
%      & \leq & (S_{2,\star}''S_{2,\dagger}'')^\frac12,
%\end{array}
%\end{align}
%where 
%\begin{align*}
%\begin{array}{rcl}
%S_{2,\dagger}''&=& \sum_{\norm{q}\leq x^{1-\delta(x)}} |\pi_K(x;q,a)- \frac{\pi_K(x)}{\varphi(q)} |\\
%S_{2,\star}''&=&  \sum_{\norm{q}\leq x^{1-\delta(x)}} \tau(q)^2 |\pi_K(x;q,a) - \frac{\pi_K(x)}{\varphi(q)} |.
%\end{array}
%\end{align*}
%We recognize the expression of the Conjecture~\ref{conjecture:parametricEH}, so 
%\begin{align}
%S_{2,\dagger}''\leq x/(\log x)^\omega.
%\end{align}
%We upper bound $S_{2,\star}''$ using the crude inequality 
%\begin{align*}
% \left| \pi_K(x;q,a)- \frac{\pi_K(x)}{\varphi(q)} \right|\leq \pi_K(x;q,a) + \frac{\pi_K(x)}{\varphi(q)}     \pi_K(x;q,a)  \ll \frac{x}{\norm{q}},
%\end{align*}
%so 
%\begin{align}\label{eq:S2''uniform}
%\begin{array}{rcl}
%S_2''& \leq& ( x/(\log x)^\omega )^\frac12 \left(x \sum_{\norm{q}\leq x^{1-\delta(x)}} \frac{\tau(q)^2}{\norm{q}}\right)^\frac12\\
%   & \ll & \pi_K(x)/ (\log x)^{\omega/2-4}\\
%   &=& o(\Li(x)\rho(u)\varepsilon(x,y)).
%\end{array}
%\end{align}
\end{proof}

%\section{A lower and upper bound for $S_2'$}

%We have that $\pi_E(x,y)=S_1'+\overline{S_2'}+o(S_2')$ where $\overline{S_2'}$ is an expression, derived from $S_2'$, negligible compared to $S_1'$ but depends on $E$.

%Let $F$ and $f$ be the two functions which satisfy the following system:
%\begin{align*}
%\begin{array}{ccl}
%sF(s)=2e^\gamma& \text{if}& 0<s\leq 3,\\
%sf(s)=0 & \text{if}& 0<s\leq 2,\\
%(sF(s))'=f(s) & \text{if}& 0<s\leq 3,\\
%(sf(s))'=F(s) & \text{if}& 0<s\leq 2,
%\end{array}
%\end{align*}
%\begin{theorem}[\cite{RosserIwaniec1980}]
%With the notations and hypothesis of Theorem~\ref{lem:Rosser-Iwaniec}, one has, for any $\epsilon$ and $s$,
%\begin{align*}
%\begin{array}{lll}
%    S(\cA,\cP,z)&\leq& V(z) (F(s)+\epsilon) + \sum_{\norm{d}<D, d\mid P(z)} |r(\cA,d)|\\
%    S(\cA,\cP,z)& \geq & V(z) (f(s)-\epsilon) - \sum_{\norm{d}<D, d\mid P(z)} |r(\cA,d)|.
%\end{array}
%\end{align*}
%\end{theorem}

%\begin{remark}
% $V(z)\log z$ is a constant which depends on $E$.
%\end{remark}

%\begin{theorem}
%TODO (check statement) Let $E$ be an elliptic curve defined over $\Q$ with $CM$. Let $f\in\Z[x]$ be an irreducible polynomials of primitive discriminant whose quotient field is $\Q(\End(E))$. Then
%\begin{align*}
%\alpha(E)=\alpha(f).
%\end{align*} 
%\end{theorem}

\section{Numerical illustration}\label{sec:numerical}

%\begin{table}[ht]
%\begin{center}
%$
%\begin{array}{|c||c|c|c|c|c|c|c|c|c|}
%\hline
%D                                      & -1   & -2   & -3   & -7   & -11  & -19  & -43  & -67  & -163 \\
%\hline
%\Res_1\zeta_{\Q(D)}                    & 0.79 & 1.11 & 0.60 & 1.19 & 0.95 & 0.73 & 0.48 & 0.38 & 0.25 \\
%\hline
%L'/L(1,\chi_K)               & 1.60 & 1.34 & 1.03 & 1.61 & 0.57 & 0.59 & 0.93 & 1.30 & 2.87  \\
%\hline
%\frac{\psi_{\Q(\sqrt{D})}(2^{28},2^7)}{\psi_{\Q(\sqrt{D})}(2^{28},\infty)}
                                %       & 6.7E^{-3} & 9.8E^{-3} & 8.0 E^{-3} & 7.7 E^{-3} & 12.8E^{-3} & 8.8 E^{-3}& 13.9E^{-3}& 5.6E^{-3} & 2.0E^{-3}\\ 
%\hline
%\end{array}
%$
%\end{center}
%\end{table}

We consider the examples $E_7\colon y^2 + x y = x^{3} - x^{2} - 2 x - 1$ and $E_{11}\colon y^2 + y = x^{3} - x^{2} - 7 x + 10 $ which have endomorphism rings included in $\Q(\sqrt{-7})$ and $\Q(\sqrt{-11})$, respectively.

Our numerical experiment (see Figure~$1$) can be seen as a type of \emph{Chebyshev race}\footnote{In an 1853 letter, Chebyshev observes that the count of primes up to $x$ that are $3$ modulo $4$, almost always exceeds that of primes that are $1$ modulo $4$. Modern instances of what is now called a ``prime number race'' have been extensively studied in the recent years.} between $E_7$ and $E_{11}$: we compare $\psi_{E_7}(x,2^7)$ and $\psi_{E_{11}}(x,2^7)$ for various values of $x$. The data shows that $E_7$ is ``always ahead'', in other words, $E_7$ is  more ECM-friendly than $E_{11}$ for these values of $x$ and $y$. We repeat this Chebyshev race for $y=2^{25}$ and obtain the same conclusion. This suggests the following conjecture: $E_{7}$ is more ECM-friendly than $E_{11}$ uniformly for $x$ and $y$ when $y$ grows with $x$ and is not too large compared to~$x$.

In Figure~$2$ we search for an accurate expression for the error term in the asymptotic expansion of $ \psi_E(x,y)/\psi_E(x,\infty)$. First we plot the expression which is given by Scourfield's Theorem~\ref{th:Scourfield} $$\psi_K(x,y)/\psi_K(x,\infty)=\rho(u)\left(1-\frac{\log(u+1)}{\log y}\gamma_K(1+o(1))\right).$$ The data corroborates the accuracy of this theoretical value. We also plot the expression $\widetilde{\gamma}_K:=\frac{\frac{\psi_K(x,y)}{\psi_K(x,\infty)}\rho(u)^{-1}-1}{\log(u+1)/\log y}$ which converges to a constant when $u=1.5$.  
We define similarly $\widetilde{\gamma}_E:=\frac{\frac{\psi_E(x,y)}{\psi_E(x,\infty)}\rho(u)^{-1}-1}{\log(u+1)/\log y}$ for which we don't have a theoretical result. The data suggests that $\log (\psi_E(x,y)/\psi_E(x,\infty))$ has the same main error term as $\log( \psi(x,y)/\psi(x,\infty))$ and that $\widetilde{\gamma}_E$ converges as $x\rightarrow \infty$. Note that if  $\widetilde{\gamma}_E$ does converge, it is not expected to be equal to $\widetilde{\gamma}_K$ since the former quantity involves the contribution of both split and inert primes. 

We emphasize that the collection of data presented in Figure~$1$ requires the computation of $|E(\F_p)|$ for a large number of primes $p$ and for $E=E_7$ or $E_{11}$. We haven't used the expensive Schoof algorithm because, in the particular case of CM curves, specific methods exist. We haven't used the formul\ae\ in~\cite[Th 5.3, Th 5.5 and Th 5.6]{RubinSilverberg2009} either. Indeed the characters involved, although explicit, have a costly evaluation, moreover the formul\ae\ are prone to typos. Instead we follow a classical procedure and use Lemma~\ref{lemma:CM}: let $p$ be a prime for which one wishes to compute $|E(\F_p)|$. Fix a Weierstrass model for $E$ and pick a dozen random projective points $P_1\in E(\F_p)$; for each root of unity $\mu$ of $K$ (there are at most $6$ of them), compute the possible value of $N:=\norm{\pi-\mu}$. Next compute $|N]P_1$, $[N]P_2$, $\ldots$ and rule out $N$ if one of the computed points is not $(0:1:0)$. When all but one of the possible values of $|E(\F_p)|$ have been ruled out, one successfully outputs the result.   

The data plotted in Figures~$1$ and $2$ is available online at:
\begin{center}
\begin{small}
\url{https://razvanbarbulescu.pages.math.cnrs.fr/ElliottHalberstam/ElliottHalberstam.html}
\end{small}
\end{center}

\includepdf[pages={1-},scale=1.1,pagecommand={}]{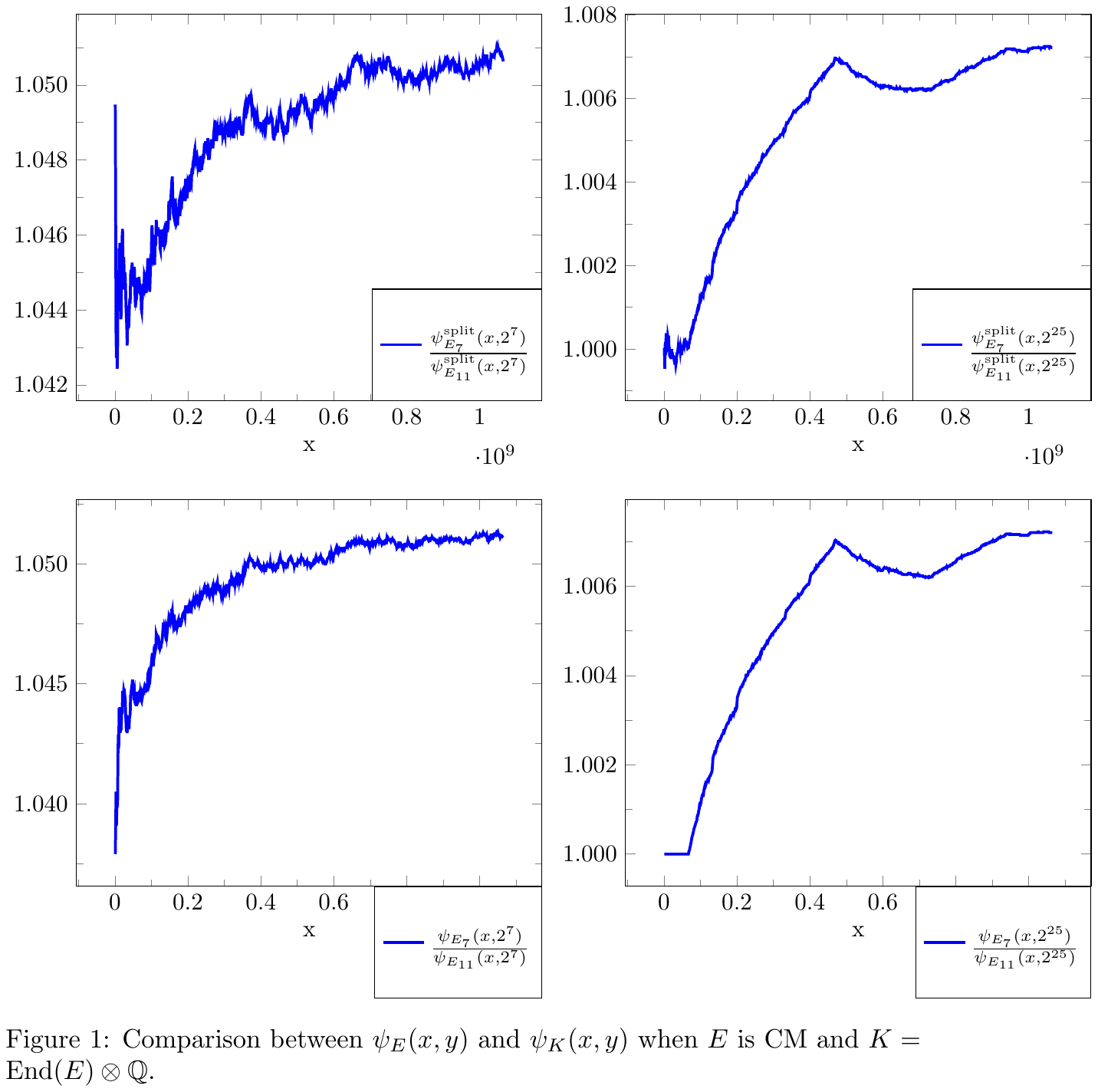}

\bibliographystyle{alpha}
\bibliography{references}

\newcommand{\etalchar}[1]{$^{#1}$}
\begin{thebibliography}{LLJMP93}

\bibitem[BGK{\etalchar{+}}]{CADO-NFS}
Shi Bai, Pierrick Gaudry, Alexander Kruppa, Fran{\c{c}}ois Morain, Emmanuel
  Thom{\'e}, and Paul Zimmermann.
\newblock Crible alg{\'e}brique: Distribution, optimisation—number field
  sieve (cado-nfs).

\bibitem[BL17]{BLachand2017}
Razvan Barbulescu and Armand Lachand.
\newblock Some mathematical remarks on the polynomial selection in {NFS}.
\newblock {\em Mathematics of Computation}, 86(303):397--418, 2017.

\bibitem[BS21]{BShinde2021}
Razvan Barbulescu and Sudarshan Shinde.
\newblock A classification of {ECM}-friendly families using modular curves.
\newblock {\em Mathematics of Computation}, to appear, 2021.

\bibitem[CCS13]{CCS2013}
Pete~L. Clark, Brian Cook, and James Stankewicz.
\newblock Torsion points on elliptic curves with complex multiplication (with
  an appendix by {A}lex {R}ice).
\newblock {\em Int. J. Number Theory}, 9(2):447--479, 2013.

\bibitem[Coh07]{Cohen2007}
Henri Cohen.
\newblock {\em Number theory. {V}ol. {I}. {T}ools and {D}iophantine equations},
  volume 239 of {\em Graduate Texts in Mathematics}.
\newblock Springer, New York, 2007.

\bibitem[CS06]{Semaev2006}
An~Commeine and Igor Semaev.
\newblock An algorithm to solve the discrete logarithm problem with the number
  field sieve.
\newblock In {\em Public Key Cryptography -- {PKC 2006}}, volume 3958 of {\em
  Lecture Notes on Computer Science}, pages 174--190. Springer, 2006.

\bibitem[dB51]{debruijn1951}
N.~G. de~Bruijn.
\newblock The asymptotic behaviour of a function occurring in the theory of
  primes.
\newblock {\em J. Indian Math. Soc. (N.S.)}, 15:25--32, 1951.

\bibitem[dlBF20]{delaBretecheFiorilli2020}
R{\'e}gis de~la Bret{\`e}che and Daniel Fiorilli.
\newblock Entiers friables dans des progressions arithm{\'e}tiques de grand
  module.
\newblock In {\em Mathematical Proceedings of the Cambridge Philosophical
  Society}, volume 169, pages 75--102. Cambridge University Press, 2020.

\bibitem[DW12]{DavidWu2012b}
C.~David and J.~Wu.
\newblock Pseudoprime reductions of elliptic curves.
\newblock {\em Canadian Journal of Mathematics}, 64(1):81–101, 2012.

\bibitem[FG92]{FriedlanderGranville1992}
John Friedlander and Andrew Granville.
\newblock Limitations to the equi-distribution of primes iii.
\newblock {\em Compositio Mathematica}, 81(1):19--32, 1992.

\bibitem[HT93]{Hildebrand1993}
Adolf Hildebrand and G{\'e}rald Tenenbaum.
\newblock Integers without large prime factors.
\newblock {\em Journal de th{\'e}orie des nombres de Bordeaux}, 5(2):411--484,
  1993.

\bibitem[HTW08]{Hanrot2008}
Guillaume Hanrot, G{\'e}rald Tenenbaum, and Jie Wu.
\newblock Moyennes de certaines fonctions multiplicatives sur les entiers
  friables, 2.
\newblock {\em Proceedings of the London Mathematical Society}, 96(1):107--135,
  2008.

\bibitem[Hux71]{Huxley1971}
MN~Huxley.
\newblock The large sieve inequality for algebraic number fields {III}.
  zero-density results.
\newblock {\em Journal of the London Mathematical Society}, 2(2):233--233,
  1971.

\bibitem[Iwa80]{RosserIwaniec1980}
Henryk Iwaniec.
\newblock A new form of the error term in the linear sieve.
\newblock {\em Acta Arithmetica}, 37(1):307--320, 1980.

\bibitem[Joh79]{Johnson1979}
David Johnson.
\newblock Mean values of {H}ecke {$L$}-functions.
\newblock {\em J. Reine Angew. Math.}, 305:195--205, 1979.

\bibitem[Kow06]{Kowalski2005}
E.~Kowalski.
\newblock Analytic problems for elliptic curves.
\newblock {\em J. Ramanujan Math. Soc.}, 21(1):19--114, 2006.

\bibitem[KVV10]{MontgomeryMultiplication}
Miroslav Knezevic, Frederik Vercauteren, and Ingrid Verbauwhede.
\newblock Faster interleaved modular multiplication based on {Barrett} and
  {Montgomery} reduction methods.
\newblock {\em IEEE Transactions on Computers}, 59(12):1715--1721, 2010.

\bibitem[Lan22]{languasco2021}
Alessandro Languasco.
\newblock On computing {L'/L}$(1,\chi)$.
\newblock In {\em Proceedings of the 5th Number Theory Meeting}, 2022.

\bibitem[Len87]{Len87}
Hendrik W~Jr Lenstra.
\newblock Factoring integers with elliptic curves.
\newblock {\em Annals of mathematics}, pages 649--673, 1987.

\bibitem[LLJMP93]{NFS1993}
Arjen~K Lenstra, Hendrik~W Lenstra~Jr, Mark~S Manasse, and John~M Pollard.
\newblock The number field sieve.
\newblock In {\em The development of the number field sieve}, volume 1554 of
  {\em Lecture notes in mathematics}, pages 11--42. Springer, 1993.

\bibitem[LT15]{LachandTenenbaum2015}
Armand Lachand and Gerald Tenenbaum.
\newblock Note sur les valeurs moyennes cribl{\'e}es de certaines fonctions
  arithm{\'e}tiques.
\newblock {\em The Quarterly Journal of Mathematics}, 66(1):245--250, 2015.

\bibitem[LWX20]{LiuWuXi2019}
Jianya Liu, Jie Wu, and Ping Xi.
\newblock Primes in arithmetic progressions with friable indices.
\newblock {\em Science China Mathematics}, 63(1):23--38, 2020.

\bibitem[May15]{Maynard2015}
James Maynard.
\newblock Small gaps between primes.
\newblock {\em Ann. of Math. (2)}, 181(1):383--413, 2015.

\bibitem[MBKL14]{Kleinjung2014cofactorization}
Andrea Miele, Joppe~W. Bos, Thorsten Kleinjung, and Arjen~K. Lenstra.
\newblock Cofactorization on graphics processing units.
\newblock In {\em Cryptographic Hardware and Embedded Systems -- {CHES 2014}},
  volume 8731 of {\em Lecture Notes in Computer Science}, pages 335--352, 2014.

\bibitem[Mon92]{MontgomeryPhD}
Peter~Lawrence Montgomery.
\newblock {\em An FFT extension of the elliptic curve method of factorization}.
\newblock PhD thesis, University of California, Los Angeles, 1992.

\bibitem[Mur98]{Murphy1998}
Brian Murphy.
\newblock Modelling the yield of number field sieve polynomials.
\newblock In {\em Algorithmic Number Theory--{ANTS III}}, volume 1423 of {\em
  Lecture Notes in Computer Science}, pages 137--150, 1998.

\bibitem[Pol16]{Pollack2016}
Paul Pollack.
\newblock A {T}itchmarsh divisor problem for elliptic curves.
\newblock In {\em Mathematical Proceedings of the Cambridge Philosophical
  Society}, volume 160, pages 167--189. Cambridge University Press, 2016.

\bibitem[RS09]{RubinSilverberg2009}
Karl Rubin and Alice Silverberg.
\newblock Point counting on reductions of {CM} elliptic curves.
\newblock {\em Journal of Number Theory}, 129(12):2903--2923, 2009.

\bibitem[Sai89]{Saias1989}
\'{E}ric Saias.
\newblock Sur le nombre des entiers sans grand facteur premier.
\newblock {\em J. Number Theory}, 32(1):78--99, 1989.

\bibitem[Sco04]{Scourfield2004}
Eira~J Scourfield.
\newblock On ideals free of large prime factors.
\newblock {\em Journal de th{\'e}orie des nombres de Bordeaux}, 16(3):733--772,
  2004.

\bibitem[Ser72]{Serre1972}
Jean-Pierre Serre.
\newblock Propri\'{e}t\'{e}s galoisiennes des points d'ordre fini des courbes
  elliptiques.
\newblock {\em Invent. Math.}, 15(4):259--331, 1972.

\bibitem[Ten15]{Tenenbaum-cours}
G{\'e}rald Tenenbaum.
\newblock {\em Introduction to analytic and probabilistic number theory},
  volume 163.
\newblock American Mathematical Soc., 2015.

\bibitem[Wan18]{Wang2018}
Zhiwei Wang.
\newblock Autour des plus grands facteurs premiers d’entiers cons{\'e}cutifs
  voisins d’un entier cribl{\'e}.
\newblock {\em The Quarterly Journal of Mathematics}, 69(3):995--1013, 2018.

\bibitem[Wil23]{Wilson1923}
B.~M. Wilson.
\newblock Proofs of {S}ome {F}ormulae {E}nunciated by {R}amanujan.
\newblock {\em Proc. London Math. Soc. (2)}, 21:235--255, 1923.

\bibitem[Zha14]{Zhang2014}
Yitang Zhang.
\newblock Bounded gaps between primes.
\newblock {\em Ann. of Math. (2)}, 179(3):1121--1174, 2014.

\end{thebibliography}

\end{document}